\newcommand{\field}[1]{\mathbb{#1}}
\newcommand{\N}{\field{N}}
\newcommand{\R}{\field{R}}
\newcommand{\C}{\field{C}}
\newcommand{\CS}{\mathcal S}
\newcommand{\HH}{\mathscr H}
\newcommand{\KK}{\mathscr K}
\newcommand{\LL}{\mathscr L}
\newcommand{\eps}{\varepsilon}
\newcommand{\ph}{\varphi}
\newcommand{\ran}{\mathrm{Ran}}
\renewcommand{\d}[1]{\textup{d}#1}
\newcommand{\sgn}{\operatorname{sgn}}
\newcommand{\Rea}{\operatorname{Re}}
\DeclareMathOperator*{\limeps}{\lim\limits_{\eps\to 0}}
\DeclareMathOperator*{\limepsr}{\lim_{\eps\to 0}}
\newtheorem{theorem}{Theorem}[section]
\newtheorem{lemma}[theorem]{Lemma}
\newtheorem{proposition}[theorem]{Proposition}
\newtheorem{kor}[theorem]{Corollary}
\date{\today}
\title{From Short-Range to Contact Interactions in the 1d Bose Gas}
\author{Marcel~Griesemer, Michael~Hofacker and Ulrich~Linden\\  
\small Fachbereich Mathematik, Universit\"at Stuttgart, D-70569 Stuttgart, Germany}  
\date{}
\begin{document}
\maketitle

\setcounter{figure}{0}

\numberwithin{equation}{section}
\newpagestyle{mypage}{%
      \setfoot{}{\thepage}{}
}

\setcounter{page}{1}
\newcommand\Scalefrac[2]{\scalebox{0.85}{$\dfrac{#1}{#2}$}}
\newcommand\sym[0]{\scalebox{0.7}{$\mspace{-6mu}\raisebox{-2mm}{\textup{sym}}$}}
\newcommand\huger[0]{\scalebox{5}{$)$}}
\renewcommand{\abstractname}{Abstract}

\begin{abstract}
For a system of $N$ bosons in one space dimension with two-body $\delta$-interactions the Hamiltonian can be defined in terms of the usual closed semi-bounded quadratic form. We approximate this Hamiltonian in norm resolvent sense by Schr\"odinger operators with rescaled two-body potentials, and we estimate the rate of this convergence. 
\end{abstract} 

\section{Introduction} \label{sec1}
\normalfont
\normalsize

Short range interactions with large scattering length in quantum mechanical systems of bosons or distinguishable particles  are conveniently described by $\delta$-potentials, unless the space dimension is three and the number of particles exceeds two \cite{SolvableModels,AlbKur,BraHam}.  This has a long tradition in physics and rigorous formulation in mathematics \cite{SolvableModels,Figari,DR}. Yet, a mathematical justification of such idealized models based on many-particle Schr\"odinger operators with suitably rescaled two-body potentials is still at the beginning \cite{Basti2018, SeiYin}. In the present paper we address this problem for the system of $N$ bosons in one space dimension. If a trapping potential were included, this would be the Lieb-Liniger model \cite{LiLi}. We show that the Hamiltonian is the limit, in norm resolvent sense, of rescaled Schr\"odinger operators and we estimate the rate of convergence. 

The Hilbert space of the system to be considered is the $N$-fold symmetric tensor product 
\begin{align} \label{DefHilbert}
          \HH:= \otimes^N_{\rm sym} L^2(\R)
\end{align}
and the Hamiltonian is formally given by  
\begin{equation}\label{DefH}
     H= H_0 - \alpha \sum_{i < j}^N \delta(x_j-x_i),
\end{equation}
where $H_0 = -\Delta$ describes the kinetic energy of the bosons, $\alpha \in \R$ determines the interaction strength, and $x_i\in \R$ denotes the position of the $i$th boson. It is well known that $H$ may be self-adjointly realized in terms of a closed semi-bounded quadratic form and that vectors from the domain can be characterized by a jump condition in the first partial derivatives at the collision planes \cite{Basti2018,LiLi}. We are interested in the approximation of $H$ in terms of Schr\"odinger operators of the form
\begin{align}
     H_\eps:= H_0 - g_{\eps} \sum_{i < j}^N V_{\eps}(x_j-x_i), \qquad \quad \eps>0,
\label{DefHeps}
\end{align}
where $V\in L^1\cap L^2(\R)$,  $V(r)=V(-r)$, and
\begin{align}
V_{\eps}(r):=\eps^{-1}\,V(r/\eps), \qquad \quad \eps>0. \label{Vscale}
\end{align}
Of course, the coupling constant $g_{\eps} \in \R$ will be chosen in such a way that 
$$
     g_\eps\int V(r)\,dr \to \alpha \qquad (\eps\to 0).
$$

In the case $N=2$ it is well known and in the case $N=3$ it was recently shown that $H_\eps\to H$ in the norm resolvent sense \cite{SolvableModels,Basti2018}. For general $N\geq 2$ we will see that convergence in the \emph{strong} resolvent sense is easily established with the help of $\Gamma$-convergence. In fact, it is not hard to see that $H_\eps+C\geq 0$ uniformly in $\eps>0$ for some $C>0$. Let $q_\eps$ and $q$ denote the quadratic forms associated with $H_\eps+C$ and $H+C$, respectively. Then $q_\eps\to q$ in the sense of weak and strong $\Gamma$-convergence, and, by an abstract theorem, this is equivalent to convergence $H_\eps\to H$ in the \emph{strong} resolvent sense \cite{DalMaso}. The main result of the present paper is that $H_\eps\to H$ in the \emph{norm} resolvent sense with estimates on the rate of convergence in terms of the decay of $V$. Norm resolvent convergence, unlike the weaker strong resolvent convergence, implies convergence of the spectra \cite{RS1} and convergence of the unitary groups in a (weighted) operator norm (see remark below).

We set 
\begin{align}  \label{HTilde}
      \widetilde{\HH} := L^2_{\text{ev}}(\R,\d{r}) \otimes L^2(\R, \d{R}) \otimes \bigotimes^{N-2}_{\text{sym}} L^2(\R),
\end{align}
where $L^2_{\text{ev}}(\R)\subset L^2(\R)$ denotes the subspace of even functions.
Here, $r$ and $R$ correspond to the relative and center of mass coordinates 
\begin{equation} \label{DefrR}
      r:=x_2-x_1, \qquad R:= \frac{x_1+x_2}{2}
\end{equation}
of the boson positions $x_1$ and $x_2$. We define (possibly unbounded) closed operators  $A_{\eps}: D(A_{\eps}) \subseteq \HH \rightarrow \widetilde{\HH}$ for $\eps >0$ by
\begin{align} \label{DefAeps}
 (A_\eps \Psi)(r,R,x_3,...,x_{N}) := \sqrt{\Scalefrac{(N-1)N}{2}} |V(r)|^{1/2} \Psi(R-\tfrac{\eps r}{2}, R+\tfrac{\eps r}{2}, x_3, ..., x_{N}),
\end{align}
which is nothing but the operator of multiplication by $\sqrt{(N-1)N/2} \left|V_{\eps}(x_2-x_1)\right|^{1/2}$ written in the new coordinates \eqref{DefrR} and followed by the (unitary) rescaling $r\mapsto \eps r$. Let $J$ denote multiplication by $\sgn(V)$ in $L^2_{\text{ev}}(\R,\d{r})$ and let $B_\eps = JA_\eps$. Then  
\begin{align}\label{Hepsnew}
        H_{\eps}=H_0 - g_{\eps} \, A_{\eps}^{*}B_{\eps}, \qquad \quad \eps>0. 
\end{align}
By a general result on self-adjoint operators of this form (Appendix \ref{AppendixB}), if $z\in\rho(H_{\eps})\cap\rho(H_0)$\footnote{Note our definition of the resolvent set at the end of this section.}, then 
\begin{equation}\label{Krein}
      (H_{\eps}+z)^{-1} = (H_0 + z)^{-1} +  (H_0 + z)^{-1}A_{\eps}^{*} \left( g_{\eps}^{-1}- \phi_{\eps}(z)  \right)^{-1}\, B_{\eps} (H_0 + z)^{-1}, 
\end{equation}
where $\phi_{\eps}(z) \in \LL(\widetilde{\HH})$ on $D(A_{\eps}^{*})$ is given by
\begin{align} \label{Defphieps}
       \phi_\eps(z) = B_{\eps}(H_0 + z)^{-1} A_{\eps}^{*}.
\end{align} 
The formula \eqref{Krein} is our starting point for proving resolvent convergence. It allows us to generalize the methods familiar from the case $N=2$ \cite{SolvableModels}.

It is not hard to see that the limit
\begin{align}\label{S(z)} 
     S(z)&= \limeps \; A_{\eps} (H_0+z)^{-1} 
\end{align}
exists for some, and hence for all $z \in \rho(H_0)$. This is independent of the space dimension $d\leq 3$.
The subtle point in two and three space dimensions, even for $N=2$, is the convergence of  $(g_{\eps}^{-1}-\phi_{\eps}(z))^{-1}$, which involves the cancellation of divergencies \cite{SolvableModels,BHS2013}. For $d=3$ and $N\geq 3$ there is, in addition, a partly open problem known as Thomas effect \cite{Thomas1935, MinlosFaddeev1,MinlosFaddeev2,BastiTeta}. In the present paper we avoid these complications by considering $d=1$ only. In this case the limit
\begin{align}\label{Defphi}   
             \phi(z)= \limeps  \phi_{\eps}(z) 
\end{align} 
exists. In combination with \eqref{S(z)}, this allows us to take the limit $\eps\to 0$ in \eqref{Krein} and leads us to the following theorem:

\begin{theorem}\label{theo1}
Let $V \in L^1\cap L^2(\R)$ with $V(-r)=V(r)$, let $H_\eps$ be defined by \eqref{DefHeps} and 
suppose that $g=\lim_{\eps\to 0}g_{\eps}$ exists. 
Then $H_{\eps} \to H$ in the norm resolvent sense as $\eps\to 0$, and 
\begin{equation} \label{Hres}
 (H+z)^{-1} = (H_0 + z)^{-1} + g \, S(\overline{z})^{*} \left(1  - g\phi(z) \right)^{-1} JS(z) 
\end{equation}
for $z \in \rho(H_0)\cap \rho(H)$. If, in addition, $\int |r|^{2s}|V(r)|\, \d{r}<\infty$ and $|g_\eps - g|=O(\eps^s)$ for some $s\in (0,1)$, then $\| (H+z)^{-1}- (H_{\eps}+z)^{-1}\| = O(\eps^s)$ as $\eps\to 0$.
\end{theorem}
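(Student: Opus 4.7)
My plan is to pass to the limit $\eps\to 0$ in the Krein-type formula \eqref{Krein}, factor by factor. Rewriting $(g_\eps^{-1} - \phi_\eps(z))^{-1} = g_\eps (1 - g_\eps \phi_\eps(z))^{-1}$ and using $(H_0 + z)^{-1} A_\eps^* = (A_\eps (H_0 + \bar z)^{-1})^*$, formula \eqref{Krein} takes the form
\begin{equation*}
(H_\eps + z)^{-1} = (H_0+z)^{-1} + g_\eps \, \bigl(A_\eps(H_0+\bar z)^{-1}\bigr)^{\!*} \, \bigl(1 - g_\eps \phi_\eps(z)\bigr)^{-1} \, J \, A_\eps(H_0+z)^{-1},
\end{equation*}
which already matches the target formula \eqref{Hres} under the substitutions $A_\eps(H_0+z)^{-1} \mapsto S(z)$, $g_\eps \mapsto g$, $\phi_\eps(z) \mapsto \phi(z)$. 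It therefore suffices to establish norm convergence (with rate, under the additional hypothesis) of $A_\eps(H_0+z)^{-1}$ and of $\phi_\eps(z)$, and to control $(1 - g_\eps\phi_\eps(z))^{-1}$ uniformly in $\eps$.

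For the first task, I would introduce the pair centre-of-mass variables $R' = (x_1+x_2)/2$, $r' = x_2 - x_1$, so that $A_\eps$ acts as multiplication by $\sqrt{N(N-1)/2}\,|V(r)|^{1/2}$ combined with the evaluation $r' = \eps r$, while the candidate limit $S(z)$ corresponds to $r' = 0$. Plancherel in $r'$ together with $|e^{i h \xi}-1| \leq 2^{1-s}|h\xi|^s$ gives the shift bound
\begin{equation*}
\big\|\tilde\Psi_0(h,\cdot) - \tilde\Psi_0(0,\cdot)\big\|_{L^2(\d R\,\d x_3\cdots)}^{2} \leq C \,|h|^{2s}\, \|\tilde\Psi_0\|_{H^{s}_{r'}}^{2},
\end{equation*}
applied to $\tilde\Psi_0 = (H_0+z)^{-1}\Psi$ in the new coordinates. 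Integrating this against $|V(r)|$ in $r$ with $h = \eps r$, and using the moment bound $\int |r|^{2s}|V(r)|\,\d r < \infty$ together with $\|\tilde\Psi_0\|_{H^s_{r'}} \leq \|(H_0+z)^{-1}\Psi\|_{H^2} \leq C\|\Psi\|$, yields $\|A_\eps(H_0+z)^{-1} - S(z)\| = O(\eps^s)$. An identical computation, with a shift placed on each side of the central resolvent, gives $\|\phi_\eps(z) - \phi(z)\| = O(\eps^s)$. Without the moment hypothesis, a density argument in $V$ recovers qualitative norm convergence of both quantities.

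To assemble the pieces I would choose $z = \lambda > 0$ so large that $|g|\,\|\phi(\lambda)\| < 1$; this is possible because a direct kernel estimate forces $\|\phi(\lambda)\|\to 0$ as $\lambda\to\infty$. Then $(1-g_\eps\phi_\eps(\lambda))^{-1}$ exists and is bounded uniformly in small $\eps$, and combining this with $|g_\eps - g| = O(\eps^s)$ and the two outer estimates gives $\|(H_\eps+\lambda)^{-1} - R(\lambda)\| = O(\eps^s)$, where $R(\lambda)$ denotes the right-hand side of \eqref{Hres} at $z=\lambda$. The strong $\Gamma$-convergence recorded before the theorem identifies the strong resolvent limit of $(H_\eps+\lambda)^{-1}$ as $(H+\lambda)^{-1}$, which forces $R(\lambda) = (H+\lambda)^{-1}$. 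The first resolvent identity then propagates both norm convergence and the $\eps^s$ rate from $\lambda$ to every $z\in\rho(H_0)\cap\rho(H)$. The main technical obstacle is the sharp Hölder control with rate $\eps^s$ for $\phi_\eps(z)-\phi(z)$: the symmetric double shift in the kernel of the central resolvent must be organised so that only the moment $\int|r|^{2s}|V(r)|\,\d r$, and not a stronger weight on $V$, appears in the final estimate.
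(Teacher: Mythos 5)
Your overall strategy is the same as the paper's: pass to the limit factor by factor in the Krein formula, rewriting $(g_\eps^{-1}-\phi_\eps(z))^{-1} = g_\eps(1-g_\eps\phi_\eps(z))^{-1}$, controlling the resolvent of $1-g_\eps\phi_\eps(z)$ by taking $z$ large, identifying the limit via $\Gamma$-convergence, and propagating to general $z$ by a resolvent-identity argument. Your treatment of $A_\eps(H_0+z)^{-1}\to S(z)$ is essentially the paper's, just phrased on the Fourier side instead of via the Green's function kernel.

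The serious gap is in the sentence ``An identical computation, with a shift placed on each side of the central resolvent, gives $\|\phi_\eps(z)-\phi(z)\|=O(\eps^s)$.'' This dramatically underestimates the difficulty and misdescribes the structure of $\phi_\eps(z)=B_\eps(H_0+z)^{-1}A_\eps^*$. The operator $B_\eps$ is tied to the pair $(1,2)$, while $A_\eps^*=\sqrt{N(N-1)/2}\,\eps^{-1/2}\KK^*U_\eps^*(v\otimes 1)$ sums over \emph{all} pairs through $\KK^*=\frac{2}{N(N-1)}\sum_{i<j}\KK_{ij}^*$. Only the $(1,2)$ contribution $\phi_\eps^{12}(z)$ has the ``symmetric double shift'' form you describe, and it is genuinely Hilbert–Schmidt with kernel built from the one-dimensional Green's function. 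The cross terms $\phi_\eps^{1j}(z),\phi_\eps^{2j}(z)$ with $j\geq 3$ and the terms $\phi_\eps^{ij}(z)$ with $3\leq i<j$ have an entirely different structure: after integrating out the spectator variables their kernels involve $G^3_{z+Q}$ and $G^4_{z+Q}$ respectively, they are \emph{not} Hilbert–Schmidt, and their boundedness is itself nonobvious — the paper needs dedicated Schur-test lemmas (the operators $F_z$ on $L^2(\R^2)$ and $B_z$ on $L^2(\R^3)$) and the $L^1$-difference estimates for $G^3$ and $G^4$ of Lemma A.3 to get both boundedness and the $O(\eps^s)$ rate. Your Plancherel-in-$r'$ argument does not touch these pieces; without some replacement for the $\binom{N}{2}$-fold decomposition and the associated Green's function estimates in $\R^3$ and $\R^4$, the claimed bound on $\phi_\eps(z)-\phi(z)$ is unproved, and the proof does not close.

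A second, smaller gap: to identify $R(\lambda)$ with $(H+\lambda)^{-1}$ you invoke the $\Gamma$-convergence ``recorded before the theorem,'' but that result requires $\int|V(r)||r|^{1/2}\,dr<\infty$, which is \emph{not} part of the hypotheses of the first statement of Theorem~\ref{theo1}. The paper sidesteps this by first showing (Corollary~\ref{cor:VfreeR}) that $R(z)$ depends only on $\alpha=g\int V$ and then running the $\Gamma$-convergence with a compactly supported comparison potential; your proposal should include an analogous step. Finally, for the propagation from $z=\lambda$ to all of $\rho(H_0)\cap\rho(H)$, the first resolvent identity alone gives the rate at other points \emph{once you know they lie in $\rho(H_\eps)$ for small $\eps$}; you still need an argument (the paper cites a result on Krein $Q$-functions and uses Lemma~\ref{theo3}) that the formula \eqref{Hres} and the invertibility of $1-g\phi(z)$ hold throughout $\rho(H_0)\cap\rho(H)$.
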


\noindent
\emph{Remarks}
\begin{enumerate}
\item The norm convergence established by this theorem implies that 
$$
       \|(e^{-iH_\eps t}- e^{-iHt})(H+i)^{-1}\| \to 0\qquad (\eps\to 0),
$$ 
uniformly on compact (or growing, if $s>0$) time intervals. In contrast, strong resolvent convergence implies a similar result in the strong operator topology \cite{Pazy}.

\item The operators $\phi(z)$ and $S(z)$ depend on $V$ and so the left hand side of \eqref{Hres} seems to depend on $V$ as well. This apparent dependence may be removed by integrating out the potential in the second term of \eqref{Hres}, see Section~\ref{sec5}. In particular, this term vanishes if $\int V(r)\,\d{r}=0$.
\end{enumerate} 

Our proofs of \eqref{S(z)} and \eqref{Defphi}, and hence of Theorem~\ref{theo1}, rely on explicit expressions for the integral kernels of $A_{\eps}(H_0+z)^{-1}$ and $\phi_{\eps}(z)$ in terms of the Green's function $G^n_z$ of $-\Delta + z$ in $\R^n$.
This procedure is fairly involved in the case of $\phi_{\eps}(z)=\sum_{i<j}\phi_{\eps}^{ij}(z)$ because the kernel of $\phi_{\eps}^{ij}(z)$ depends on the pair $(i,j)$ of particles. The bosonic symmetry is lost, in part, because of the symmetry breaking choice  \eqref{DefrR} of coordinates. Once we have shown convergence of the resolvent $(H_\eps+z)^{-1}$, to conclude the proof of the first statement of the theorem it suffices to show that $H_\eps\to H$ in the strong resolvent sense. By a general theorem \cite{DalMaso}, this is equivalent to strong and weak $\Gamma$-convergence of the associated quadratic forms $q_\eps$ and $q$, which we prove in Appendix~\ref{sec:Gamma}.

The main elements of our approach, such as the representation \eqref{Hepsnew} and the Krein formula \eqref{Krein} are independent of the space dimension and the statistics of the particles.  A result similar to Theorem~\ref{theo1} for (distinguishable) particles with short-range interactions in two dimensions is in preparation. This is related to, yet distinct from work described in \cite{Figari,DR,GriesemerLinden2}, where two-dimensional systems with contact interactions are approximated by systems with ultraviolet cutoff.

A result similar to Theorem~\ref{theo1} for three distinct particles in one dimension was previously established in  \cite{Basti2018}. The proof in  \cite{Basti2018}, however, relies on Fadeev equations, which do not generalize to $N>3$. In another closely related work, the Lieb-Liniger model with repulsive $\delta$-interactions is derived from a trapped $3d$ Bose gas with non-negative two-body potentials \cite{SeiYin}.

The proof of Theorem \ref{theo1} is given in Section~\ref{sec5}. The Sections \ref{sec2}, \ref{sec3} and \ref{sec4} provide all preparations apart from generalities, which we collect in the appendix. In Appendix \ref{AppendixA} we collect the basic properties of the Green's function $G^n_z$ along with some nonstandard inequalities. In Appendix~\ref{AppendixB} the Krein formula \eqref{Krein}  is established in an abstract framework, and in Appendix \ref{sec:Gamma} we prove the $\Gamma$-convergence $q_\eps\to q$.

\medskip\noindent
\emph{Notations.} In this paper the resolvent set $\rho(H)$ of a closed operator $H$ is defined as the set of $z\in \C$ for which $H+z :D(H)\subset\HH\to\HH$ is a bijection. This differs by a minus sign from the conventional definition. The $L^2$-norm will be denoted by $\|\cdot \|$, without index, while all other norms carry the space as an index, as  e.g. in $\|V\|_{L^1}$.


\section{Auxiliary operators}  \label{sec2}

This section defines auxiliary operators that will be helpful in the proofs of \eqref{S(z)} and \eqref{Defphi}. 

The change of coordinates \eqref{DefrR} is implemented by the
coordinate transformation $\KK: \HH \rightarrow \widetilde{\HH}$ defined by
\begin{align}
(\KK \Psi)(r,R,x_3,...,x_{N}):= \Psi\left(R-\tfrac{r}{2},R+\tfrac{r}{2}, x_3,...,x_{N} \right). \label{DefK} 
\end{align}
It follows that  $\KK^{*}: \widetilde{\HH} \rightarrow \HH$ is given by
\begin{align}
(\KK^{*} \widetilde{\Psi})(x_1,...,x_{N})= \frac{2}{N(N-1)} 
 \sum_{i < j} \widetilde{\Psi}\left(x_j-x_i,\Scalefrac{x_i+x_j}{2},x_{1},...\hat{x}_i...\hat{x}_j...,x_N \right), \label{DefK*}
\end{align}
where the hat in $\hat{x}_i$ indicates omission of this variable. Since terms arising from distinct pairs 
$(i,j)$ in \eqref{DefK*} will be treated separately later on, we further introduce for $1\leq i<j\leq N$ the operator $\KK_{ij}^{*}: \widetilde{\HH} \rightarrow L^2(\R^{N})$ by 
\begin{align}
(\KK_{ij}^{*} \widetilde{\Psi})(x_1,...,x_{N}):= \widetilde{\Psi}\left(x_j-x_i,\Scalefrac{x_i+x_j}{2},x_{1},...\hat{x}_i...\hat{x}_j...,x_N \right) \mspace{-5mu}. \label{DefKij}
\end{align}
The asterisk in $\KK_{ij}^{*}$ is part of the notation, which reminds us of the decomposition
\begin{align} \label{KKdecomp}
       \KK^{*}  = \frac{2}{N(N-1)}  \sum_{i < j}\KK_{ij}^{*}. 
\end{align}
It does not have the meaning of adjoint.

Let now $V \in L^1\cap L^2(\R)$ be a given even potential, let $v=|V|^{1/2}$ and let $u=\sgn(V)v$ so that $V=vu$. Let 
$U_{\eps}$ denote the unitary scaling in $ \widetilde{\HH}$ defined by
\begin{equation} \label{DefUeps}
     \left(U_{\eps} \widetilde{\Psi}\right)(r,R,x_3,...,x_N):= \eps^{1/2}\,\widetilde{\Psi}(\eps r, R,x_3,...,x_N).
\end{equation}
The closed operator
$A_{\eps}: D(A_{\eps}) \subseteq \HH \rightarrow \widetilde{\HH}$ defined by
\begin{align}\label{Aepsnew}
        A_{\eps}:= \sqrt{\frac{(N-1)N}{2}} (v \otimes 1) \,\eps^{-1/2} U_{\eps} \KK 
\end{align}
agrees with \eqref{DefAeps}. With the help of  \eqref{DefK*}, it is straightforward to
verify that
\begin{align} \label{Potdecomp}
    \sum_{i < j}^N V_{\eps}(x_j-x_i) = \frac{(N-1)N}{2} \KK^{*}(V_{\eps} \otimes 1) \KK = A_{\eps}^{*}B_{\eps}
\end{align}
on $D(H_0)$, which proves \eqref{Hepsnew}. It follows, in particular, that $A_{\eps}^{*}B_{\eps}$ and $A_{\eps}^{*}A_{\eps}$ are infinitesimally $H_0$-bounded. Hence Theorem~\ref{thm:Krein} applies to \eqref{Hepsnew}, which justifies \eqref{Krein}.

\section{The limit of $A_{\eps}(H_0+z)^{-1}$}  
\label{sec3}

In this section the limit of $A_{\eps}(H_0+z)^{-1}: \HH \rightarrow \widetilde{\HH}$  as $\eps\to 0$ is computed assuming $V \in L^1(\R)$ only.  The rate of convergence is estimated in terms of the decay of $V$ at $r=\infty$. While $A_{\eps}$ will be an unbounded operator in general, the operator $A_{\eps}(H_0+z)^{-1}$ is bounded as will be seen. In this section the restriction to $d=1$ space dimension would not be necessary, all arguments go through for general $d\leq 3$.

The Laplacian $H_0$ expressed in the relative and center of mass coordinates \eqref{DefrR} reads
\begin{align}
       \widetilde{H}_0 = -2\Delta_r - \frac{\Delta_R}{2} + \sum\limits_{i=3}^{N} -\Delta_{x_i}. \label{DefH0Tilde}
\end{align} 
In terms of the coordinate transformation $\KK$ from \eqref{DefK} this means that
\begin{align}
       \KK (H_0+z)^{-1} = (\widetilde{H}_0+z)^{-1}\KK. \label{H0H0Tilde}
\end{align}
Hence, \eqref{Aepsnew} implies that
\begin{align}
       A_{\eps}(H_0+z)^{-1} = \sqrt{\frac{(N-1)N}{2}} \,T_{\eps}(z)\KK \label{AepsTeps}
\end{align}
with an operator $T_{\eps}(z)$ in $\widetilde{\HH}$ defined by
\begin{align}\label{DefTeps}
      T_{\eps}(z):=\left( v \otimes 1 \right) \eps^{-1/2}U_{\eps} (\widetilde{H}_0 + z)^{-1}. 
\end{align}
It remains to prove existence of the limit $\lim_{\eps\to 0}T_{\eps}(z)$.

Upon a Fourier transform in $(R,x_3,...,x_N)$, the operator \eqref{DefTeps} acts pointwise in the associated momentum variable $\underline{P}=(P,P_3,...,P_N)$ by an operator $T_{\eps}(z,\underline{P})$ that is given by
\begin{align}
      T_{\eps}(z,\underline{P})=&  \,\frac{1}{2} \left( v \otimes 1 \right) \eps^{-1/2}U_{\eps} \left(-\Delta_r + \frac{z +Q}{2}\right)^{-1}, \label{TepsP} \\
        Q:=&\frac{P^2}{2}+ \displaystyle\sum\limits_{i=3}^{N} P_i^2. \label{DefQ}
\end{align}
By \eqref{TepsP}, the integral kernel associated with $T_{\eps}(z,\underline{P})$ is
\begin{align}\label{Tepskern}
         \Scalefrac{1}{2} \,v(r) \; G_{\frac{1}{2}(z+Q)}\mspace{-2mu}\left(\eps r -r'\right),  
\end{align}
where $G_{\lambda}:=G_{\lambda}^1$ denotes the Green's function of $-\Delta + \lambda: H^2(\R) \rightarrow L^2(\R)$, which is explicitly given by
\begin{align}
        G_{\lambda}(x) = \frac{\exp(-\sqrt{\lambda}\left| x  \right|) }{2 \sqrt{\lambda}}, \qquad \lambda>0. \label{G^1_z}
\end{align}
Since $G_{\lambda} \in L^2(\R)$, the assumption that $V \in L^1( \R)$ (and thus $v \in L^2(\R)$) implies that $T_{\eps}(z,\underline{P})$ is a Hilbert-Schmidt operator. Let $T_0(z,\underline{P})$, and thus $T_0(z)$, be defined by \eqref{Tepskern} with $\eps=0$.
We expect that $T_{\eps}(z)$ converges to $T_0(z)$ as $\eps\to 0$. The following two lemmas are concerned with this convergence. The first step is to show that it suffices to consider
potentials $V$ with compact support. For this purpose, we introduce for any $k>0$ the cutoff potential
\begin{equation}
V_k(x):= \begin{cases}
V(x) &\textup{if}\;\, |x| \leq k \\
0    &\textup{otherwise}
\end{cases} \label{cutpot}
\end{equation}
and we set $v_k(r):=\left|V_k(r)\right|^{1/2}$. By $T_{\eps,k}(z)$ and $T_{0,k}(z)$  we denote the operators $T_{\eps}(z)$ and $T_{0}(z)$ with $v$ replaced by $v_k$, respectively.
The corresponding kernels $T_{\eps,k}(z,\underline{P})$ and $T_{0,k}(z,\underline{P})$ are given by \eqref{Tepskern} with $v_k$ instead of $v$. The next lemma shows that $T_{\eps,k}(z)$ is close to $T_{\eps}(z)$ uniformly in $\eps \geq 0$ for large $k>0$.

\begin{lemma}\label{lm1}
Let $k\geq 0$, $\eps\geq0$ and $z >0$. Then 
\begin{align}
\| T_{\eps}(z) - T_{\eps,k}(z) \| &\leq \Scalefrac{1}{2} \,
\| V-V_k \|_{L^1}^{1/2} \|G_{z/2} \|. \label{Abs1} 
\end{align}
\end{lemma}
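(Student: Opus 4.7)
The plan is to exploit the fact that $T_{\eps}(z)-T_{\eps,k}(z)$ is literally the operator $T_{\eps}(z)$ with $v$ replaced by $v-v_k$, together with the elementary identity $|v-v_k|^{2}=|V-V_k|$, which gives $\|v-v_k\|_{L^{2}}^{2}=\|V-V_k\|_{L^{1}}$. This is the estimate that will be responsible for the right-hand side of \eqref{Abs1}.

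To turn this into an operator-norm bound I would decompose along the Fourier transform in the variables $(R,x_{3},\ldots,x_{N})$. By \eqref{TepsP}, this writes $T_{\eps}(z)-T_{\eps,k}(z)$ as a direct integral over the conjugate momenta $\underline{P}=(P,P_{3},\ldots,P_{N})$ of fiber operators on $L^{2}_{\text{ev}}(\R,\d{r})$ with kernel
\begin{equation*}
 \tfrac{1}{2}\bigl(v(r)-v_k(r)\bigr)\,G_{(z+Q)/2}(\eps r-r'),
\end{equation*}
where $Q$ is given by \eqref{DefQ}. Each fiber is Hilbert--Schmidt, and the operator norm of the whole operator equals the essential supremum over $\underline{P}$ of the fiber operator norms, which in turn is dominated by the supremum of the fiber Hilbert--Schmidt norms.

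A direct computation of the Hilbert--Schmidt norm of the fiber using Fubini and the translation invariance of the $r'$-integral of $G_{(z+Q)/2}(\eps r-r')^{2}$ gives
\begin{equation*}
 \bigl\|T_{\eps}(z,\underline P)-T_{\eps,k}(z,\underline P)\bigr\|_{\textup{HS}}^{2}
 \;=\;\tfrac{1}{4}\,\|V-V_k\|_{L^{1}}\,\bigl\|G_{(z+Q)/2}\bigr\|^{2}.
\end{equation*}
The closed-form expression \eqref{G^1_z} for $G_{\lambda}$ yields $\|G_{\lambda}\|^{2}=1/(4\lambda^{3/2})$, which is monotonically decreasing in $\lambda>0$. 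Since $Q\ge 0$, this supplies the uniform bound $\|G_{(z+Q)/2}\|\le\|G_{z/2}\|$, and taking the supremum over $\underline{P}$ yields \eqref{Abs1}.

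There is no real obstacle here; the only points requiring a little care are the even-in-$r$ symmetry (which is automatic because $v$ is even, so the range of the fiber operator stays in $L^{2}_{\text{ev}}$), the justification of passing from fiber operator norm to fiber Hilbert--Schmidt norm, and the handling of the edge case $\eps=0$, where the kernel becomes $\tfrac{1}{2}(v-v_k)(r)\,G_{(z+Q)/2}(-r')$ and the same Fubini computation applies unchanged.
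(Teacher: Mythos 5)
Your proof is correct and follows essentially the same route as the paper: pass to the fiber decomposition over $\underline{P}$, compute the Hilbert--Schmidt norm of the fiber kernel using $(v-v_k)^2=|V-V_k|$, and bound $\|G_{(z+Q)/2}\|\le\|G_{z/2}\|$ by monotonicity in $\lambda$, uniformly in $\underline{P}$.
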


\begin{proof}
Let $k\geq 0$ be fixed and note that $v-v_k=\left|V-V_k\right|^{1/2}$. Hence, it follows from \eqref{Tepskern} that the kernel of $T_{\eps}(z,\underline{P}) - T_{\eps,k}(z, \underline{P})$ reads
\begin{align}
\Scalefrac{1}{2} \left|V(r)-V_k(r)\right|^{1/2} G_{\frac{1}{2}(z+Q)}\left(\eps r -r'\right) \label{Tdiffkern}
\end{align}
with $Q \geq 0$ defined by \eqref{DefQ}. The $L^2$-norm thereof is
\begin{align}
\Scalefrac{1}{2}\,\| V-V_k \|_{L^1}^{1/2} \big\|G_{\frac{1}{2}(z+Q)} \big\|, 
\end{align}
where $\big\| G_{\frac{1}{2}(z+Q)} \big\| \leq \| G_{z/2}\|$ by \eqref{G^1_z}. 
Hence, the Hilbert-Schmidt norm of $T_{\eps}(z,\underline{P}) - T_{\eps,k}(z, \underline{P})$
is bounded by the right side of \eqref{Abs1}, which is independent of $\underline{P}$. This proves the lemma.
\end{proof}

\noindent
\emph{Remark.} From \eqref{Abs1} with $k=0$ or directly from  \eqref{Tepskern} it follows that $ T_{\eps}(z)$ is bounded with
\begin{align}
       \| T_{\eps}(z) \| &\leq \Scalefrac{1}{2} \, \| V\|_{L^1}^{1/2} \|G_{z/2} \|. \label{Tepsnorm} 
\end{align}

\begin{proposition}\label{lm2}
          Let $z>0$. Then $T_{\eps}(z)$ converges in operator norm to  $T_0(z)$ as $\eps\to 0$. If the condition $\int \d{r}\, |r|^{2s} \left|V(r)\right| < \infty$ is satisfied for some $s\in(0,1]$, then $\|T_{\eps}(z) - T_{0}(z)  \| = O(\eps^s)$ as $\eps\to 0$.  
\end{proposition}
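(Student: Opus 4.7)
The plan is to work fiber-wise in the momentum variables $\underline{P}$ conjugate to $(R,x_3,\ldots,x_N)$, and to bound the operator norm of $T_\eps(z) - T_0(z)$ by $\esssup_{\underline{P}}\|T_\eps(z,\underline{P}) - T_0(z,\underline{P})\|_{\rm HS}$. From \eqref{Tepskern}, the fiber difference has kernel $\tfrac12\, v(r)\bigl[G_\lambda(\eps r - r') - G_\lambda(-r')\bigr]$ with $\lambda=(z+Q)/2\ge z/2$, so computing the Hilbert--Schmidt norm and translating in $r'$ yields
\begin{equation*}
\|T_\eps(z,\underline{P}) - T_0(z,\underline{P})\|_{\rm HS}^2 = \tfrac14 \int v(r)^2\,\|G_\lambda(\cdot + \eps r) - G_\lambda\|^2\,\d{r}.
\end{equation*}
Everything is thus reduced to a quantitative $L^2$-continuity estimate for translates of $G_\lambda$, uniform in $\lambda\ge z/2$.

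For the rate statement, I would use Plancherel together with $\widehat{G_\lambda}(k) = (k^2+\lambda)^{-1}$ and the elementary bound $|e^{iak}-1| \le 2|ak|^s$ valid for all $a,k\in\R$ and $s\in [0,1]$ to obtain
\begin{equation*}
\|G_\lambda(\cdot + a) - G_\lambda\|^2 \le C|a|^{2s}\int \frac{|k|^{2s}}{(k^2+\lambda)^2}\,\d{k}.
\end{equation*}
The substitution $k=\sqrt{\lambda}\,u$ evaluates the integral as $c_s\,\lambda^{s-3/2}$ with $c_s<\infty$ for $s<3/2$. For $s\in(0,1]$ this factor is therefore bounded uniformly on $\lambda \ge z/2$ (as $s-3/2<0$). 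Inserting the estimate into the fiber Hilbert--Schmidt integral and pulling out $\eps^{2s}$ yields
\begin{equation*}
\|T_\eps(z,\underline{P}) - T_0(z,\underline{P})\|_{\rm HS}^2 \le C'(z,s)\,\eps^{2s}\int |V(r)|\,|r|^{2s}\,\d{r},
\end{equation*}
uniformly in $\underline{P}$, which is the claimed $O(\eps^s)$ bound.

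For the unconditional norm convergence (no moment hypothesis), I would invoke Lemma~\ref{lm1} to reduce to the compactly supported cutoff $V_k$: since $V_k$ has compact support, $\int |r|^{2s}|V_k(r)|\,\d{r}<\infty$ for every $s$, so the argument above gives $\|T_{\eps,k}(z) - T_{0,k}(z)\| \to 0$ as $\eps\to 0$ for each fixed $k$. Splitting via the triangle inequality
\begin{equation*}
\|T_\eps(z) - T_0(z)\| \le \|T_\eps(z) - T_{\eps,k}(z)\| + \|T_{\eps,k}(z) - T_{0,k}(z)\| + \|T_{0,k}(z) - T_0(z)\|,
\end{equation*}
using Lemma~\ref{lm1} to make the first and third terms small uniformly in $\eps$ as $k\to\infty$ (via $\|V-V_k\|_{L^1}\to 0$ by dominated convergence), and then sending $\eps\to 0$ for a fixed large $k$, closes the argument.

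The main obstacle I anticipate is keeping the estimates uniform in the fiber parameter $\underline{P}$, i.e., in $\lambda\ge z/2$. A naive appeal to $L^2$-continuity of translations would yield only pointwise-in-$\underline{P}$ convergence of the fiber norms, which is not enough. The Fourier computation above produces a $\lambda$-dependent constant $\lambda^{s-3/2}$ that happens to be bounded precisely because $s-3/2<0$ and $\lambda$ is bounded from below; tracking this dependence explicitly is what upgrades the fiber bound to uniformity and hence to operator-norm convergence.
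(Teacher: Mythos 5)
Your proof is correct and follows essentially the same strategy as the paper: fiber-wise Hilbert--Schmidt estimates reduced via Plancherel to an $L^2$-continuity bound for translates of $G_\lambda$, using an elementary bound on $|e^{it}-1|$. The only minor difference is that the paper first invokes Lemma~\ref{lmA2} to replace $G_{\frac{1}{2}(z+Q)}$ by $G_{z/2}$ and works with the $s$-independent estimate $\|G_\lambda(\cdot+x)-G_\lambda\|\leq C(\lambda)\min(1,|x|)$, whereas you track the $\lambda$-dependence explicitly through the scaling $k=\sqrt{\lambda}\,u$ and observe that $\lambda^{s-3/2}$ is bounded for $\lambda\geq z/2$ --- both routes yield the required uniformity in $\underline{P}$.
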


\begin{proof}
Let us first assume that $\int \d{r}\, |r|^{2s} \left|V(r)\right| < \infty$ for some $s\in (0,1]$. 
For fixed $\underline{P}$, it follows from \eqref{Tepskern} that the difference $T_{\eps}(z,\underline{P}) - T_{0}(z, \underline{P})$ is associated with the kernel
\begin{align}
\Scalefrac{1}{2}\, v(r) \left( G_{\frac{1}{2}(z+Q)}\mspace{-2mu}\left(\eps r -r'\right) - G_{\frac{1}{2}(z+Q)}\mspace{-2mu}\left(r'\right)  \right),
\end{align}
where $Q \geq 0$ is defined by \eqref{DefQ}. Using Lemma \ref{lmA2}, the $L^2$-norm thereof can be bounded by
\begin{align}
&\frac{\eps^s }{2} \left(\int \d{r}\, |r|^{2s} \left|V(r)\right|\right)^{1/2} \sup_{r \neq 0}\left((\eps |r|)^{-s}  \Big\|G_{\frac{1}{2}(z+Q)}(\eps r - \cdot) -   G_{\frac{1}{2}(z+Q)} \Big\|\right) \nonumber \\
& \qquad \leq \frac{\eps^s }{2} \left(\int \d{r}\, |r|^{2s} \left|V(r)\right|\right)^{1/2} \sup_{x \neq 0}\left( |x|^{-s} \Big\|G_{z/2}(\cdot + x) -   G_{z/2} \Big\| \right),  
\end{align}
where the right side is independent of $\underline{P}$.
To prove that  $\|T_{\eps}(z)-T_{0}(z) \|=O(\eps^s)$ as $\eps\to 0$, it is thus sufficient to show that for every  $\lambda>0$ there exists a constant $C(\lambda)>0$ such that
\begin{align}
 \forall x \in \R: \qquad \Big\|G_{\lambda}(\cdot + x) -   G_{\lambda} \Big\| \leq  C(\lambda) \min(1,|x|). \label{GreenL2diff}
\end{align}
Using that $G_{\lambda} \in L^2(\R)$ and $\widehat{G_{\lambda}}(p)=(2\pi)^{-1/2}(p^2+\lambda)^{-1}$, a Fourier transform together with the elementary inequality $|\exp(ipx)-1|\leq \min(2,|p||x|)$ yields that
\begin{align}
\Big\|G_{\lambda}(\cdot + x) -   G_{\lambda} \Big\| = (2\pi)^{-1/2} \left( \int \d{p}\, \frac{|\exp(ipx)-1|^2}{(p^2+\lambda)^2} \right)^{1/2} \leq C(\lambda) \min(1,|x|),  \label{GreenL2est}
\end{align}
where
\begin{align}
C(\lambda):= (2\pi)^{-1/2} \max\left(\mspace{-1mu} 4 \mspace{-2mu} \int \mspace{-3mu} \d{p}\, (p^2+\lambda)^{-2}, \int \mspace{-3mu} \d{p}\, \frac{p^2}{(p^2+\lambda)^2}\right)^{1/2} \mspace{-3mu} =\max\left( \lambda^{-3/4}\mspace{-2mu},(16 \lambda)^{-1/4}\right). \label{C(lambda)}
\end{align}
This proves \eqref{GreenL2diff} and hence the second part of the lemma.

In the case of general $V \in L^1\cap L^2(\R)$, we use
 an approximation argument together with Lemma \ref{lm1}. This reduces the proof to showing that
\begin{align}
\limeps \;T_{\eps,k}(z) =  T_{0,k}(z) \label{ziel1}
\end{align}  
for every $k>0$. But this is clear from the above, because $\int \d{r}\, |r|^{2s} \left|V_k(r)\right| < \infty$.
\end{proof} 

\noindent
\emph{Remark.} Lemma \ref{lm2} implies that the limit
\begin{align}
S(z)&= \limeps \; A_{\eps} (H_0+z)^{-1} =\sqrt{\frac{(N-1)N}{2}} \,T_{0}(z) \KK \label{LimitS(z)} 
\end{align} 
exists for every $z >0$.

\section{Convergence of $\phi_{\eps}(z)$} 
\label{sec4}

In this section, as in the previous one, the assumption $V\in L^1(\R)$ will be sufficient. It ensures that $A_{\eps}$ is a densely defined, closed operator from $\HH$ to $\widetilde{\HH}$ and that $A_{\eps}(H_0+z)^{-1}$ is bounded. In the following $\CS(\R^{N})$  denotes the Schwartz space and all operators are introduced on the subspace $\CS(\R^{N}) \cap \widetilde{\HH} \subseteq D(A_{\eps}^{*})$, which is dense in $\widetilde{\HH}$. We will see, however, that some of them have bounded extensions.

Let $z>0$ be fixed. In view of the identities \eqref{Defphieps}, \eqref{KKdecomp}, and \eqref{Aepsnew}, we find the decomposition
\begin{align}
\phi_{\eps}(z)&= B_{\eps}(H_0+z)^{-1}A_{\eps}^{*} \nonumber \\
              &= \frac{N(N-1)}{2}\,\eps^{-1} (u \otimes 1) U_{\eps}\KK(H_0+z)^{-1} \KK^{*} U_{\eps}^{*} (v \otimes 1)\nonumber \\
              &=  \sum_{1 \leq i < j\leq N} \eps^{-1} (u \otimes 1) U_{\eps}\KK(H_0+z)^{-1} \KK_{ij}^{*}U_{\eps}^{*} (v \otimes 1) \nonumber \\
      &=  \sum_{1 \leq i < j\leq N} \phi^{ij}_{\eps}(z), \label{phiepsdecomp}
\end{align}
which defines the operators $\phi^{ij}_{\eps}(z): \CS(\R^{N}) \cap \widetilde{\HH} \rightarrow L^2(\R^{N})$ for $1 \leq i<j \leq N$.
For the further analysis of these operators, we fix a pair $(i,j)$ with $1 \leq i<j \leq N$ and we compute the kernel of $\phi^{ij}_{\eps}(z)$ in terms of the
Green's function $G_z^N$ of $-\Delta + z: H^2(\R^N) \rightarrow L^2(\R^N)$. 
Inserting the defining relations \eqref{DefUeps}, \eqref{DefK},  and \eqref{DefKij} for $U_{\eps}, \KK$ and $\KK_{ij}^{*}$, respectively, we obtain that
\begin{align}
& \left( \phi^{ij}_{\eps}(z)\Phi\right)(r,R,x_3,...,x_{N}) 
 \nonumber \\ &\qquad= \eps^{-1} \, u(r)
 \int \mspace{-3mu} \textup{d}x_1' \cdots \textup{d}x_N' \;
 G_{z}^{N} \mspace{-2mu} \left(R-\tfrac{\eps r}{2}-x_1', R+\tfrac{\eps r}{2}-x_2', x_3-x_3', ..., x_{N}-x_N'\right) \nonumber \\
 & \qquad \;\;\; \times v\left(\Scalefrac{x_j'-x_i'}{\eps}\right) \, \Phi\left(\Scalefrac{x_j'-x_i'}{\eps}, \Scalefrac{x_i'+x_j'}{2}, x_1',...\widehat{x_i'}...\widehat{x_j'}...,x_N' \right) \nonumber \\ &\qquad=  u(r) \int \mspace{-3mu} \textup{d}x_1' \cdots \textup{d}x_N' \,\d{r'} \, \d{R'}  \;  G_{z}^{N} \mspace{-2mu} \left(R-\tfrac{\eps r}{2}-x_1', R+\tfrac{\eps r}{2}-x_2', x_3-x_3', ..., x_{N}-x_N'\right) \nonumber \\
 & \qquad \;\;\; \times v\left(r'\right) \, \Phi\left(r', R', x_1',...\widehat{x_i'}...\widehat{x_j'}...,x_N' \right) \, \delta(x_i'-R'+\tfrac{\eps r'}{2}) \; \delta(x_j'-R'-\tfrac{\eps r'}{2}). \label{Defphiepsij}
\end{align}
Here, the second equation results from the substitution
\begin{align}
r':=\Scalefrac{x_j'-x_i'}{\eps}, \qquad R':=\Scalefrac{x_i'+x_j'}{2}, \nonumber
\end{align}
where two more integrations, which are compensated by the two $\delta$-distributions, were introduced. 
A priori the operators $\phi^{ij}_{\eps}(z)\; (1 \leq i<j \leq N)$ are all unbounded, but we will see in the subsequent lemmata that they all extend to bounded operators given by the same integral kernels. Hence, we expect that they converge to the (formal) limit operators  
$\phi^{ij}_{0}(z)$, which are defined in terms of the corresponding kernels with $\eps=0$. In the following, we prove that this is indeed the case. For this purpose, we divide these operators into the three groups $(i,j) = (1,2)$, $i \in \{1,2\}$ and $j\geq 3$, and $\{i,j\} \subset \{3,\ldots,N\}$, and we analyse these groups separately.

\subsection{The limit of $\phi_{\eps}^{12}(z)$} 
\label{sec41}

For the kernel of $\phi^{12}_{\eps}(z)$ we shall not use \eqref{Defphiepsij} but instead we derive a simpler expression as follows. By the defining expression in \eqref{phiepsdecomp} and  by \eqref{H0H0Tilde},
\begin{align}
\phi^{12}_{\eps}(z) &= \eps^{-1} (u \otimes 1) U_{\eps}(\widetilde{H}_0 + z)^{-1}U_{\eps}^{*} (v \otimes 1) \label{phieps12}
\end{align} 
because $\KK\KK_{12}^{*}=1$ in $L^2(\R^N)$. It follows from \eqref{phieps12}, after a Fourier transform in $(R,x_3,...,x_N)$, that the operator $\phi_{\eps}^{12}(z)$ acts pointwise in the associated momentum variable $\underline{P}=(P,P_3,...,P_N)$ by the operator
\begin{align}
\phi_{\eps}^{12}(z,\underline{P}):= (2\eps)^{-1}   \left( u \otimes 1 \right) U_{\eps}\left(-\Delta_r + \frac{z + Q}{2}\right)^{-1} \mspace{-5mu} U_{\eps}^{*} \left( v \otimes 1 \right), \label{phieps12P} 
\end{align}
where $Q \geq 0$ is defined by \eqref{DefQ}. This operator has the integral kernel
\begin{align}
\Scalefrac{1}{2} \,u(r) \; G_{\frac{1}{2}(z+Q)}\mspace{-2mu}\left(\eps(r-r')\right) v(r'),  \label{phieps12kern}
\end{align}
where $G_\lambda=G_{\lambda}^1$ denotes the one-dimensional Green's function, which is explicitly given by \eqref{G^1_z}. Due to the facts that $u,v \in L^2(\R)$ and $G_{\frac{1}{2}(z+Q)}$ is bounded, we see that $\phi_{\eps}^{12}(z,\underline{P})$ is a Hilbert-Schmidt operator and we expect, and prove below, that $\limepsr \phi_{\eps}^{12}(z)=\phi_{0}^{12}(z)$, where $\phi_{0}^{12}(z,\underline{P})$ is defined in terms of the kernel \eqref{phieps12kern} with $\eps=0$, which is
\begin{align}
u(r) \; \frac{1}{2 \sqrt{2(z+Q)}} \; v(r').\label{phi12kern}
\end{align}
As in the previous section, the first step in the analysis of the limit $\eps\to 0$ is to reduce the problem to the case of compactly supported potentials. Let $V_k\;(k>0)$
denote the cutoff potential introduced in \eqref{cutpot} and let $\phi_{\eps,k}^{12}(z)$ and $\phi_{0,k}^{12}(z)$ denote the operators $\phi_{\eps}^{12}(z)$ and $\phi_{0}^{12}(z)$ with $V$ replaced by $V_k$, respectively. The corresponding kernels are given by \eqref{phieps12kern} and \eqref{phi12kern}, respectively, by substituting $u \rightarrow u_k$ and $v \rightarrow v_k$, where  $v_k(r)=\left|V_k(r)\right|^{1/2}$ and $u_k(r)=\sgn(V_k(r))v_k(r)$. 
The next lemma shows that $\phi^{12}_{\eps}(z)$ and $\phi^{12}_0(z)$ define bounded operators and $\phi^{12}_{\eps,k}(z)$ is close to $\phi^{12}_{\eps}(z)$ uniformly in $\eps\geq 0$ for large $k>0$.

\begin{lemma}\label{lm3}
Let $z>0$ and $\eps \geq 0$. Then $\phi^{12}_{\eps}(z)$ extends to a bounded operator from $\widetilde{\HH}$ to $L^2(\R^{N})$ satisfying the norm estimate
\begin{align}
\| \phi^{12}_{\eps}(z) \| &\leq (2\sqrt{2z})^{-1} \| V \|_{L^1}. \label{Abs2} 
\end{align}
Furthermore, for any $k>0$, we have the estimate
\begin{align}
\| \phi^{12}_{\eps}(z) -\phi^{12}_{\eps,k}(z) \| &\leq  (2\sqrt{z})^{-1}
\| V \|_{L^1}^{1/2} \| V - V_k \|_{L^1}^{1/2}. \label{Abs3} 
\end{align}
\end{lemma}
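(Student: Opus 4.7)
The plan is to work fibrewise, exploiting the partial Fourier transform already carried out in \eqref{phieps12P}: since $\|\phi_\eps^{12}(z)\|=\sup_{\underline P}\|\phi_\eps^{12}(z,\underline P)\|$, it suffices to bound each fibre operator, and boundedness of the extension then follows automatically from the uniform norm estimate. The main analytic tool I would use is the elementary inequality $\|\cdot\|_{\mathrm{op}}\le\|\cdot\|_{\mathrm{HS}}$, applied to the explicit factorised kernel \eqref{phieps12kern}; the only input required from $G_\lambda$ is the pointwise bound $G_\lambda(x)^2\le G_\lambda(0)^2 = 1/(4\lambda)$ read off from \eqref{G^1_z}, with $\lambda=(z+Q)/2\ge z/2$.

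For \eqref{Abs2} the argument is then immediate: taking the $L^2$-norm of the kernel and inserting these bounds yields $\|\phi_\eps^{12}(z,\underline P)\|_{\mathrm{HS}}^2 \le \|V\|_{L^1}^2/(16\lambda)\le \|V\|_{L^1}^2/(8z)$, whose square root is the claimed $(2\sqrt{2z})^{-1}\|V\|_{L^1}$.

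For \eqref{Abs3} the crux is an exact algebraic identity for the kernel of the difference. Writing $\chi_k$ for the indicator of $[-k,k]$, one has $u_k=\chi_k u$, $v_k=\chi_k v$, and $w:=v-v_k=(1-\chi_k)v$, with $\|w\|^2=\|V-V_k\|_{L^1}$. Combining the idempotence $(1-\chi_k\chi_k')^2=1-\chi_k\chi_k'$ with the splitting $1-\chi_k\chi_k'=(1-\chi_k)+\chi_k(1-\chi_k')$ gives
\[
  |u(r)v(r')-u_k(r)v_k(r')|^2
  = v(r)^2 v(r')^2\bigl(1-\chi_k(r)\chi_k(r')\bigr)
  = w(r)^2v(r')^2 + v_k(r)^2w(r')^2.
\]
Integrating this against $G_\lambda^2\le 1/(4\lambda)$ and using $\|v_k\|\le\|v\|$ on the second term yields $\|\phi_\eps^{12}(z,\underline P)-\phi_{\eps,k}^{12}(z,\underline P)\|_{\mathrm{HS}}^2\le\|V\|_{L^1}\|V-V_k\|_{L^1}/(8\lambda)$, from which \eqref{Abs3} follows via $\lambda\ge z/2$.

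The one subtle point to watch is precisely this identity. A naive triangle-inequality estimate $|uv-u_k v_k|\le wv'+vw'$ followed by $(a+b)^2\le 2(a^2+b^2)$ would cost an extra factor of $\sqrt2$ and would miss the stated constant $(2\sqrt z)^{-1}$; the idempotence trick above is what recovers the sharp value. Everything else is routine computation with Gaussian-decay kernels.
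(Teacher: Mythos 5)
Your proof is correct and follows essentially the same route as the paper: a fibrewise Hilbert--Schmidt bound using the $L^{\infty}$ bound on $G_{\frac{1}{2}(z+Q)}$ for \eqref{Abs2}, and an exact algebraic identity for the kernel of the difference (your decomposition $w(r)^2v(r')^2+v_k(r)^2w(r')^2$ is an equivalent rewriting of \eqref{Videntity}) for \eqref{Abs3}. One minor remark on your aside: the ``naive'' estimate $|u(r)v(r')-u_k(r)v_k(r')|\le w(r)v(r')+v_k(r)w(r')$ is in fact an equality whose two summands have disjoint supports in $(r,r')$, so squaring produces no cross term and no lost $\sqrt{2}$; in any case your argument as written is sound and matches the paper's.
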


\begin{proof}
For fixed $z>0$, $\eps \geq 0$ and $\underline{P}$, it follows from \eqref{phieps12kern} and the $L^{\infty}$-bound  
\begin{align}
\big\| G_{\frac{1}{2}(z+Q)} \big\|_{L^{\infty}} \leq (\sqrt{2z})^{-1}  \label{supnormG_z}
\end{align}
that the Hilbert-Schmidt norm of  $\phi^{12}_{\eps}(z,\underline{P})$ is bounded by the right side of \eqref{Abs2}, which is independent of $\underline{P}$. This proves the first part of the lemma.

For the second part of the lemma, we note that $\phi^{12}_{\eps}(z,\underline{P}) -\phi^{12}_{\eps,k}(z,\underline{P})$ has the kernel
\begin{align}
\Scalefrac{1}{2}\left( u(r)v(r') - u_k(r)v_k(r')  \right)G_{\frac{1}{2}(z+Q)}\mspace{-2mu}\left( \eps (r-r')\right). \label{phi12diffkern}
\end{align}
With the help of \eqref{supnormG_z} and the relation 
\begin{align}
\left(  u(r)v(r') -u_k(r)v_k(r')\right)^2= |V(r)V(r')| - |V_k(r)V_k(r')|, \label{Videntity}
\end{align}
we see that the $L^2$-norm of \eqref{phi12diffkern} can be bounded by
\begin{align}
(2\sqrt{2z})^{-1} \left( \int \d{r} \, \d{r'} \, |V(r)V(r')| - |V_k(r)V_k(r')| \right)^{1/2} &= (2\sqrt{2z})^{-1} \left( \|V\|_{L^1}^2 - \|V_k\|_{L^1}^2 \right)^{1/2}
\nonumber \\ &\leq (2\sqrt{z})^{-1} \| V \|_{L^1}^{1/2} \| V - V_k \|_{L^1}^{1/2}. \nonumber
\end{align}  
This shows that the Hilbert-Schmidt norm of $\phi^{12}_{\eps}(z,\underline{P}) -\phi^{12}_{\eps,k}(z,\underline{P})$ is bounded by the right side of \eqref{Abs3}, which is independent of $\underline{P}$.
Hence, \eqref{Abs3} is established.
\end{proof}

Now, we can prove that $\phi^{12}_0(z)$, which is defined by the kernel \eqref{phi12kern}, is the limit of $\phi^{12}_{\eps}(z)$:

\begin{proposition}\label{lm4}
Let $z>0$. Then $\phi^{12}_{\eps}(z)$ converges in operator norm to $\phi^{12}_0(z)$ as $\eps\to 0$. If
$\int \d{r}\, |r|^{2s} \left|V(r)\right| < \infty$ for some $s\in (0,1]$, then $\|\phi^{12}_{\eps}(z) - \phi^{12}_0(z)  \| = O(\eps^s)$ as $\eps\to 0$.  
\end{proposition}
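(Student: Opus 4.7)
The plan is to mirror the proof of Proposition \ref{lm2}, exploiting the explicit Hilbert--Schmidt kernel \eqref{phieps12kern} and its $\eps=0$ limit \eqref{phi12kern}. After the Fourier transform in $(R, x_3, \ldots, x_N)$, the difference $\phi^{12}_{\eps}(z,\underline{P}) - \phi^{12}_{0}(z,\underline{P})$ has the kernel
\[
\tfrac{1}{2}\, u(r)\,\bigl[G_{(z+Q)/2}(\eps(r - r')) - G_{(z+Q)/2}(0)\bigr]\,v(r'),
\]
so the whole argument reduces to a pointwise bound on $G_\lambda(x) - G_\lambda(0)$ combined with a double integration against $|V|\otimes|V|$.

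First, assuming $\int |r|^{2s}|V(r)|\,\d{r} < \infty$ with $s \in (0,1]$, I would derive from the explicit formula \eqref{G^1_z} the estimate
\[
|G_\lambda(x) - G_\lambda(0)| = \frac{1 - e^{-\sqrt{\lambda}|x|}}{2\sqrt{\lambda}} \leq \frac{|x|^s}{2\,\lambda^{(1-s)/2}},
\]
using the elementary inequality $1 - e^{-t} \leq t^s$ for $t \geq 0$ and $s \in [0,1]$ (interpolating between $1 - e^{-t} \leq t$ and $1 - e^{-t} \leq 1$). Since $\lambda = (z+Q)/2 \geq z/2$ and $(1-s)/2 \geq 0$, the factor $\lambda^{-(1-s)/2}$ is bounded uniformly in $\underline{P}$ by $(z/2)^{-(1-s)/2}$. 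Combined with the elementary inequality $|r-r'|^{2s} \leq C_s(|r|^{2s} + |r'|^{2s})$, the squared Hilbert--Schmidt norm of the difference kernel obeys
\[
\|\phi^{12}_\eps(z,\underline{P}) - \phi^{12}_0(z,\underline{P})\|_{\mathrm{HS}}^2 \leq C(z,s)\,\eps^{2s}\,\|V\|_{L^1} \int |r|^{2s}|V(r)|\,\d{r}
\]
uniformly in $\underline{P}$, yielding the claimed $O(\eps^s)$ bound for the operator norm.

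Second, for general $V \in L^1\cap L^2(\R)$ without a moment assumption, I would use the cutoff $V_k$ from \eqref{cutpot} together with Lemma \ref{lm3}, exactly as in the end of the proof of Proposition \ref{lm2}. The triangle inequality
\[
\|\phi^{12}_\eps(z) - \phi^{12}_0(z)\| \leq \|\phi^{12}_\eps(z) - \phi^{12}_{\eps,k}(z)\| + \|\phi^{12}_{\eps,k}(z) - \phi^{12}_{0,k}(z)\| + \|\phi^{12}_{0,k}(z) - \phi^{12}_0(z)\|
\]
separates the error into a middle term that tends to $0$ as $\eps \to 0$ for each fixed $k$ by the first step applied to $V_k$ (which, having compact support, trivially satisfies every moment condition) and two flanking terms controlled uniformly in $\eps$ by $(2\sqrt{z})^{-1}\|V\|_{L^1}^{1/2}\|V - V_k\|_{L^1}^{1/2}$ via \eqref{Abs3}; these vanish as $k \to \infty$ because $V_k \to V$ in $L^1(\R)$.

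I do not anticipate a genuine obstacle here. The only technical point is uniformity in $\underline{P}$ of the Green's function difference bound, which is automatic since $\lambda \mapsto \lambda^{-(1-s)/2}$ is nonincreasing for $s \leq 1$ and $\lambda$ is bounded below by $z/2 > 0$. The argument is a direct adaptation of Proposition \ref{lm2}, actually slightly simpler because the kernel \eqref{phieps12kern} is already manifestly symmetric in $r$ and $r'$, so no change of variables is needed.
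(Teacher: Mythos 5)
Your proposal is correct and follows essentially the same route as the paper: rewrite the kernel of $\phi^{12}_{\eps}(z,\underline{P}) - \phi^{12}_0(z,\underline{P})$ explicitly, bound $|G_\lambda(x)-G_\lambda(0)|$ via $1-e^{-t}\leq t^s$ and the lower bound $\lambda\geq z/2$ to get a uniform-in-$\underline{P}$ Hilbert--Schmidt estimate of order $\eps^s$, and then dispense with the moment hypothesis by the cutoff $V_k$ together with Lemma~\ref{lm3}. The only cosmetic difference is that you leave the constant in $|r-r'|^{2s}\leq C_s(|r|^{2s}+|r'|^{2s})$ unspecified where the paper tracks it numerically.
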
 

\begin{proof}
Let us first assume that $\int \d{r}\, |r|^{2s} \left|V(r)\right| < \infty$ for some $s\in (0,1]$. 
Then we note that the kernel of $\phi^{12}_{\eps}(z,\underline{P}) -  \phi^{12}_0(z,\underline{P})$ is for fixed $z>0$ and $\underline{P} \in \R^{N-1}$  given by
\begin{align}
\Scalefrac{1}{2}\, u(r) \left( G_{\frac{1}{2}(z+Q)}\mspace{-2mu}\left(\eps(r-r')\right) - G_{\frac{1}{2}(z+Q)}\mspace{-2mu}\left(0\right)  \right)v(r').
\end{align}
With the help of the elementary inequality $1-\exp(-x)\leq x^s$, which is valid for $x\geq 0$, and the explicit formula \eqref{G^1_z} for $G_z$, we find that
\begin{align}
\left|  G_{\frac{1}{2}(z+Q)}\mspace{-2mu}\left(\eps (r-r')\right) - G_{\frac{1}{2}(z+Q)}\mspace{-2mu}\left(0\right) \right|^2 
&\leq 2^{-s-1}(z+Q)^{s-1}\eps^{2s}|r-r'|^{2s}\nonumber \\
&\leq 2^{-1}z^{s-1} \eps^{2s} (|r|^{2s}+ |r'|^{2s}). 
\end{align}
Using this to estimate the Hilbert-Schmidt norm of $\phi^{12}_{\eps}(z,\underline{P}) -  \phi^{12}_0(z,\underline{P})$, we find that
\begin{align}
\| \phi^{12}_{\eps}(z,\underline{P}) -  \phi^{12}_0(z,\underline{P}) \|^2 \leq 
\frac{\eps^{2s}}{4} z^{s-1} \|V\|_{L^1} \int \mspace{-3mu} \d{r}\, |r|^{2s} \left|V(r)\right| ,
\end{align}
which proves the second part of the lemma.
 
In the case of general $V \in L^1\cap L^2(\R)$, by Lemma \ref{lm3}, it suffices to prove that
\begin{align}
\limeps \;\phi^{12}_{\eps,k}(z) =  \phi^{12}_{0,k}(z) \label{ziel2}
\end{align}  
for every fixed $k>0$. This is clear from the above because $\int |r|^{2s} \left|V_k(r)\right|\,\d{r} < \infty$.
\end{proof}

\subsection{The limits of $\phi^{1j}_{\eps}(z)$ and $\phi^{2j}_{\eps}(z)$ for $j\geq 3$} 
\label{sec42}

Next we discuss the operators $\phi^{1j}_{\eps}(z)$ and $\phi^{2j}_{\eps}(z)$ with $j \in \{3,...,N\}$. The associated kernels will be deduced from \eqref{Defphiepsij}. After the evaluation of the $\delta$-distributions in $x_1'$ and $x_j'$ followed by the substitution $x_2' \rightarrow x_j'$, we infer that the kernel of $\phi^{1j}_{\eps}(z)$ reads
\begin{align}
 u(r)\,  G_{z}^{N} \mspace{-3mu} \left(X_{\eps}^{1j},x_3-x_3',...\widehat{x_{j}-x_{j}'}...,x_{N}-x_N'\right)\mspace{-2mu}v\mspace{-2mu}\left(r'\right),  \label{phieps1jkern}
\end{align}
where
\begin{align}
X^{1j}_{\eps}:=\left(R-R'-\tfrac{\eps (r-r')}{2}, R+\tfrac{\eps r}{2}-x_j',x_j-R'-\tfrac{\eps r'}{2}\right) \label{DefX^1j}
\end{align}
for short. Hence, $\phi^{1j}_{\eps}(z)$ simply acts by convolution in the variables $(x_3,...\widehat{x_j}...,x_N)$.
Consequently, it follows from Lemma \ref{lmA1} (vi) that $\phi^{1j}_{\eps}(z)$ acts pointwise in the associated momentum variables $\underline{P}_j=(P_3,...\widehat{P_j}...,P_N)$ by an operator $\phi^{1j}_{\eps}(z, \underline{P}_j)$ with kernel
\begin{align}
u(r)\; G^{3}_{z+Q_j}\mspace{-3mu}\left(X^{1j}_{\eps}\right)v(r'), \qquad  \quad Q_j:=\displaystyle\sum\limits_{\substack{i=3 \\ i \neq j}}^{N} P_i^2. \label{phieps1jkernP}
\end{align}
This kernel depends on the three-dimensional Green's function, which is explicitly given by  
\begin{align}
G^3_z(x)= \frac{\exp(-\sqrt{z} \left| x \right|)}{4\pi \left| x \right|},\qquad  \quad 0 \neq x \in \R^3, z>0. \label{G3}
\end{align}

Similarly, the operator $\phi^{2j}_{\eps}(z)$ acts pointwise in $\underline{P}_j$ by the operator $\phi^{2j}_{\eps}(z, \underline{P}_j)$ with kernel
\begin{align}
u(r)\; G^{3}_{z+Q_j}\mspace{-3mu}\left(X_{\eps}^{2j}\right)v(r'), \label{phieps2jkernP}
\end{align}
where
\begin{align}
X^{2j}_{\eps}:=\left(R-R'+\tfrac{\eps (r+r')}{2}, R-\tfrac{\eps r}{2}-x_j',x_j-R'-\tfrac{\eps r'}{2}\right).\label{DefX^2j}
\end{align}
A comparison of \eqref{phieps1jkernP} and \eqref{phieps2jkernP} shows that the kernels of the operators $\phi^{1j}_{\eps}(z)$
and $\phi^{2j}_{\eps}(z)$ only differ by the reflection $r\rightarrow -r$. Hence, it suffices to consider the operators $\phi^{1j}_{\eps}(z)$ henceforth.

In Lemma \ref{lm6}, we will see that $\phi^{1j}_{\eps}(z)$ and $\phi^{2j}_{\eps}(z)$ extend to bounded operators. Hence, we can expect that they converge to the formal limit operators $\phi^{1j}_{0}(z)$ and  $\phi^{2j}_{0}(z)$, respectively, which are defined by the corresponding kernels with $\eps=0$. The next lemma explains the norm bounds in Lemma \ref{lm6}: 

\begin{lemma}\label{lm5}
Let $z>0$. Then the operator $F_z: L^2(\R^2) \rightarrow L^2(\R^2)$ defined by the integral
kernel $K(x,y;x',y')= G_{z}^3\left(x-x',x-y',y-x' \right)$ is bounded with
\begin{align}
\| F_z \| \leq  (2\sqrt{z})^{-1}. \nonumber
\end{align}
\end{lemma}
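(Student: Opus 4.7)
The plan is to bound $F_z$ via Schur's test, exploiting the pointwise positivity of the kernel $K(x,y;x',y') = G_z^3(x-x',x-y',y-x')$ from \eqref{G3}. Since $K\ge 0$, it suffices to bound
\[
 I(x,y) := \int K(x,y;x',y')\, dx'\, dy', \qquad J(x',y') := \int K(x,y;x',y')\, dx\, dy
\]
uniformly, and then $\|F_z\| \le \sqrt{\sup I \cdot \sup J}$.

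For the first quantity I would change variables $a := x - x'$, $b := x - y'$, which turns $y - x'$ into $a + d$ with $d := y - x$. The integral becomes
\[
 I(d) = \int da\, db\; G_z^3(a, b, a+d),
\]
i.e.\ the integral of $G_z^3$ over the affine $2$-plane $\Pi_d := \{(a,b,a+d) : a,b \in \R\} \subset \R^3$. A direct tangent-vector computation shows that the parametrization by $(a,b)$ carries the surface element $\sqrt{2}\, da\, db$, and $\Pi_d$ lies at distance $|d|/\sqrt 2$ from the origin (unit normal $(-1,0,1)/\sqrt 2$).

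The central auxiliary fact I would use is that for any affine $2$-plane $\Pi_\delta \subset \R^3$ at signed distance $\delta$ from the origin,
\[
 \int_{\Pi_\delta} G_z^3 \, dS \;=\; G_z^1(\delta) \;=\; \frac{e^{-\sqrt z\,|\delta|}}{2\sqrt z},
\]
which follows by rotating $\Pi_\delta$ to the horizontal plane $\{X_3 = \delta\}$ and then Fourier-transforming in $(X_1,X_2)$ (or invoking the heat-kernel representation of $G_z^3$). Applying it to $\Pi_d$ gives
\[
 I(d) \;=\; \frac{1}{\sqrt 2}\, G_z^1(d/\sqrt 2) \;=\; \frac{e^{-\sqrt{z/2}\,|d|}}{2\sqrt{2z}} \;\le\; \frac{1}{2\sqrt{2z}}.
\]
An entirely analogous substitution $a := x-x'$, $c := y-x'$ handles $J(x',y')$, yielding the same bound with $d$ replaced by $x'-y'$. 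Schur's test then delivers $\|F_z\| \le (2\sqrt{2z})^{-1} \le (2\sqrt z)^{-1}$, which in fact improves the claim slightly.

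There is no real obstacle; the only delicate point is the geometric bookkeeping, namely the surface-area factor $\sqrt 2$ in the parametrization and the distance $|d|/\sqrt 2$ from the origin. As an alternative one could exploit the conserved "total momentum" $k_1+k_2 = l_1+l_2$ hidden in $K$ by passing to Fourier variables, decomposing $F_z$ as a direct integral of one-dimensional operators, and applying a 1d Schur estimate on each fiber; this yields the same constant at comparable cost.
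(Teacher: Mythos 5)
Your proof is correct and follows the same route as the paper's --- a Schur test after the substitution $a=x-x'$, $b=x-y'$ --- but where the paper merely \emph{bounds} the row sum $\int\!\d{a}\,\d{b}\;G_z^3(a,b,a+d)$ by first discarding the $d=y-x$ dependence via the radial monotonicity of $G_z^3$ and then invoking \eqref{Greenpartint}, you evaluate it \emph{exactly} as a rescaled integral of $G_z^3$ over the affine $2$-plane at distance $|d|/\sqrt2$ from the origin. This refinement yields the sharper constant $(2\sqrt{2z})^{-1}$ in place of the paper's $(2\sqrt z)^{-1}$ (both of course suffice for the lemma as stated).
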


\begin{proof}
Applying the Schur test yields
\begin{align}
\| F_{z} \| &\leq \sup_{x,y}\left( \int \mspace{-3mu} \d{x'} \, \d{y'} \;G_{z}^3\left(x-x',x-y',y-x' \right) \right)\nonumber \\
&\leq \int \mspace{-3mu} \d{x'} \, \d{y'} \;G_{z}^3\left(0,y',x' \right) = (2\sqrt{z})^{-1}.   \label{SchurTest1}
\end{align}
In the second inequality we first made a substitution and then used that $G_{z}^3(x)$ is decreasing as a function of $|x|$. The integral can be evaluated directly or with the help of \eqref{Greenpartint}.
\end{proof}

The first step in proving that $\phi^{ij}_{0}(z) = \limepsr \phi^{ij}_{\eps}(z)$ for $i \in \{1,2\}$ and $j \in \{3,...,N\}$ is again a reduction to compactly supported potentials. For this purpose, let $\phi^{ij}_{\eps,k}(z) \;(\eps \geq 0)$ be the variant of $\phi^{ij}_{\eps}(z)$ with the potential $V$ replaced by the cutoff potential $V_k$ from \eqref{cutpot}. The next lemma is the analogue of Lemma \ref{lm3}: 

\begin{lemma}\label{lm6}
Let $z>0$, $\eps \geq 0$, $i \in \{1,2\}$ and $j\in \{3,...,N \}$. Then $\phi^{ij}_{\eps}(z)$ extends to a bounded operator from $\widetilde{\HH}$ to $L^2(\R^N)$, which satisfies the norm estimate
\begin{align}
\| \phi^{ij}_{\eps}(z) \| &\leq (2\sqrt{z})^{-1} \| V \|_{L^1}. \label{Abs4} 
\end{align}
Furthermore, for any $k>0$, we have the estimate
\begin{align}
\| \phi^{ij}_{\eps}(z) -\phi^{ij}_{\eps,k}(z) \| &\leq   (\sqrt{2z})^{-1}
\| V \|_{L^1}^{1/2} \| V - V_k \|_{L^1}^{1/2}. \label{Abs5}    
\end{align}
\end{lemma}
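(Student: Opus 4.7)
The plan is to work in the momentum representation $\underline{P}_j=(P_3,\dots,\widehat{P_j},\dots,P_N)$ introduced in \eqref{phieps1jkernP}--\eqref{phieps2jkernP} and to bound the fiber operator norms uniformly in $\underline{P}_j$. Since the kernels of $\phi^{1j}_\eps(z,\underline{P}_j)$ and $\phi^{2j}_\eps(z,\underline{P}_j)$ differ only by the reflection $r\to -r$, which is unitary on $L^2(\R)$ and preserves $v$ and $u$ (as $V$ is even), it suffices to treat the $(1,j)$ case.

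First I would observe that, for each fixed $r,r'\in\R$, the integral kernel
\begin{align*}
(R,x_j;R',x_j')\longmapsto G^3_{z+Q_j}(X^{1j}_\eps)
\end{align*}
defines a bounded operator $K_{r,r'}$ on $L^2(\R^2)$ with $\|K_{r,r'}\|\leq (2\sqrt{z+Q_j})^{-1}\leq (2\sqrt{z})^{-1}$. Indeed, translating the integration variables $R',x_j'$ by amounts depending only on $\eps,r,r'$ (a unitary change of variable with Jacobian $1$), one transforms $K_{r,r'}$ into an operator whose kernel is precisely the one considered in Lemma \ref{lm5} with $z$ replaced by $z+Q_j\geq z$; the estimate then follows from Lemma \ref{lm5}. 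Next, writing $g=\phi^{1j}_\eps(z,\underline{P}_j)f$ as
\begin{align*}
g(r,R,x_j)=u(r)\int v(r')\,\bigl[K_{r,r'}\,f(r',\cdot,\cdot)\bigr](R,x_j)\,\d{r'},
\end{align*}
and applying the Cauchy--Schwarz inequality first in $r'$ and then in $r$, I obtain
\begin{align*}
\|g\|_{L^2(\R^3)}\;\leq\;(2\sqrt{z})^{-1}\,\|u\|_{L^2}\,\|v\|_{L^2}\,\|f\|_{L^2(\R^3)}\;=\;(2\sqrt{z})^{-1}\,\|V\|_{L^1}\,\|f\|,
\end{align*}
uniformly in $\underline{P}_j$, which establishes \eqref{Abs4} and yields the bounded extension of $\phi^{1j}_\eps(z)$ to $\widetilde{\HH}$.

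For \eqref{Abs5} I would run the same computation on $\phi^{1j}_\eps(z)-\phi^{1j}_{\eps,k}(z)$, whose fiber kernel is $\bigl(u(r)v(r')-u_k(r)v_k(r')\bigr)G^3_{z+Q_j}(X^{1j}_\eps)$, but replace the two separate Cauchy--Schwarz steps by a single one in the joint variable $(r,r')$. This gives
\begin{align*}
\|(\phi^{1j}_\eps(z)-\phi^{1j}_{\eps,k}(z))f\|^2\leq (2\sqrt{z})^{-2}\!\int\!|u(r)v(r')-u_k(r)v_k(r')|^2\,\d{r}\d{r'}\,\|f\|^2,
\end{align*}
and identity \eqref{Videntity} evaluates the remaining double integral as $\|V\|_{L^1}^2-\|V_k\|_{L^1}^2\leq 2\|V\|_{L^1}\|V-V_k\|_{L^1}$, producing the constant $(\sqrt{2z})^{-1}$ advertised in \eqref{Abs5}. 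The main obstacle is the first step: one must recognise that although $X^{1j}_\eps$ depends on $\eps,r,r'$ in a complicated way, these dependencies enter only through translations in the $(R,x_j,R',x_j')$-arguments and therefore disappear from the operator norm, reducing the problem to Lemma \ref{lm5}; once this reduction is in place, the rest is standard Cauchy--Schwarz manipulation.
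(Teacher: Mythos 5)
Your proof follows the paper's own argument: reduce the fixed-$(r,r')$ kernel $G^3_{z+Q_j}(X^{1j}_\eps)$ to the operator of Lemma~\ref{lm5} by translations and bound its norm by $(2\sqrt{z+Q_j})^{-1}\leq(2\sqrt z)^{-1}$, then use Cauchy--Schwarz in $r'$ (and a supremum in $r$) together with \eqref{Videntity} to produce $\|V\|_{L^1}$ for \eqref{Abs4} and $\|V\|_{L^1}^{1/2}\|V-V_k\|_{L^1}^{1/2}$ for \eqref{Abs5}. One small inaccuracy: translating only the inner variables $R',x_j'$ does not suffice to bring $X^{1j}_\eps$ into the form of Lemma~\ref{lm5}, since the offsets in the first and third components of $X^{1j}_\eps$ differ; a shift of $x_j$ (or $R$) is also required, exactly as in the paper's three-step substitution. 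Your closing sentence, which allows translations in all four of $(R,x_j,R',x_j')$, states the correct fact, so the argument as a whole is sound.
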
\medskip

\begin{proof}
Without loss of generality, we may assume that $i=1$. Furthermore, the proofs of \eqref{Abs4} and \eqref{Abs5} are similar.  In both cases we have to estimate the norm of an operator that, for fixed $\underline{P}_j=(P_3,...\widehat{P_j}...,P_N)$, is given by a kernel of the form
$$
      W(r,r') \,G_{z+Q_j}^{3}\left( X_{\eps}^{1j}\right).
$$
Explicitly, we have that $W(r,r')=u(r)v(r')$ in the case of \eqref{Abs4} and $W(r,r')=u(r)v(r')-u_k(r)v_k(r')$ in the case of \eqref{Abs5}. Therefore, we  only demonstrate the desired estimate in the case of \eqref{Abs4}. For fixed $\Psi$, the Cauchy-Schwarz inequality in the $r'$-integration yields 
\begin{align}
&\| \phi^{1j}_{\eps}(z,\underline{P}_j) \Psi \|^2 = \int \mspace{-3mu} \d{r}\,\d{R}\,
\d{x_{j}} \left| \int \mspace{-3mu} \d{r'} \,  \d{R'} \,
 \d{x_{j}'} \; W(r,r')\, G^{3}_{z+Q_j}(X_{\eps}^{1j})\, \Psi(r',R',x_{j}') \right|^2 \nonumber \\
 &\quad\,= \int \mspace{-3mu} \d{r}\,\d{R}\, \d{x_{j}} \left| \int \mspace{-3mu} \d{r'} \,  W(r,r') \int \mspace{-3mu} \d{R'} \,  \d{x_{j}'} \; G^{3}_{z+Q_j}(X_{\eps}^{1j})\, \Psi(r',R',x_{j}') \right|^2 \nonumber \\
 & \quad \leq \int \mspace{-3mu} \d{r}\,\d{R}\, \d{x_{j}}  \left\{  \int \mspace{-3mu} \d{r'} \,  W(r,r') ^2 \right\}  \int \mspace{-3mu} \d{r'} \left| \int \mspace{-3mu} \d{R'} \, \d{x_{j}'} \; G^{3}_{z+Q_j}(X_{\eps}^{1j})\, \Psi(r',R',x_{j}') \right|^2 \nonumber \\
 & \quad \leq  \left\{  \int \mspace{-3mu} \d{r} \, \d{r'} \,  W(r,r')^2  \right\} \cdot \sup_r\left(  \int \mspace{-3mu} \d{r'} \d{R}\, \d{x_{j}} \left| \int \mspace{-3mu} \d{R'} \,
 \d{x_{j}'} \; G^{3}_{z+Q_j}(X_{\eps}^{1j})\, \Psi(r',R',x_{j}') \right|^2 \right)\mspace{-3mu}, \label{phieps1jabs} 
\end{align}
where 
\begin{align}
\int \mspace{-3mu} \d{r} \, \d{r'} \;  W(r,r')^2   = \| V \|_{L^1}^2. \label{IntW^2V1}
\end{align}
In the case of \eqref{Abs5}, the identity \eqref{Videntity} implies that
\begin{align}
\int \mspace{-3mu} \d{r} \, \d{r'} \;  W(r,r')^2   = \| V \|_{L^1}^2  - \| V_k \|_{L^1}^2 \leq 2 \| V \|_{L^1} \| V-V_k \|_{L^1}. \label{IntW^2V2}
\end{align}
The rest of the proof of \eqref{Abs5} is the same as for \eqref{Abs4}. 

We continue estimating the right side of \eqref{phieps1jabs}. For fixed $r,r' \in \R$ and $\underline{P}_j$, the 
sequence of substitutions $R' + \tfrac{\eps (r-r')}{2}\rightarrow R'$, $x_j-\eps r'+ \tfrac{\eps r}{2}\rightarrow x_j$, 
$x_j' - \tfrac{\eps r}{2} \rightarrow x_j'$ leads to
\begin{align}
&\int \mspace{-3mu} \d{R}\,\d{x_{j}} \left| \int \mspace{-3mu} \d{R'} \, \d{x_{j}'} \, G^{3}_{z+Q_j}\left(X^{1j}_{\eps}\right)\,  \Psi(r',R',x_{j}') \right|^2 \nonumber \\
&\quad=\int \mspace{-3mu}\d{R}\,\d{x_{j}} \left| \int \mspace{-3mu} \d{R'} \, \d{x_{j}'} \, G^{3}_{z+Q_j}\left(R-R',R-x_j',x_j-R' \right)\,  \widetilde{\Psi}(r',R',x_j') \right|^2 \nonumber \\
& \quad = \left\|F_{z+Q_j}\widetilde{\Psi}(r',\,\cdot\,) \right\|^2 \label{Trans1}
\end{align}
with the integral operator $F_{z+Q_j}\in \LL(L^2(\R^2))$ from Lemma \ref{lm5} and $\widetilde{\Psi} \in L^2(\R^3)$ defined by
\begin{align}
\widetilde{\Psi}(r',R',x_j'):=\Psi\left(r',R'-  \tfrac{\eps (r-r') }{2}, x_j'+ \tfrac{\eps r}{2} \right). \nonumber
\end{align}

From the estimates \eqref{phieps1jabs}, \eqref{Trans1} and Lemma \ref{lm5} together with the fact that $\|\Psi\|=\|\widetilde{\Psi}\|$, it follows that
\begin{align}
& \| \phi^{1j}_{\eps}(z,\underline{P}_j) \Psi \|^2 \leq (4z)^{-1} \int \mspace{-3mu} \d{r} \, \d{r'} \,  W(r,r')^2\;\|\Psi \|^2, \nonumber
\end{align}
where the right side is independent of $\underline{P}_j$. Hence, $\phi^{1j}_{\eps}(z,\underline{P}_j)$ extends to a bounded operator in $L^2(\R^3)$ and in view of \eqref{IntW^2V1} its norm is bounded by the right side of \eqref{Abs4}. 
This completes the proof of \eqref{Abs4}. The proof of \eqref{Abs5} is similar with the only exception that \eqref{IntW^2V1} has to be replaced by \eqref{IntW^2V2}. 
\end{proof}

After these preparations, we are in the position to prove:

\begin{proposition}\label{lm7}
 Let $z>0$, $i \in \{1,2\}$ and $j\in \{3,...,N \}$. Then $\phi^{ij}_{\eps}(z)$ converges in operator norm to  $\phi_0^{ij}(z)$ as $\eps\to 0$. If $\int \d{r}\, |r|^{2s} \left|V(r)\right| < \infty$ for some $s\in (0,1)$, then $\|\phi^{ij}_{\eps}(z) - \phi^{ij}_0(z)  \| = O(\eps^s)$ as $\eps\to 0$.  
\end{proposition}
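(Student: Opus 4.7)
My plan mirrors the proof of Proposition \ref{lm4}. By the reflection symmetry noted after \eqref{DefX^2j}, it suffices to treat $i=1$. The first step is reduction to compactly supported potentials: by the triangle inequality and Lemma \ref{lm6}, for every $k>0$,
\[
\|\phi^{1j}_\eps(z) - \phi^{1j}_0(z)\| \;\leq\; \|\phi^{1j}_{\eps,k}(z) - \phi^{1j}_{0,k}(z)\| \;+\; \sqrt{2/z}\;\|V\|_{L^1}^{1/2}\,\|V-V_k\|_{L^1}^{1/2},
\]
uniformly in $\eps\geq 0$. Both the qualitative statement and the rate therefore reduce to the corresponding claims with $V$ replaced by $V_k$, and for compactly supported $V_k$ the moment condition $\int |r|^{2s}|V_k(r)|\,dr<\infty$ is automatic.

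Assume now $\int|r|^{2s}|V(r)|\,dr<\infty$. For fixed $\underline{P}_j$ the kernel of $\phi^{1j}_\eps(z,\underline{P}_j)-\phi^{1j}_0(z,\underline{P}_j)$ equals
\[
u(r)\,\bigl[\,G^3_{z+Q_j}(X^{1j}_\eps)-G^3_{z+Q_j}(X^{1j}_0)\,\bigr]\,v(r'),
\]
and from \eqref{DefX^1j} we read off $X^{1j}_\eps-X^{1j}_0=\eps\,y(r,r')$ with $y(r,r')=(-(r-r')/2,\,r/2,\,-r'/2)$, so $|y(r,r')|\leq|r|+|r'|$. The Cauchy-Schwarz-in-$r'$ step and the $(R,R',x_j,x_j')$-substitution used in the proof of Lemma \ref{lm6} go through unchanged: the substitution canonicalizes $G^3_{z+Q_j}(X^{1j}_\eps)$ to $G^3_{z+Q_j}(R-R',R-x_j',x_j-R')$ and shifts the $\eps=0$ argument by $-\eps\,y(r,r')$. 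The outcome is the fiberwise bound
\[
\bigl\|\phi^{1j}_\eps(z,\underline{P}_j)-\phi^{1j}_0(z,\underline{P}_j)\bigr\|^{2}\;\leq\;\int |V(r)V(r')|\,\bigl\|F^{\,\eps y(r,r')}_{z+Q_j}\bigr\|^2\,dr\,dr',
\]
where $F^{\,w}_\lambda:L^2(\R^2)\to L^2(\R^2)$ denotes the operator with integral kernel $G^3_\lambda(R-R',R-x_j',x_j-R')-G^3_\lambda\bigl((R-R',R-x_j',x_j-R')+w\bigr)$.

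The key missing ingredient, and the main obstacle, is a quantitative Hölder-type Schur estimate
\[
\bigl\|F^{\,w}_\lambda\bigr\|\;\leq\;C(z,s)\,|w|^s,\qquad s\in(0,1),
\]
uniform for $\lambda\geq z$. Granted this, the elementary inequality $\int|V(r)V(r')|(|r|+|r'|)^{2s}\,dr\,dr'\leq 2^{1+2s}\|V\|_{L^1}\int|r|^{2s}|V(r)|\,dr$ together with $|y(r,r')|\leq|r|+|r'|$ yields $\|\phi^{1j}_\eps-\phi^{1j}_0\|=O(\eps^s)$, and the qualitative convergence (no rate) follows from the same reduction combined with translation continuity of $G^3_\lambda$ in $L^1$ for compactly supported $V_k$. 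A direct fundamental-theorem-of-calculus approach to the Schur bound fails because $|\nabla G^3_\lambda(x)|\sim|x|^{-2}$ is not integrable over the two-dimensional slice on which the Schur test of Lemma \ref{lm5} operates. Instead one interpolates between the \emph{$s=0$} bound $\|F^{\,w}_\lambda\|\leq 2\|F_\lambda\|\leq\lambda^{-1/2}$ of Lemma \ref{lm5} and a pointwise Hölder estimate on $G^3_\lambda$ of some exponent $s_0<1$, applied inside a Schur-test argument. The failure of this scheme at the endpoint is precisely the reason for the strict inequality $s<1$ in the hypothesis, in contrast with Proposition \ref{lm4} where boundedness of $G^1_\lambda$ permits $s=1$.
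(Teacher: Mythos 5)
Your structural skeleton is the same as the paper's: reduce to $i=1$, cut off the potential using Lemma~\ref{lm6}, apply Cauchy--Schwarz in $r'$ to obtain a fiberwise bound, and estimate the resulting translated-difference integral operator by a Schur test. The difference in the Cauchy--Schwarz step is cosmetic (the paper weights by $1+|r'|^{2s}$ and collects $I(V,s)^2$; yours is unweighted and collects $\|V\|_{L^1}\int|r|^{2s}|V|$ -- both work).

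The genuine gap is exactly the one you flag yourself: the Schur-type estimate $\|F^{\,w}_\lambda\|\leq C(z,s)|w|^s$ is not proved, and your suggested route does not close it. A ``pointwise H\"older estimate on $G^3_\lambda$'' cannot hold uniformly because $G^3_\lambda(x)\sim |x|^{-1}$ near $x=0$, and even if one restricts away from the origin, $|w|^{s_0}$ is not integrable over $\R^2$ without a compensating decay factor -- so it is not clear what ``interpolating'' between the uniform bound $\|F_\lambda\|\leq\lambda^{-1/2}$ and such a pointwise estimate would mean inside a Schur test. The paper instead proves (Lemma~\ref{lmA3}) the sharper $L^1$-modulus-of-continuity bound
\[
\int_{\R^{2}}\left|G^3_z(x+y,0)-G^3_z(x,0)\right|\d{x}\;\leq\;C\,(1+|\ln|y||)\,|y|,
\]
from which $|y|^s$ for any $s<1$ follows trivially for $|y|\leq 1$, and combines it with global $L^1$-boundedness of $G^3_z(\cdot,0)$ for $|y|\geq 1$. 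The proof of this inequality uses the heat-kernel representation \eqref{DefGreen}, the rescaling $x\mapsto x/\sqrt{4t}$, an elementary $\min(1,|y|/\sqrt t)$ bound from the fundamental theorem of calculus on the Gaussian, and then an explicit $t$-integral producing the $|y|\ln|y|$ term. Alternatively, one can reach the same $(1+|\ln|y||)|y|$ bound by splitting $\R^2$ at scale $|y|$ and using $|\nabla G^3_z(x,0)|\lesssim |x|^{-2}$ on the far region together with local $L^1$-integrability of $G^3_z(\cdot,0)$ near the origin. Either way, this estimate is the non-trivial content of the proposition, and without it (or an explicit substitute) the argument is incomplete.

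One further omission: you state the fiberwise bound and the translation structure $X^{1j}_\eps - X^{1j}_0 = \eps\, y(r,r')$, but do not mention the intermediate-point decomposition the paper uses (splitting $X_\eps \to X_{\eps,0}\to X_0$ and invoking Lemma~\ref{lmA2} to drop $Q_j$ and project to a slice), which is what reduces the Schur supremum to exactly the one-variable translate $\int_{\R^2}|G^3_z(x+\eps y,0)-G^3_z(x,0)|\d{x}$ that Lemma~\ref{lmA3} handles. This step is needed to make ``applied inside a Schur-test argument'' precise, and it is also where the factor $\sup_{|y|\leq|r|+|r'|}$ comes from.
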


\begin{proof} Again, it suffices to consider the case $i=1$. Let us first assume that
\begin{align}
I(V,s):=\int \d{r}\, (1+|r|^{2s}) \left|V(r)\right| < \infty  \label{I(V,s)}
\end{align}
for some $s\in (0,1)$. Observe that $\phi^{1j}_{\eps}(z)- \phi^{1j}_0(z)$ acts pointwise in  $\underline{P}_j$ by the kernel
\begin{align}
u(r) \, \left( G_{z+Q_j}^{3}\left( X_{\eps}\right) - G_{z+Q_j}^{3}\left( X_{0}\right) \right) v(r'),
\end{align}
where $X_{\eps}:=X_{\eps}^{1j}$ for short. Using the Cauchy-Schwarz inequality in the $r'$-integration, we obtain for fixed $\Psi \in L^2(\R^3)$ that
\begin{align}
&\left\| \left(\phi^{1j}_{\eps}(z,\underline{P}_j )-\phi^{1j}_0(z, \underline{P}_j)\right)\Psi\right\|^2 \nonumber \\
 &\;\;\;= \int \mspace{-3mu} \d{r}\, |V(r)|  \int \mspace{-3mu} \d{R}\, \d{x_{j}} \left| \int \mspace{-3mu} \d{r'} \,  v(r') \int \mspace{-3mu} \d{R'} \,  \d{x_{j}'} \; \left(G^{3}_{z+Q_j}(X_{\eps})-G^{3}_{z+Q_j}(X_{0})\right) \Psi(X') \right|^2 \nonumber \\
&\;\;\; \leq  I(V,s) \int \mspace{-3mu} \d{r} \, \d{r'}  \, \frac{|V(r)|}{1+|r'|^{2s}} \int \mspace{-3mu} \d{R}\, \d{x_{j}} \left| \int \mspace{-3mu} \d{R'} \, \d{x_{j}'} \,  \left(G^{3}_{z+Q_j}(X_{\eps})-G^{3}_{z+Q_j}(X_{0})\right) \Psi(X') \right|^2\mspace{-5mu},  \label{phi1jdiffabs}
\end{align}
where $X':=(r',R',x_j')$ for brevity. 

For a further estimate of \eqref{phi1jdiffabs}, we consider for fixed $r,r' \in \R$, $Q_j \geq 0$ and $\eps>0$ the integral operator
$F_{r,r',Q_j,\eps}$ in $L^2\left(\R^2,\d{(R,x_j)}\right)$ that is defined in terms of the kernel
\begin{align}
G^{3}_{z+Q_j}(X_{\eps})-G^{3}_{z+Q_j}(X_{0}) &=
G^{3}_{z+Q_j}\left(R-R'-\tfrac{\eps (r-r')}{2}, R+\tfrac{\eps r}{2}-x_j',x_j-R'-\tfrac{\eps r'}{2} \right) \nonumber\\&\;
\;\;-G^{3}_{z+Q_j}\left( R-R', R-x_{j}', x_{j}-R'\right).\nonumber
\end{align}
We are going to estimate $\|F_{r,r',Q_j,\eps} \|$ with the help of a Schur test. To this end, we introduce for $\eps\geq 0$ the intermediate point
$$
       X_{\eps,0}:=\left(R-R'-\tfrac{\eps (r-r')}{2}, R-x_{j}', x_{j}-R'\right).
$$
Using the properties of the Green's function (see Lemma \ref{lmA2}), 
\begin{align}
&\sup_{R,x_j}\left( \int \mspace{-3mu} \d{R'}\,\d{x_{j}'} \left| G^{3}_{z+Q_j}(X_{\eps})-G^{3}_{z+Q_j}(X_{0}) \right|  \right) \nonumber \\ 
&  \leq  \sup_{R,x_j}\left( \int \mspace{-3mu} \d{R'}\,\d{x_{j}'} \left| G^{3}_{z}(X_{\eps})-G^{3}_{z}(X_{\eps,0}) \right|  \right) 
+ \sup_{R,x_j}\left( \int \mspace{-3mu} \d{R'}\,\d{x_{j}'} \left| G^{3}_{z}(X_{\eps,0})-G^{3}_{z}(X_{0}) \right|  \right) \nonumber \\
& \leq \sup_{R,x_j}\left( \int \mspace{-3mu} \d{R'}\,\d{x_{j}'} \left| G^{3}_{z}\left( 0, R-x_{j}' +\tfrac{\eps r}{2}, x_{j}-R' -\tfrac{\eps r'}{2} \right) - G^{3}_{z}\left( 0, R-x_{j}', x_{j}-R'\right) \right|  \right) \nonumber  \\
& \quad+ \sup_{R,x_j}\mspace{-2mu}\left( \int \mspace{-4mu} \d{R'}\,\d{x_{j}'} \left| G^{3}_{z}\mspace{-2mu} \left(\mspace{-3mu} R\mspace{-1mu}-\mspace{-1mu}R' - \tfrac{\eps(r-r') }{2}, R-x_{j}',\mspace{-1mu} 0\mspace{-2mu}\right)\mspace{-3mu}-\mspace{-1mu} G^{3}_{z}\mspace{-2mu} \left(R\mspace{-1mu}-\mspace{-1mu}R', R-x_{j}', \mspace{-1mu}0\mspace{-2mu}\right) \right|  \right) \nonumber \\
& \leq  \int \mspace{-3mu} \d{R'}\,\d{x_{j}'} \left| G^{3}_{z}\left( 0, x_{j}' + \tfrac{\eps r}{2}, R' -\tfrac{\eps r'}{2}\right) - G^{3}_{z}\left( 0, x_{j}', R'\right) \right|   \nonumber  \\
& \quad+\;  \int \mspace{-3mu} \d{R'}\,\d{x_{j}'} \left|      \,G^{3}_{z}\mspace{-2mu} \left(R' - \tfrac{\eps (r-r') }{2}, x_{j}', 0\right)\mspace{-3mu}-\mspace{-1mu} G^{3}_{z}\mspace{-2mu} \left(R',x_{j}', 0\right) \right|, \label{SchurTest2}
\end{align} 
where we substituted $(R-x_j') \rightarrow x_j', (x_j-R') \rightarrow R'$ in the first and $R-R' \rightarrow R', (R-x_j') \rightarrow x_j'$ in the second integral. From \eqref{SchurTest2} and a similar estimate with the roles of $(R,x_j)$ and $(R',x'_j)$ interchanged, we conclude, using the Schur test, that
\begin{align}
\| F_{r,r',Q_j,\eps} \| \leq 2 \sup_{\left|y\right| \leq |r| + |r'|}\left(\; \int\limits_{\R^{2}} \mspace{-3mu} \d{x}\, \left|G_z^3(x+\eps y/2,0) - G_z^3(x,0)  \right| \right). \label{Fnormabs}
\end{align}
Hence, Lemma \ref{lmA3} implies that $\| F_{r,r',Q_j,\eps} \|^2 \leq C \eps^{2s}(|r|^{2s} + |r'|^{2s})$ for some constant $C=C(s,z)>0$, which does not depend on $r,r',Q_j$ and $\eps$. 
Using this in \eqref{phi1jdiffabs} results in
 \begin{align}
\left\| \left(\phi^{1j}_{\eps}(z,\underline{P}_j )-\phi^{1j}_0(z, \underline{P}_j)\right)\Psi\right\|^2 &\leq  I(V,s) \int \mspace{-3mu} \d{r} \, \d{r'}  \, \frac{|V(r)|}{1+|r'|^{2s}} \left\| F_{r,r',Q_j,\eps}\Psi(r',\,\cdot\,)\right\|^2 \nonumber \\
& \leq C \eps^{2s} I(V,s)^2 \, \|\Psi \|^2, \nonumber
\end{align}
where $I(V,s)$ is defined by \eqref{I(V,s)}. 
As the right side is independent of $\underline{P}_j$, this proves the second part of the lemma. 

If the additional assumption \eqref{I(V,s)} is not satisfied for some $s>0$, then the proposition follows, by an approximation argument, from Lemma \ref{lm6}  and from the fact that $I(V_k,s)<\infty$ for finite $k$.
\end{proof}


\subsection{The limit of $\phi_{\eps}^{ij}(z) \; (3 \leq i<j\leq N)$} 
\label{sec43}

So far, we have discussed all the operators $\phi_{\eps}^{ij}(z)$ that occur in the case $N\leq 3$.
If $N> 3$, there are, in addition, the contributions from $\phi^{ij}_{\eps}(z)$ with $\{i,j\} \subset \{3,\ldots\}$. Given such a pair $(i,j)$, we determine the kernel of $\phi^{ij}_{\eps}(z)$ from \eqref{Defphiepsij}. 
After evaluating the $\delta$-distributions in $x_i'$ and $x_j'$ followed by the substitutions  $x_1' \rightarrow x_i'$ and $x_2' \rightarrow x_j'$, we see that $\phi^{ij}_{\eps}(z)$ has the kernel
\begin{align}
 u(r)\,  G_{z}^{N} \mspace{-2mu}\left(X^{ij}_{\eps},x_3\mspace{-2mu}-\mspace{-2mu}x_3',...\widehat{x_{i}\mspace{-2mu}-\mspace{-2mu}x_{i}'}...\widehat{x_{j}\mspace{-2mu}-\mspace{-2mu}x_{j}'}...,x_{N}\mspace{-2mu}-\mspace{-2mu}x_N'\right)\mspace{-2mu}v\mspace{-2mu}\left(r'\right), \label{phiepsijkern}
\end{align}
where we made use of the shorthand notation
\begin{align}
X^{ij}_{\eps}:=\left(R - \tfrac{\eps r}{2}-x_i', R + \tfrac{\eps r}{2} -x_j', x_{i}-R'+ \tfrac{\eps r'}{2}, x_{j}-R'- \tfrac{\eps r'}{2} \right).\label{DefX^ij}
\end{align}
After a Fourier transform in $(x_1,...\widehat{x_{i}}...\widehat{x_{j}}...,x_N)$, property (vi) of Lemma \ref{lmA1} implies that $\phi^{ij}_{\eps}(z)$ acts pointwise in the associated momentum variable $\underline{P}_{ij}=(P_3,...\widehat{P_{i}}...\widehat{P_{j}}...,P_{N})$ by an operator  $\phi^{ij}_{\eps}(z,\underline{P}_{ij})$ with kernel
\begin{align}
 u(r)\,  G_{z+Q_{ij}}^{4} \mspace{-2mu}\left(X^{ij}_{\eps}\right)\mspace{-2mu}v\mspace{-2mu}\left(r'\right), \qquad \quad Q_{ij}:=\sum_{\substack{l= 3 \\ l \notin \{i,j\}}}^N \mspace{-5mu} P_l^2. \label{phiepsijkernP}
\end{align}
In Lemma \ref{lm9}, we will see that $\phi^{ij}_{\eps}(z)$ extends to a bounded operator  from $\widetilde{\HH}$
to $L^2(\R^N)$. Moreover, this is still true for the (formal) limit operator $\phi^{ij}_{0}(z)$ which, for fixed $\underline{P}_{ij} \in \R^{N-4}$, is defined by the kernel \eqref{phiepsijkernP} with $\eps=0$. The norm bound in Lemma \ref{lm9} will be a consequence of the following lemma:

\begin{lemma}\label{lm8}
Let $z>0$. Then the operator $B_z: L^2(\R^3) \rightarrow L^2(\R^3)$ defined by the integral
kernel $B(w,x,y;w',x',y')= G_{z}^4\left(w-x',w-y',x-w',y-w'\right)$ is bounded with
\begin{align}
\| B_z \| \leq  (2\sqrt{z})^{-1}. \nonumber
\end{align}
\end{lemma}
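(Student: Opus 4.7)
The plan is to follow the strategy used in the proof of Lemma \ref{lm5}, adapted to one more dimension. The argument rests on the Schur test, a change of variables that removes most of the dependence on the row variables, pointwise monotonicity of $G_z^4$ in the Euclidean norm of its argument, and finally the partial-integration identity \eqref{Greenpartint} for Green's functions.

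First I would observe that the kernel is symmetric: since $G_z^4$ is even in each coordinate (indeed, $G_z^4(x) = G_z^4(|x|)$), one checks that $B(w,x,y;w',x',y') = B(w',x',y';w,x,y)$ up to a permutation of the four arguments of $G_z^4$, which leaves its value unchanged. Hence the Schur test reduces the bound on $\|B_z\|$ to controlling
$$
   M := \sup_{w,x,y \in \R}\; \int_{\R^3} G_z^4\bigl(w-x',w-y',x-w',y-w'\bigr)\, \d{w'}\,\d{x'}\,\d{y'}.
$$

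Next I would perform the unit-Jacobian substitution $x' \mapsto w - x'$, $y' \mapsto w - y'$, $w' \mapsto x - w'$. This turns the integrand into $G_z^4(x', y', w', y - x + w')$, which depends on the row variables only through $t := y - x$. Since $G_z^4$ is a strictly decreasing function of the Euclidean norm of its argument and
$$
   |(x',y',w',t+w')|^2 \;=\; |x'|^2 + |y'|^2 + |w'|^2 + |t+w'|^2 \;\geq\; |(x',y',w',0)|^2,
$$
the pointwise bound $G_z^4(x',y',w',t+w') \leq G_z^4(x',y',w',0)$ holds for every $t$. Thus
$$
   M \;\leq\; \int_{\R^3} G_z^4(x',y',w',0)\, \d{x'}\,\d{y'}\,\d{w'}.
$$

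Finally I would invoke the partial-integration identity for the Green's function (the same \eqref{Greenpartint} used in Lemma \ref{lm5}, now in dimension four), which gives $\int_{\R^3} G_z^4(x_1,x_2,x_3,0)\, \d{x_1}\,\d{x_2}\,\d{x_3} = G_z^1(0) = (2\sqrt{z})^{-1}$ by \eqref{G^1_z}. Combining with the Schur test yields $\|B_z\| \leq (2\sqrt{z})^{-1}$. I do not anticipate a serious obstacle: the only step that needs care is keeping track of the four variables in the substitution, which is mechanical, and verifying that \eqref{Greenpartint} is indeed stated in the form needed for $n=4$ (its statement in Appendix \ref{AppendixA} should cover this).
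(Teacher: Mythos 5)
Your proposal is correct and follows essentially the same route as the paper's proof: Schur test, the unit-Jacobian substitution that reduces the row dependence to $t = y - x$, pointwise monotonicity of $G_z^4$ in $|x|$ to drop the last argument, and then \eqref{Greenpartint} with $d_1 = 1$, $d_2 = 3$ to evaluate the integral as $G_z^1(0) = (2\sqrt{z})^{-1}$. The only addition is that you make explicit the symmetry of the kernel needed to justify controlling only row sums in the Schur test, which the paper leaves implicit.
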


\begin{proof}
The Schur test and the substitutions $w-x' \rightarrow x'$, $w-y' \rightarrow y'$ , $x-w'\rightarrow w'$ yield
\begin{align}
\| B_z \| &\leq \sup_{w,x,y}\left( \int \mspace{-3mu}  \d{w'} \, \textup{d}x' \, \d{y'} \; G^{4}_{z}\mspace{-1mu}\left(w-x',w-y',x-w',y-w'\right)  \right)\nonumber \\
&\leq\sup_{x,y}\left( \int \mspace{-3mu}  \d{w'} \, \textup{d}x' \, \d{y'}   \; G^{4}_{z}\mspace{-1mu}\left(x',y',w',w'-x+y  \right)\right). 
\end{align}
As $G_{z}^4(x)$ is decreasing as a function of $|x|$ (see Lemma \ref{lmA1} (v)), we can continue estimating
\begin{align}
\| B_z \| \leq \int \mspace{-3mu}  \d{w'} \, \textup{d}x' \, \d{y'}   \; G^{4}_{z}\mspace{-1mu}\left(x',y',w',0  \right)=G^1_{z}(0)=(2\sqrt{z})^{-1},   \label{B_znormabs}
\end{align}
where we used \eqref{Greenpartint} to evaluate the integral.
\end{proof}

As in the previous sections, the first step in proving convergence is a reduction to potentials with compact support. For this reason, we introduce for $k>0$ and $\eps \geq 0$ the operator $\phi^{ij}_{\eps,k}(z)$ which is nothing but the operator $\phi^{ij}_{\eps}(z)$ with the potential $V$ replaced by the cutoff potential $V_k$ from \eqref{cutpot}. Hence, for fixed $\underline{P}_{ij}$, the kernel of $\phi^{ij}_{\eps,k}(z,\underline{P}_{ij})$ is given by \eqref{phiepsijkernP} with $u_k$ instead of $u$ and $v_k$ instead of $v$. The next lemma is the analogue of Lemma \ref{lm6}:

\begin{lemma}\label{lm9}
Let $z>0$, $\eps\geq 0$ and $3 \leq i < j \leq N$. Then $\phi^{ij}_{\eps}(z)$ extends to a bounded operator from $\widetilde{\HH}$ to $L^2(\R^{N})$, which satisfies the norm estimate
\begin{align}
\| \phi^{ij}_{\eps}(z) \| &\leq (2\sqrt{z})^{-1} \, \| V \|_{L^1}. \label{Abs6}
\end{align}
Furthermore, for any $k>0$, we have the estimate
\begin{align}
\|  \phi^{ij}_{\eps}(z) - \phi^{ij}_{\eps,k}(z) \| &\leq (\sqrt{2z})^{-1}
\| V \|_{L^1}^{1/2} \| V - V_k \|_{L^1}^{1/2}. \label{Abs7}    
\end{align}
\end{lemma}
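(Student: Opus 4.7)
The plan is to mimic the proof of Lemma~\ref{lm6} step by step, using the four-dimensional Green's function $G^4_{z+Q_{ij}}$ and Lemma~\ref{lm8} in place of the three-dimensional Green's function $G^3_{z+Q_j}$ and Lemma~\ref{lm5}. As in Lemma~\ref{lm6}, I will treat \eqref{Abs6} and \eqref{Abs7} simultaneously by observing that, for fixed $\underline{P}_{ij}$, the relevant kernel has the form $W(r,r')\,G^4_{z+Q_{ij}}(X^{ij}_\eps)$, with $W(r,r')=u(r)v(r')$ in the case of \eqref{Abs6} and $W(r,r')=u(r)v(r')-u_k(r)v_k(r')$ in the case of \eqref{Abs7}. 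The identity \eqref{Videntity} then gives $\int W(r,r')^2\,\d r\,\d r' = \|V\|_{L^1}^2$ for \eqref{Abs6} and $\leq 2\|V\|_{L^1}\|V-V_k\|_{L^1}$ for \eqref{Abs7}.

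The next step is the Cauchy--Schwarz inequality in the $r'$-integration, exactly as in \eqref{phieps1jabs}, which yields
$$\|\phi^{ij}_\eps(z,\underline{P}_{ij})\Psi\|^2 \leq \left(\int W(r,r')^2\,\d r\,\d r'\right) \sup_{r} \int \d r'\, \d R\, \d x_i\, \d x_j \left| \int \d R'\, \d x_i'\, \d x_j'\; G^4_{z+Q_{ij}}(X^{ij}_\eps)\,\Psi(r',R',x_i',x_j') \right|^2.$$
The crux is to reduce the inner expression to a quantity controlled by Lemma~\ref{lm8}. For fixed $r,r'\in\R$ I will apply the measure-preserving substitutions $R\mapsto R+\eps r/2$ and $x_j\mapsto x_j+\eps r'$ in the outer integration, together with $R'\mapsto R'+\eps r'/2$ and $x_j'\mapsto x_j'+\eps r$ in the inner integration. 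Tracking the four components in \eqref{DefX^ij} in turn, these substitutions transform $X^{ij}_\eps$ into $(R-x_i',\,R-x_j',\,x_i-R',\,x_j-R')$, which is exactly the kernel of $B_{z+Q_{ij}}$ from Lemma~\ref{lm8} (with $w=R$, $x=x_i$, $y=x_j$ and corresponding primed identifications). Hence the inner integral equals $\|B_{z+Q_{ij}}\widetilde\Psi(r',\,\cdot\,)\|^2$ for a shifted $\widetilde\Psi$ with $\|\widetilde\Psi\|=\|\Psi\|$.

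Invoking Lemma~\ref{lm8}, $\|B_{z+Q_{ij}}\|\leq(2\sqrt{z+Q_{ij}})^{-1}\leq(2\sqrt{z})^{-1}$, and combining the estimates yields $\|\phi^{ij}_\eps(z,\underline{P}_{ij})\Psi\|^2 \leq (4z)^{-1}\bigl(\int W(r,r')^2\,\d r\,\d r'\bigr)\|\Psi\|^2$, a bound independent of $\underline{P}_{ij}$. Plugging in the two values of $\int W^2$ delivers \eqref{Abs6} and \eqref{Abs7} respectively, after undoing the Fourier transform. The one genuinely delicate point is the verification that the chosen four substitutions really do reduce $X^{ij}_\eps$ to the argument pattern required by Lemma~\ref{lm8}; this bookkeeping is the main (though modest) obstacle, as everything else is a direct transcription of the proof of Lemma~\ref{lm6}.
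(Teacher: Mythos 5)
Your proposal is correct and takes exactly the approach the paper intends: the paper's proof of Lemma~\ref{lm9} simply refers back to the argument of Lemma~\ref{lm6} with $F_z$ replaced by $B_z\in\LL(L^2(\R^3))$ and Lemma~\ref{lm5} replaced by Lemma~\ref{lm8}. Your substitutions do indeed reduce $X^{ij}_\eps$ to $(R-x_i',\,R-x_j',\,x_i-R',\,x_j-R')$ while preserving $\|\widetilde\Psi\|=\|\Psi\|$, so the bookkeeping checks out.
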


\begin{proof} 
The proof follows the line of arguments in the proof of Lemma~\ref{lm6} with very few adjustments. The role of $F_z$ in the proof of Lemma \ref{lm6} is now played by $B_z \in \LL(L^2(\R^3))$, for which we use the estimate from Lemma \ref{lm8}.
\end{proof}

As the last step before the proof of Theorem \ref{theo1}, we show convergence of $\phi^{ij}_{\eps}(z)$:

\begin{proposition}\label{lm10}
Let $z>0$ and $3 \leq i < j \leq N$. Then $\phi^{ij}_{\eps}(z)$ converges in operator norm to  $\phi_0^{ij}(z)$ as $\eps\to 0$.  If $\int \d{r}\, |r|^{2s} \left|V(r)\right| < \infty$ for some $s\in(0,1)$, then $\|\phi^{ij}_{\eps}(z) - \phi^{ij}_0(z)  \| = O(\eps^s)$ as $\eps\to 0$.
\end{proposition}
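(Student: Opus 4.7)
The plan is to follow the argument of Proposition~\ref{lm7} with only minor changes to accommodate the four-dimensional Green's function $G^{4}_{z+Q_{ij}}$ and the extra coordinate in $X^{ij}_{\eps}$. First suppose $I(V,s):=\int(1+|r|^{2s})|V(r)|\,\d{r}<\infty$ for some $s\in(0,1)$. From \eqref{phiepsijkernP} and Cauchy--Schwarz in $r'$ with weight $(1+|r'|^{2s})^{-1}$, exactly as in \eqref{phi1jdiffabs}, we obtain
\begin{align*}
\bigl\|\bigl(\phi^{ij}_{\eps}(z,\underline{P}_{ij})-\phi^{ij}_{0}(z,\underline{P}_{ij})\bigr)\Psi\bigr\|^{2}
\leq I(V,s)\int \d{r}\,\d{r'}\,\frac{|V(r)|}{1+|r'|^{2s}}\,\bigl\|F_{r,r',Q_{ij},\eps}\Psi(r',\,\cdot\,)\bigr\|^{2},
\end{align*}
where $F_{r,r',Q_{ij},\eps}\in\LL(L^{2}(\R^{3}))$ is the integral operator from $(R',x_{i}',x_{j}')$ to $(R,x_{i},x_{j})$ whose kernel is the difference $G^{4}_{z+Q_{ij}}(X^{ij}_{\eps})-G^{4}_{z+Q_{ij}}(X^{ij}_{0})$.

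The Schur-type bound $\|F_{r,r',Q_{ij},\eps}\|^{2}\leq C\eps^{2s}(|r|^{2s}+|r'|^{2s})$ with $C$ independent of $r,r',Q_{ij},\eps$ is the main point. Because the first two components of $X^{ij}_{\eps}$ carry $\eps r$-shifts while the last two carry $\eps r'$-shifts, I telescope through the intermediate point
\begin{align*}
X^{ij}_{\eps,0}:=\bigl(R-\tfrac{\eps r}{2}-x_{i}',\,R+\tfrac{\eps r}{2}-x_{j}',\,x_{i}-R',\,x_{j}-R'\bigr),
\end{align*}
so that $G^{4}_{z+Q_{ij}}(X^{ij}_{\eps})-G^{4}_{z+Q_{ij}}(X^{ij}_{0})$ splits into two summands, each depending on only one of the two shifts. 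For each summand the estimate proceeds as in \eqref{SchurTest2}: I use monotonicity of $G^{4}_{z+Q_{ij}}$ in the modulus of its argument (Lemma~\ref{lmA1}) to replace $z+Q_{ij}$ by $z$ and zero out the coordinate free of the $\eps$-shift, then perform substitutions to isolate the translated Green's function. The resulting bound has the form
\begin{align*}
\sup_{|y|\leq|r|+|r'|}\int_{\R^{3}}\bigl|G^{4}_{z}(x+\tfrac{\eps y}{2},0)-G^{4}_{z}(x,0)\bigr|\,\d{x},
\end{align*}
which by the four-dimensional analogue of Lemma~\ref{lmA3} is $O(\eps^{s}(|r|+|r'|)^{s})$. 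A symmetric estimate with the roles of $(R,x_{i},x_{j})$ and $(R',x_{i}',x_{j}')$ interchanged completes the Schur bound for $F_{r,r',Q_{ij},\eps}$.

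Combining this with $\int|V(r)|\,\d{r}\leq I(V,s)$ and $\int|r'|^{2s}(1+|r'|^{2s})^{-1}|V(r')|\,\d{r'}\leq I(V,s)$ gives $\|\phi^{ij}_{\eps}(z)-\phi^{ij}_{0}(z)\|^{2}\leq C\,I(V,s)^{2}\eps^{2s}$, a bound uniform in $\underline{P}_{ij}$, which establishes the claimed rate. For general $V\in L^{1}\cap L^{2}(\R)$ without the moment assumption, I would argue as in the final sentences of Proposition~\ref{lm7}: by the uniform estimate \eqref{Abs7} of Lemma~\ref{lm9} it suffices to prove $\phi^{ij}_{\eps,k}(z)\to\phi^{ij}_{0,k}(z)$ in norm for each fixed $k>0$, which follows from the previous paragraph since $I(V_{k},s)<\infty$ automatically for compactly supported $V_{k}$. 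The main obstacle is organising the Schur-test computation for $F_{r,r',Q_{ij},\eps}$: because all four entries of $X^{ij}_{\eps}$ are entangled with $\eps$-shifts and couple non-trivially to the integration variables, the telescoping via $X^{ij}_{\eps,0}$ and the ensuing substitutions must be arranged so that each resulting integral reduces to a one-parameter translation of $G^{4}_{z}$ along a single coordinate, where an $L^{1}$-Hölder estimate of Lemma~\ref{lmA3} type applies.
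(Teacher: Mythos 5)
Your overall architecture is correct and mirrors the paper's: Cauchy--Schwarz in $r'$ with weight $(1+|r'|^{2s})^{-1}$, a Schur bound for the auxiliary operator, the $d=4$ case of Lemma~\ref{lmA3}, and the cutoff reduction via Lemma~\ref{lm9}. However, your specific choice of intermediate point is where the argument breaks down, and the telescoping you propose does not actually lead to the claimed bound of the form $\int_{\R^{3}}|G^{4}_{z}(x+\tfrac{\eps y}{2},0)-G^{4}_{z}(x,0)|\,\d{x}$.

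Consider your first summand $G^{4}_{z+q}(X^{ij}_{\eps})-G^{4}_{z+q}(X^{ij}_{\eps,0})$. With your choice $X^{ij}_{\eps,0}=(R-\tfrac{\eps r}{2}-x_i',\,R+\tfrac{\eps r}{2}-x_j',\,x_i-R',\,x_j-R')$, the two arguments agree in coordinates $1$ and $2$ and differ in coordinates $3$ and $4$. Lemma~\ref{lmA2} allows you to zero out only the \emph{agreeing} coordinates, i.e.\ coordinate $1$ and/or coordinate $2$. But coordinate $1$ is the only one depending on $x_i'$ and coordinate $2$ the only one depending on $x_j'$; the two remaining coordinates $3$ and $4$ both depend on the single variable $R'$. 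So after zeroing, say, coordinate $1$, the integrand in the Schur sup $\sup_{R,x_i,x_j}\int\d{R'}\,\d{x_i'}\,\d{x_j'}\,|\cdot|$ no longer depends on $x_i'$, and the $x_i'$-integral diverges. There is no choice of zeroed coordinate that fixes this, precisely because you separated the $r$-shift and the $r'$-shift: the $r'$-shift lives in the two $R'$-dependent slots, so it cannot be captured by three independent one-dimensional substitutions once one of $x_i', x_j'$ has been wasted.

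The paper avoids this by choosing the intermediate point $X_{\eps,0}=(R-x_i',\,R-x_j',\,x_i-R',\,x_j-R'-\tfrac{\eps r'}{2})$, which shifts \emph{only} coordinate $4$. Then in the first summand the unique agreeing coordinate is coordinate $4$ (which is $R'$-dependent); zeroing it leaves coordinates $1,2,3$ which depend on $x_i',x_j',R'$ respectively, each a distinct integration variable, and the substitutions close up cleanly. The second summand then has a single shifted coordinate (coordinate $4$) and three agreeing ones, of which the paper zeroes only coordinate $3$ (not all three agreeing ones, which would again lose integration variables). The lesson is that the intermediate point should not be chosen to separate the $\eps r$-shift from the $\eps r'$-shift, but rather so that each summand has exactly one agreeing coordinate among those coupled to $R'$, making the Schur substitutions compatible with the three available integration variables. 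As written, your proposal asserts a reduction to the $G^4_z$ translate that does not follow from your $X^{ij}_{\eps,0}$.
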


\begin{proof} 
As in the proof of Proposition~\ref{lm7}, we first assume that the additional condition \eqref{I(V,s)} is satisfied for some $s\in (0,1)$. In the following, we will make use of the shorthand notations  $X_{\eps}:=X_{\eps}^{ij}$ for $\eps \geq 0$ and $q:=Q_{ij}$.
Note that \eqref{phiepsijkernP} implies that $\phi^{ij}_{\eps}(z)-\phi_0^{ij}(z)$ acts pointwise in $\underline{P}_{ij}$ by an operator with kernel
\begin{align}
u(r) \left(G_{z+q}^{4}\left(X_{\eps}\right)-G_{z+q}^{4}\left(X_{0}\right) \right) v(r'). \label{phiijdiffkern}
\end{align}
Hence, similarly to \eqref{phi1jdiffabs}, Cauchy-Schwarz inequality yields that

\begin{align}
&\left\| \left(\phi^{ij}_{\eps}(z,\underline{P}_{ij} )-\phi^{ij}_0(z, \underline{P}_{ij})\right)\Psi\right\|^2 \nonumber \\
&\;\leq \int \mspace{-3mu} \d{r} \, \d{r'}  \; \frac{ I(V,s) \, |V(r)|}{1+|r'|^{2s}} \int \mspace{-3mu} \d{R}\, \d{x_i} \, \d{x_{j}} \left| \int \mspace{-3mu} \d{R'} \, \d{x_{i}'} \, \d{x_{j}'} \left(G^{4}_{z+q}(X_{\eps})-G^{4}_{z+q}(X_{0})\right) \Psi(X') \right|^2\mspace{-5mu},  \label{phiijdiffabs}
\end{align}
where $X':=(r',R',x_i',x_j')$ for short. 

For a further estimate of \eqref{phiijdiffabs}, we analyse for fixed $r,r' \in \mathbb{R}$, $q \geq 0$ and $\eps>0$ the integral operator $B_{\eps,r,r'\mspace{-3mu},\mspace{1mu} q}$ in $L^2(\R^3,\d{(R,x_i,x_j)})$ with kernel
\begin{align}
G^{4}_{z+q}(X_{\eps})-G^{4}_{z+q}(X_{0})&=G^{4}_{z+q}\mspace{-3mu}\left(R - \tfrac{\eps r}{2}-x_i', R + \tfrac{\eps r}{2} -x_j', x_{i}-R'+ \tfrac{\eps r'}{2}, x_{j}-R'- \tfrac{\eps r'}{2}\right) \nonumber \\
&\;\;\;-G^{4}_{z+q}\left(R -x_i', R  -x_j', x_{i}-R', x_{j}-R'\right). \nonumber
\end{align}
In order to estimate the operator norm of $B_{\eps,r,r'\mspace{-2mu},q}$,
we introduce for $\eps \geq 0$ the intermediate point
\begin{align}
X_{\eps,0}:=\left(R -x_i', R -x_j', x_{i}-R', x_{j}-R'- \tfrac{\eps r'}{2} \right).\nonumber
\end{align}
Properties of the Green's function (see Lemma \ref{lmA2}) in combination with substitutions yield the estimate
\begin{align}
& \sup_{R,x_i,x_j}\mspace{-3mu}\left( \int \mspace{-3mu} \d{R'}\,\d{x_{i}'}\,\textup{d}x_{j}' \left| G^{4}_{z+q}(X_{\eps})-G^{4}_{z+q}(X_{0}) \right|  \right) \nonumber \\ 
&\leq \sup_{R,x_i,x_j}\mspace{-3mu}\left( \int \mspace{-3mu} \d{R'}\,\d{x_{i}'}\,\textup{d}x_{j}' \left|  G^{4}_{z}(X_{\eps})-G^{4}_{z}(X_{\eps,0})  \right| +  \int \mspace{-3mu} \d{R'}\,\d{x_{i}'}\,\textup{d}x_{j}' \left| G^{4}_{z}(X_{\eps,0})-G^{4}_{z}(X_{0}) \right|  \right) \nonumber \\
&\leq \sup_{x_i}\mspace{-3mu}\left( \int \mspace{-3mu} \d{R'}\,\d{x_{i}'}\,\textup{d}x_{j}' \left| G^{4}_{z}(x_i'-\tfrac{\eps r}{2}, x_j'+\tfrac{\eps r}{2},x_i-R'+\tfrac{\eps r'}{2}, 0)-G^{4}_{z}(x_i',x_j',x_i-R', 0) \right| \right) \nonumber\\
&\;\;\;+\mspace{-3mu}\sup_{x_j}\mspace{-3mu}\left( \int \mspace{-3mu} \d{R'}\,\d{x_{i}'}\,\textup{d}x_{j}' \left| G^{4}_{z}(x_i',x_j',0,x_j-R'-\tfrac{\eps r'}{2})\mspace{-1mu}-\mspace{-1mu}G^{4}_{z}(x_i',x_j',0, x_j-R') \right| \right) \nonumber \\
&\leq \int \mspace{-3mu} \d{R'}\,\d{x_{i}'}\,\textup{d}x_{j}' \left| G^{4}_{z}(x_i'-\tfrac{\eps r}{2}, x_j'+\tfrac{\eps r}{2},R'+\tfrac{\eps r'}{2}, 0)-G^{4}_{z}(x_i',x_j',R', 0) \right| \nonumber\\
&\;\;\;+\int \mspace{-3mu} \d{R'}\,\d{x_{i}'}\,\textup{d}x_{j}' \left| G^{4}_{z}(x_i',x_j',R'-\tfrac{\eps r'}{2},0)\mspace{-1mu}-\mspace{-1mu}G^{4}_{z}(x_i',x_j',R',0) \right| \nonumber \\
& \leq 2 \sup_{\left|y\right| \leq |r| + |r'|}\left(\; \int\limits_{\R^{3}} \mspace{-3mu} \d{x}\, \left|G_z^4(x+\eps y,0) - G_z^4(x,0)  \right| \right).
\label{SchurTest3}
\end{align} 
By the Schur test and by  Lemma \ref{lmA3}, estimate \eqref{SchurTest3} and a similar estimate with the roles of $(R,x_i,x_j)$ and $(R',x'_i,x'_j)$ interchanged, imply that $\| B_{\eps,r,r',q} \|^2 \leq C \eps^{2s}(|r|^{2s} + |r'|^{2s})$ for some constant $C=C(s,z)>0$, which is independent of $\eps, r,r'$ and $q$. Hence, it follows from 
\eqref{phiijdiffabs} that
 \begin{align}
\left\| \left(\phi^{ij}_{\eps}(z,\underline{P}_{ij} )-\phi^{ij}_0(z, \underline{P}_{ij})\right)\Psi\right\|^2 &\leq  I(V,s) \int \mspace{-3mu} \d{r} \, \d{r'}  \, \frac{|V(r)|}{1+|r'|^{2s}} \left\| B_{\eps,r,r',q}\Psi(r',\,\cdot\,)\right\|^2 \nonumber \\
& \leq C \eps^{2s} I(V,s)^2 \, \|\Psi \|^2, \nonumber
\end{align}
where $I(V,s)$ is defined by \eqref{I(V,s)}. Due to the fact that this bound is uniform in $\underline{P}_{ij}$,  the second part of the lemma follows. 

In the general case, where the additional assumption \eqref{I(V,s)} is not satisfied for some $s>0$, the usual approximation argument, now based on Lemma \ref{lm9}, proves the convergence of $\phi^{ij}_{\eps}(z)$ to $\phi^{ij}_{0}(z)$.
\end{proof}

\section{Proof of the main results} \label{sec5}

\begin{proof}[Proof of Theorem~\ref{theo1}]
Recall from \eqref{phiepsdecomp} that $\phi_{\eps}(z) \in \LL(\widetilde{\HH})$ can be decomposed as
\begin{align}\label{phiepsdecomp2}
    \phi_{\eps}(z)=\sum_{1 \leq i < j \leq N} \phi^{ij}_{\eps}(z). 
\end{align}
According to \eqref{Abs2}, \eqref{Abs4} and \eqref{Abs6}, all operators $\phi^{ij}_{\eps}(z): \widetilde{\HH} \rightarrow L^2(\R^N)$ are bounded with $\|\phi^{ij}_{\eps}(z)\| \rightarrow 0$ uniformly in $\eps \geq 0$ as $z \rightarrow \infty$. As $g=\limepsr g_{\eps}$ exists, we see that there exists $C>0$, independent of $\eps \in (0,\eps_0)$, such that $1-g_{\eps}\phi_{\eps}(z)$ is invertible for all $z>C$. Now, Proposition~\ref{lm4}, Proposition~\ref{lm7}, and Proposition~\ref{lm10} imply that
\begin{align} \label{phiinvconv2}
     \phi_{\eps}(z) \rightarrow \phi(z):= \sum_{1 \leq i < j \leq N} \phi_{0}^{ij}(z)
\end{align} 
as $\eps\to 0$, and by the argument above, $1-g\phi(z)$ is invertible for $z>C$. 
We conclude that $\limepsr g_{\eps}\left(1 - g_{\eps}\phi_{\eps}(z)\right)^{-1} =g\left(1 - g\phi(z)\right)^{-1}$ for all $z>C$.

Given the invertibility of $1-g_{\eps}\phi_{\eps}(z)$ for $z>C$ and $\eps\in (0,\eps_0)$, we know from 
Theorem~\ref{thm:Krein} that $z\in\rho(H_\eps)$ and
\begin{align}\label{Krein2}
(H_{\eps}+z)^{-1} = (H_0 + z)^{-1} + g_{\eps} \left(A_{\eps}(H_0 + \overline{z})^{-1}\right)^{*} \left(1-g_{\eps} \phi_{\eps}(z)\right)^{-1} JA_{\eps} (H_0 + z)^{-1},
\end{align}
where, by \eqref{LimitS(z)}, $A_{\eps}(H_0+z)^{-1} \rightarrow S(z)$. Taking the limit $\eps\to 0$ on the right hand side of \eqref{Krein2}, we conclude that $(H_{\eps}+z)^{-1} \to R(z)$, where
\begin{equation} \label{DefR(z)}
          R(z):= (H_0 + z)^{-1} + g \, S(\overline{z})^{*} \,\left(1-g\phi(z)\right)^{-1}JS(z).
\end{equation}

To see that $R(z)=(H+z)^{-1}$ we use Corollary~\ref{cor:VfreeR}, below. By this corollary, the operator $R(z)$ depends on $\alpha=g\int V(r) \,\d{r}$ only, but not on the particular choice of $V$. This means that $R(z)=\lim_{\eps\to 0}(H_\eps'+z)^{-1}$, where $H_\eps'$ is defined in terms of a compactly supported potential $V'$ with same integral as $V$. By Corollary~\ref{strongRlim}, $H_\eps'\to H$ in the strong resolvent sense and hence $R(z)=(H+z)^{-1}$. This proves Equation~\eqref{Hres} for $z>C$. By a general result on equations of this form (see Theorem 2.19 in \cite{CFP2018}), it follows that $1-g\phi(z)$ is invertible and that Equation~\eqref{Hres} holds for all $z\in \rho(H_0)\cap\rho(H)$.

Finally, let us assume that $\int \d{r}\, |r|^{2s} \left|V(r)\right| < \infty$ and $|g_{\eps}-g|=O(\eps^s)$ for some $s\in(0,1)$. Then, by Propositions~\ref{lm2}, \ref{lm4}, \ref{lm7}, and \ref{lm10}, 
\begin{align}
   \left\|A_{\eps}(H_0+z)^{-1}-S(z)\right\|&=O(\eps^s)\\
   \left\|g_{\eps} \left(1 - g_{\eps}\phi_{\eps}(z)\right)^{-1}  - g\left(1-g\phi(z)\right)^{-1} \right\| &=O(\eps^s)
\end{align}
as $\eps\to 0$. Using this in \eqref{Krein2} shows that $\|(H_{\eps}+z)^{-1} - (H+z)^{-1} \| = O(\eps^s)$ for $z>C$, and hence by Lemma~\ref{theo3}, $\|(H_{\eps}+z)^{-1} - (H+z)^{-1}  \| = O(\eps^s)$ for all $z \in \rho(H)$. This concludes the proof of Theorem \ref{theo1}.
\end{proof}

\bigskip
Finally, we show that the potential $V$ may be integrated out in the expression \eqref{DefR(z)} for the resolvent of $H$. To this end, we introduce an auxiliary Hilbert space $\widetilde{\HH}_{\textup{red}}$ by \eqref{HTilde} and
$$
    \widetilde{\HH} =L^2_{\text{ev}}(\R,\d{r}) \otimes \widetilde{\HH}_{\textup{red}}.       
$$
By inspection of the defining equations \eqref{Tepskern}, \eqref{phi12kern}, \eqref{phieps1jkernP}, and \eqref{phiepsijkernP}, the operators $S(z) \in \mathscr{L}(\HH, \widetilde{\HH})$ and $\phi(z) \in \mathscr{L}(\widetilde{\HH})$ factor as follows:
\begin{align}
    S(z)\psi&= v \otimes (\widetilde{S}(z)\psi), \label{DefS(z)Tilde2} \\
    \phi(z)&= \Ket{u}\Bra{v} \otimes \widetilde{\phi}(z), \label{DefphiTilde2}
\end{align}
where $\widetilde{S}(z) \in \mathscr{L}(\HH,\widetilde{\HH}_{\textup{red}})$ and $\widetilde{\phi}(z) \in \mathscr{L}(\widetilde{\HH}_{\textup{red}})$ are bounded operators, which do not depend on $V$. Explicitly, it follows from \eqref{LimitS(z)} that 
\begin{align}
          \widetilde{S}(z)=\sqrt{\frac{N(N-1)}{2}}\;\widetilde{T_0}(z)\KK, \label{S(z)Tilde}
\end{align}
where the action of $\widetilde{T_0}(z)$ is pointwise in $\underline{P}=(P,P_3,...,P_N)$ and given by 
\begin{align}
     \widetilde{T_0}(z, \underline{P})= \Scalefrac{1}{2} \Bra{G_{\frac{1}{2}(z+Q)}}: L^2(\R) \rightarrow \C. 
\end{align}
In view of \eqref{DefphiTilde2}, we have that $\|\phi(z)\| =   \left\|\widetilde{\phi}(z) \right\| \|V\|_{L^1}$, so Lemma \ref{lm3}, Lemma \ref{lm6} and Lemma \ref{lm9} show that
\begin{align}
\left\|\widetilde{\phi}(z) \right\| \leq \frac{K}{\sqrt{z}} 
\end{align}
for some constant $K>0$. Therefore, $1 - \alpha \widetilde{\phi}(z)$ is invertible for large enough $z>0$ and with the help of \eqref{DefphiTilde2} and $\alpha = g\int V(r) \,\d{r}=g\Braket{v|u}$ it is straightforward to verify that 
\begin{align}
g \left(1-g\phi(z)\right)^{-1} = g  + g^2 \Ket{u} \Bra{v} \otimes \widetilde{\phi}(z)\left(1- \alpha \, \widetilde{\phi}(z) \right)^{-1}. \label{phiTildeinv}
\end{align}
Inserting this in \eqref{DefR(z)} gives:

\begin{kor}\label{cor:VfreeR}
With the above notations, for all $z\in \rho(H)\cap\rho(H_0)$,
\begin{equation}\label{R(z)new}
   (H+z)^{-1} = (H_0 + z)^{-1} + \alpha \, \widetilde{S}(\overline{z})^{*} \left(1  - \alpha  \widetilde{\phi}(z) \right)^{-1} \widetilde{S}(z).
\end{equation}
\end{kor}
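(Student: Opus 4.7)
The plan is to take the resolvent formula \eqref{Hres} of Theorem~\ref{theo1} and simply substitute the tensor-product factorizations \eqref{DefS(z)Tilde2}, \eqref{DefphiTilde2} together with the identity \eqref{phiTildeinv}. Throughout I will use that $J$ is multiplication by $\sgn(V)$ in $L^2_{\text{ev}}(\R,\d{r})$, so that $Jv=u$, and that $\alpha=g\int V(r)\,\d{r}=g\Braket{v|u}$. I first establish the formula for $z>0$ large enough (so that $1-\alpha\widetilde{\phi}(z)$ is invertible by the norm bound following \eqref{S(z)Tilde}), and then extend to all $z\in\rho(H)\cap\rho(H_0)$ by analytic continuation, mirroring the last step in the proof of Theorem~\ref{theo1}.

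\smallskip

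Concretely, for $\psi\in\HH$, the factorization \eqref{DefS(z)Tilde2} gives $JS(z)\psi = (Jv)\otimes \widetilde{S}(z)\psi = u\otimes \widetilde{S}(z)\psi$. Applying \eqref{phiTildeinv} to this vector, the first summand produces $g\,u\otimes \widetilde{S}(z)\psi$, while the rank-one part yields
\begin{align*}
g^2\bigl(\Ket{u}\Bra{v}\otimes \widetilde{\phi}(z)(1-\alpha\widetilde{\phi}(z))^{-1}\bigr)(u\otimes\widetilde{S}(z)\psi) = g^2\Braket{v|u}\;u\otimes \widetilde{\phi}(z)(1-\alpha\widetilde{\phi}(z))^{-1}\widetilde{S}(z)\psi,
\end{align*}
which equals $g\alpha\,u\otimes \widetilde{\phi}(z)(1-\alpha\widetilde{\phi}(z))^{-1}\widetilde{S}(z)\psi$. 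The adjoint factorization reads $S(\overline{z})^{*}(w\otimes\widetilde{\chi})=\Braket{v|w}\,\widetilde{S}(\overline{z})^{*}\widetilde{\chi}$, so applying $S(\overline{z})^{*}$ and using $\Braket{v|u}=\alpha/g$ once more transforms the two contributions into $\alpha\,\widetilde{S}(\overline{z})^{*}\widetilde{S}(z)\psi$ and $\alpha^2\,\widetilde{S}(\overline{z})^{*}\widetilde{\phi}(z)(1-\alpha\widetilde{\phi}(z))^{-1}\widetilde{S}(z)\psi$, respectively.

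\smallskip

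Summing these two contributions and factoring out $\alpha\,\widetilde{S}(\overline{z})^{*}$ on the left and $\widetilde{S}(z)$ on the right gives the second term of \eqref{R(z)new}, provided one uses the geometric-series identity
\begin{equation*}
1 + \alpha\widetilde{\phi}(z)\bigl(1-\alpha\widetilde{\phi}(z)\bigr)^{-1} = \bigl(1-\alpha\widetilde{\phi}(z)\bigr)^{-1},
\end{equation*}
which is immediate from $(1-\alpha\widetilde{\phi}(z))(1-\alpha\widetilde{\phi}(z))^{-1}=1$. Combined with the unchanged free-resolvent term $(H_0+z)^{-1}$, this yields exactly \eqref{R(z)new} for large $z>0$. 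Since both sides are analytic in $z$ on the common resolvent set $\rho(H)\cap\rho(H_0)$---the right side because $1-\alpha\widetilde{\phi}(z)$ inherits invertibility from $1-g\phi(z)$ via \eqref{phiTildeinv}, as was used implicitly above---the identity extends to all $z\in\rho(H)\cap\rho(H_0)$.

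\smallskip

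The only place where any care is needed is the manipulation of \eqref{phiTildeinv}: I should remark that it is really an identity of bounded operators once $1-\alpha\widetilde{\phi}(z)$ is invertible, and that the apparent $V$-dependence on the right hand side of \eqref{Hres} is removed precisely because the two factors of $\Braket{v|u}$ that arise (once from contracting $JS(z)$ with $S(\overline{z})^{*}$ through the identity component of \eqref{phiTildeinv}, once through the rank-one component) combine with the prefactors $g$ and $g^2$ to produce clean powers of $\alpha$. No genuine obstacle is expected; the corollary is essentially a bookkeeping consequence of the rank-one structure of $\phi(z)$ in the relative coordinate.
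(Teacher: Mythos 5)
Your proof is correct and takes essentially the same route as the paper: insert the identity \eqref{phiTildeinv} into the resolvent formula \eqref{Hres} (equivalently \eqref{DefR(z)}), use the factorizations \eqref{DefS(z)Tilde2}--\eqref{DefphiTilde2} to contract the two copies of $\Braket{v|u}$ against the powers of $g$, and resum via $1+\alpha\widetilde{\phi}(z)(1-\alpha\widetilde{\phi}(z))^{-1}=(1-\alpha\widetilde{\phi}(z))^{-1}$. The paper presents this as a one-line remark (``Inserting this in \eqref{DefR(z)} gives:''), whereas you spell out the rank-one bookkeeping and explicitly address the extension from large $z>0$ to all $z\in\rho(H)\cap\rho(H_0)$, which the paper leaves implicit; one small caveat is that the invertibility of $1-\alpha\widetilde{\phi}(z)$ for general $z$ in the common resolvent set does not follow circularly from \eqref{phiTildeinv} (which presupposes it), but rather from the direct-sum decomposition of $\widetilde{\HH}$ under the idempotent $\Ket{u}\Bra{v}/\Braket{v|u}$, on whose range $1-g\phi(z)$ acts as $1-\alpha\widetilde{\phi}(z)$.
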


The following lemma, a variant of Lemma 2.6.1 in \cite{Davies1995}, is used in the proof of Theorem~\ref{theo1}. 
\begin{lemma}\label{theo3}
Let $R_\eps(z)=(H_\eps+z)^{-1}$ and $R(z)=(H+z)^{-1}$. Then for any $z,z_0\in \rho(H)$ there exists a constant $C_z\in \R$ such that for $\eps>0$ small enough
$$
  \|R_\eps(z) - R(z)\| \leq (1 + |z-z_0|C_z)^2 \|R_\eps(z_0) - R(z_0)\|.
$$
\end{lemma}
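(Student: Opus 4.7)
The plan is to apply the first resolvent identity to both $H$ and $H_\eps$ so as to express $R_\eps(z)-R(z)$ as a product in which the only $\eps$-dependent factor is the reference difference $R_\eps(z_0)-R(z_0)$ at the single fixed point $z_0$. Everything else will reduce to controlling two bounded ``geometric'' prefactors which depend on $z,z_0$ but not on $\eps$.

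The key algebraic step is the following. Introduce $M := I + (z-z_0)\,R(z_0)$ and $M_\eps := I + (z-z_0)\,R_\eps(z_0)$. The first resolvent identity $R(z_0)-R(z)=(z-z_0)R(z_0)R(z)$ rearranges to $M\,R(z)=R(z_0)$, hence $R(z)=M^{-1}R(z_0)$, and analogously $R_\eps(z)=M_\eps^{-1}R_\eps(z_0)$ once $M_\eps$ is known to be invertible. Writing $M_\eps-M=(z-z_0)[R_\eps(z_0)-R(z_0)]$ and inserting this into the algebraic identity for the difference of inverses yields the clean factorization
\begin{equation*}
 R_\eps(z)-R(z)\;=\;M_\eps^{-1}\,\bigl[R_\eps(z_0)-R(z_0)\bigr]\,M^{-1}.
\end{equation*}

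To estimate the two prefactors I would proceed as follows. A direct verification (multiply out and use the resolvent identity once more) gives the explicit inverse $M^{-1}=I-(z-z_0)R(z)$, so $\|M^{-1}\|\le 1+|z-z_0|\,\|R(z)\|$. For $M_\eps^{-1}$ I would invoke the fact -- already established at the point where this lemma is applied in the proof of Theorem~\ref{theo1} -- that $R_\eps(z_0)\to R(z_0)$ in norm as $\eps\to 0$. Thus $\|M_\eps-M\|=|z-z_0|\,\|R_\eps(z_0)-R(z_0)\|\to 0$, and a standard Neumann series argument shows that for $\eps$ sufficiently small $M_\eps$ is invertible with $\|M_\eps^{-1}\|$ as close to $\|M^{-1}\|$ as one wishes, say $\|M_\eps^{-1}\|\le 2\,\|M^{-1}\|$.

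Combining these two bounds with the factorization above gives $\|R_\eps(z)-R(z)\|\le 2\bigl(1+|z-z_0|\,\|R(z)\|\bigr)^{2}\,\|R_\eps(z_0)-R(z_0)\|$, and the stated form of the inequality follows by choosing the constant $C_z$ large enough (depending on $z$ and on $z_0$) so that $(1+|z-z_0|C_z)^{2}$ dominates the right-hand factor. I expect no serious obstacle: the only point requiring attention is the quantitative threshold encoded in ``$\eps>0$ small enough,'' which must be taken so that the Neumann series for $M_\eps^{-1}$ converges -- concretely, so that $|z-z_0|\,\|M^{-1}\|\,\|R_\eps(z_0)-R(z_0)\|\le 1/2$. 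This is exactly the smallness assumption already made in the statement of the lemma.
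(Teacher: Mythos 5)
Your proof is correct and reaches exactly the same trilinear factorization as the paper: your $M_\eps^{-1}$ and $M^{-1}$ are precisely the paper's $F_\eps(z)=1+(z_0-z)R_\eps(z)$ and its $\eps\to 0$ limit, so the identity $R_\eps(z)-R(z)=F_\eps(z)\bigl[R_\eps(z_0)-R(z_0)\bigr]F(z)$ is common to both arguments, the paper obtaining it via a $\delta$-approximation and the second resolvent identity while you obtain it by direct manipulation of inverses. The only substantive difference is how the outer factor is bounded: the paper estimates $\|F_\eps(z)\|\le 1+|z_0-z|\,C_z$ directly with $C_z=\sup_{\eps<\eps_0}\|R_\eps(z)\|$ (finite by norm resolvent convergence), yielding the stated constant with no slack, whereas your Neumann-series bound $\|M_\eps^{-1}\|\le 2\|M^{-1}\|$ introduces a harmless factor of $2$ that must be absorbed into $C_z$; both routes are valid.
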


\begin{proof}
The norm resolvent convergence $H_\eps\to H$ implies that for given $z,z_0\in \rho(H)$ there exists $\eps_0>0$ such that for $\eps<\eps_0$ we have $z,z_0\in \rho(H_\eps)$ and 
$C_z:= \sup_{\eps<\eps_0}\|R_\eps(z)\|<\infty$.
Let $\eps,\delta \in (0,\eps_0)$. By the first resolvent identity
\begin{equation}\label{first-res}
    R_\eps(z) = R_\eps(z_0)F_\eps(z) = F_\eps(z) R_\eps(z_0),
\end{equation}
where $F_\eps(z)=1+(z_0-z)R_\eps(z)$ has the norm bound $\|F_\eps(z)\| \leq 1+ |z_0-z|C_z$.
The second resolvent identity $R_\eps(z) -R_\delta(z) = R_\eps(z)(H_\delta-H_\eps)R_\delta(z)$ and \eqref{first-res} imply that
$$
   R_\eps(z) -R_\delta(z) = F_\eps(z)\big(R_\eps(z_0)-R_\delta(z_0)\big)F_\delta(z).   
$$
After taking norms of both sides and the limit $\delta\to 0$, the desired estimate follows.
\end{proof}

\appendix

\setcounter{theorem}{0}
\setcounter{equation}{0}

\section{Properties of the Green's function}  
\label{AppendixA}

This appendix collects facts and estimates on the Green's function of $-\Delta + z: H^2(\R^d) \rightarrow L^2(\R^d)$.

For $d \in \N$ and $z \in \C$ with $\Rea(z)>0$, let the function $G^d_{z}: \R^d \rightarrow \C$ be defined by
\begin{align} \label{DefGreen}
     G^d_{z}(x) := \int\limits_{0}^{\infty} \textup{dt}\; (4\pi t)^{-d/2}  \exp\left( -\Scalefrac{x^2}{4t} - zt\right).
\end{align}
Note that $ G^d_{z}$ has a singularity at $x=0$ unless $d=1$. The proof of the following lemma is left as an exercise to the reader.

\begin{lemma}\label{lmA1}
Let $d \in \N$ and $z \in \C$ with $\Rea(z)>0$. Then $G^d_{z}$ defined by \eqref{DefGreen} has the following properties:
\begin{enumerate}[label=(\roman*)]
\item $G^d_{z} \in L^1(\R^d)$ and $\|G^d_{z} \|_{L^1}\leq \Rea(z)^{-1}$ 
\item The Fourier transform of $G^d_{z}$ is given by $\widehat{G^d_{z}}(p)=(2\pi)^{-d/2}(p^2+z)^{-1}$
\item $G^d_{z}$ is the Green's function of $-\Delta + z$, i.e. $(-\Delta + z)^{-1}f = G^d_{z} \ast f$ holds for all $f \in L^2(\R^d)$
\item  $G^d_{z} \in L^2(\R^d)$ if and only if $d \in \{1,2,3\}$ 
\item $G^d_{z}$ is spherically symmetric, i.e. $G^d_{z}$ only depends on $\left|x\right|$, and  for $z \in (0,\infty)$ it is positive and 
strictly monotonically decreasing both in $\left|x\right|$ and $z$
\item  Let $d_1,d_2\in \N$ with $d_1+d_2=d$ and let $x=(x_1,x_2) \in \R^{d_1} \times \R^{d_2}$. If $x_1 \neq 0$ or $d_1=1$, then $G^d_z(x_1,\cdot) \in L^1(\R^{d_2})$ and the Fourier transform is
\begin{align}
\widehat{G^d_{z}}(x_1,p_2)=(2\pi)^{-d_2/2} \; G^{d_1}_{z+p_2^2}(x_1). \label{GreenpartP}
\end{align}
In particular, we have that
\begin{align}
 \int\limits_{\R^{d_2}}^{} \textup{d}x_2\; G^d_{z}(x_1,x_2) = G_{z}^{d_1}(x_1).  \label{Greenpartint}
\end{align}
\end{enumerate}
\end{lemma}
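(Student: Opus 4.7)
The approach throughout is to exploit the heat kernel representation \eqref{DefGreen} together with Fubini's theorem and the standard Gaussian identity
\begin{equation*}
      \int_{\R^n} e^{-x^2/(4t)}\,e^{-ip\cdot x}\,\textup{d}x = (4\pi t)^{n/2}\, e^{-tp^2},\qquad t>0.
\end{equation*}
Each of the six items then reduces to an essentially mechanical computation; the only places where one has to be careful are the verification of absolute integrability (so that Fubini can be applied) and the passage through the Fourier transform to get (iii).

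For (i), I would bound $|G^d_z(x)| \leq \int_0^\infty (4\pi t)^{-d/2}\exp(-x^2/(4t)-\Rea(z)\,t)\,\textup{d}t$ and invoke Fubini on the positive integrand, collapsing the $x$-integral via the Gaussian identity at $p=0$ to obtain $\int_0^\infty e^{-\Rea(z)t}\,\textup{d}t = \Rea(z)^{-1}$. This also justifies Fubini in (ii): writing out $\widehat{G^d_z}(p) = (2\pi)^{-d/2}\int e^{-ip\cdot x}G^d_z(x)\,\textup{d}x$, interchanging the $x$- and $t$-integrals, and evaluating the inner Gaussian yields $(2\pi)^{-d/2}\int_0^\infty e^{-tp^2}e^{-zt}\,\textup{d}t = (2\pi)^{-d/2}(p^2+z)^{-1}$. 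Item (iii) is then a one-line consequence: on the Fourier side, $(-\Delta+z)^{-1}$ is multiplication by $(p^2+z)^{-1}$, which by (ii) equals $(2\pi)^{d/2}\widehat{G^d_z}$, and the convolution theorem gives $(-\Delta+z)^{-1}f = G^d_z * f$.

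For (iv), I would use Plancherel: $G^d_z\in L^2(\R^d)$ iff $\widehat{G^d_z}\in L^2(\R^d)$, and by (ii) the latter is equivalent to $\int_{\R^d}(p^2+|z|)^{-2}\,\textup{d}p<\infty$, which, after passing to radial coordinates, reduces to checking convergence of $\int_0^\infty r^{d-1}(r^2+|z|)^{-2}\,\textup{d}r$ at infinity; this holds precisely when $d<4$, i.e.\ $d\in\{1,2,3\}$. For (v), spherical symmetry of the integrand in $x$ is immediate, and for $z\in(0,\infty)$ the integrand is strictly positive and strictly decreasing in both $|x|$ and $z$ at each fixed $t>0$, so the same holds for $G^d_z$ after integration.

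The only nontrivial item is (vi). Under the hypothesis $x_1\neq 0$ or $d_1=1$, the inequality $e^{-x_1^2/(4t)}\leq C_{x_1}\, t^{(d_1-1)/2}$ uniformly in $t$ combined with the analysis of $(4\pi t)^{-d/2}e^{-zt}$ near $t=0$ and $t=\infty$ shows integrability in $t$ after performing the $x_2$-integration, so Fubini applies. Interchanging integrals and using the Gaussian identity in the $d_2$ variables gives
\begin{equation*}
      \int_{\R^{d_2}}e^{-ip_2\cdot x_2}G^d_z(x_1,x_2)\,\textup{d}x_2 = \int_0^\infty (4\pi t)^{-d_1/2}\,e^{-x_1^2/(4t)}\,e^{-(z+p_2^2)t}\,\textup{d}t = G^{d_1}_{z+p_2^2}(x_1),
\end{equation*}
and multiplying by $(2\pi)^{-d_2/2}$ yields \eqref{GreenpartP}. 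Setting $p_2=0$ recovers \eqref{Greenpartint}. The expected obstacle is exactly this integrability step in (vi) when $d_2$ is large and $x_1$ is only assumed nonzero (not bounded away from $0$); one must track the $t\to 0$ behavior carefully and use the Gaussian factor $e^{-x_1^2/(4t)}$ to dominate the $t^{-d/2}$ singularity for each fixed $x_1\neq 0$, which is where the exclusion of $x_1=0$ for $d_1\geq 2$ enters.
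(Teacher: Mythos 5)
Your proof is correct and is exactly the intended argument: the paper explicitly leaves Lemma \ref{lmA1} as an exercise, and the heat-kernel representation \eqref{DefGreen} is set up precisely so that all six items follow from Tonelli/Fubini plus the Gaussian integral, as you carry out. In particular, your treatment of the delicate point in (vi) — using $e^{-x_1^2/(4t)}\lesssim_{x_1} t^{(d_1-1)/2}$ to tame the $t^{-d_1/2}$ singularity at $t=0$ after the $x_2$-integration, which is where the hypothesis ``$x_1\neq 0$ or $d_1=1$'' enters — is the right justification.
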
 


The following lemma is one of our main tools for estimating differences of integral
operators that depend on $G^{d}_z$:

\begin{lemma}\label{lmA2}
Let $d \in \N$ and $z \in \C$ with $\Rea(z)>0$. Then, for all $x, \tilde{x} \in \R^d$ and $Q \geq 0$, it holds that \vspace*{-3mm}
\begin{align}
\left| G^d_{z+Q}(x) - G^d_{z+Q}(\tilde{x}) \right| &\leq \left| G^d_{\Rea(z)}(x) - G^d_{\Rea(z)}(\tilde{x}) \right|.  \label{Greenabs1} 
\end{align}
Similarly, if $d=d_1+d_2$ for some $d_1,d_2 \in \N$, then for all $ x_1, y_1 \in \R^{d_1}$ and all $x_2 \in \R^{d_2}$,
\begin{align}
\left| G^d_{z}(x_1,x_2) - G^d_{z}(y_1,x_2) \right| &\leq \left| G^d_{\Rea(z)}(x_1,0) - G^d_{\Rea(z)}(y_1,0) \right|. \label{Greenabs2}
\end{align}
\end{lemma}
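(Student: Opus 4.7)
The plan is to work directly from the heat-kernel representation \eqref{DefGreen} and to exploit two elementary facts: $|e^{-zt}| = e^{-\Rea(z)t}$ for $t>0$, and the strict monotonicity of $u\mapsto e^{-u}$, which implies that for fixed $x,\tilde{x}\in\R^d$ the sign of $e^{-x^2/(4t)} - e^{-\tilde{x}^2/(4t)}$ equals $\sgn(\tilde x^2 - x^2)$ for every $t>0$; in particular, it is independent of $t$.

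For \eqref{Greenabs1}, I would express the difference $G^d_{z+Q}(x) - G^d_{z+Q}(\tilde{x})$ as a single $t$-integral with integrand $(4\pi t)^{-d/2}\bigl(e^{-x^2/(4t)}-e^{-\tilde{x}^2/(4t)}\bigr)e^{-(z+Q)t}$, move the absolute value inside, and use $|e^{-(z+Q)t}| = e^{-(\Rea(z)+Q)t}\leq e^{-\Rea(z)t}$ (which uses $Q\geq 0$). This reduces the task to bounding $\int_0^{\infty}(4\pi t)^{-d/2}\bigl|e^{-x^2/(4t)}-e^{-\tilde{x}^2/(4t)}\bigr|e^{-\Rea(z)t}\,\mathrm{d}t$. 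By the constant-sign observation, the absolute value can be pulled outside the integral, turning the expression into $\bigl|G^d_{\Rea(z)}(x)-G^d_{\Rea(z)}(\tilde{x})\bigr|$. Consistency of the signs on both sides is guaranteed by property (v) of Lemma~\ref{lmA1}, which says that $G^d_{\Rea(z)}(x)-G^d_{\Rea(z)}(\tilde x)$ also has the sign $\sgn(\tilde x^2-x^2)$.

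For \eqref{Greenabs2}, I would follow the same template after using $|x|^2 = |x_1|^2 + |x_2|^2$ to factor $e^{-x^2/(4t)}=e^{-x_1^2/(4t)}e^{-x_2^2/(4t)}$, and analogously for $(y_1,x_2)$. After taking absolute values and using $|e^{-zt}|=e^{-\Rea(z)t}$, the common factor $e^{-x_2^2/(4t)}$ is bounded by $1$, which effectively sets $x_2=0$ on the right. The same sign argument as in the first part then yields $\bigl|G^d_{\Rea(z)}(x_1,0)-G^d_{\Rea(z)}(y_1,0)\bigr|$.

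I do not foresee any genuine obstacle; the argument is essentially careful accounting of signs and absolute values inside the $t$-integral. The only moderately delicate point is invoking Lemma~\ref{lmA1}(v) to identify the sign of the two-point difference $G^d_{\Rea(z)}(x)-G^d_{\Rea(z)}(\tilde x)$ with $\sgn(\tilde x^2-x^2)$, so that the one-sided bound obtained after pulling the common sign factor out of the $t$-integral can be rewritten as the two-sided absolute-value estimate claimed in the lemma.
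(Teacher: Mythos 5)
Your proposal is correct and is essentially the paper's own argument: insert the heat-kernel representation \eqref{DefGreen}, move the modulus inside the $t$-integral, use $\bigl|e^{-(z+Q)t-x_2^2/(4t)}\bigr|\leq e^{-\Rea(z)t}$, and exploit the fact that $e^{-x^2/(4t)}-e^{-\tilde x^2/(4t)}$ has a $t$-independent sign to pull the modulus back out. The paper packages the last step as a ``without loss of generality $|x_1|\leq|y_1|$'' assumption instead of explicitly tracking the sign, and your appeal to Lemma~\ref{lmA1}(v) is harmless but redundant once the constant-sign observation is in place; these are only cosmetic differences.
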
 

\begin{proof}
To prove \eqref{Greenabs2} we may assume, without loss of generality, that $\left|x_1\right| \leq \left|y_1\right|$. Then,
\begin{align}
\left| G^d_{z}(x_1,x_2) - G^d_{z}(y_1,x_2) \right| &= \left|\int\limits_{0}^{\infty} \textup{dt}\; 
(4\pi t)^{-d/2} \left( \exp\left( -\Scalefrac{x_1^2}{4t} \right) - \exp\left( -\Scalefrac{y_1^2}{4t} \right)  \mspace{-3mu} \right) \exp\left(  -\Scalefrac{x_2^2}{4t} -zt\right) \right| \nonumber \\
&\leq  \int\limits_{0}^{\infty} \textup{dt}\; 
(4\pi t)^{-d/2} \left( \exp\left( -\Scalefrac{x_1^2}{4t} \right) - \exp\left( -\Scalefrac{y_1^2}{4t} \right) \mspace{-3mu} \right)\left| \exp\left(  -\Scalefrac{x_2^2}{4t} -zt\right) \right| \nonumber \\
&\leq \int\limits_{0}^{\infty} \textup{dt}\; 
(4\pi t)^{-d/2} \left( \exp\left( -\Scalefrac{x_1^2}{4t} \right) - \exp\left( -\Scalefrac{y_1^2}{4t} \right) \mspace{-3mu} \right) \exp\left(  \mspace{-2mu} -\Rea(z)t  \right) \nonumber \\
&= \left| G^d_{\Rea(z)}(x_1,0) - G^d_{\Rea(z)}(y_1,0) \right|. \nonumber
\end{align}
The proof of \eqref{Greenabs1} is very similar.
\end{proof}

\begin{lemma} \label{lmA3}
Let $d \geq 2$, $s \in (0,1)$ and $z \in \C$ with $\Rea(z)>0$. Then there exists a constant $C=C(s,z)>0$ such that 
\begin{equation}\label{G0L1}
 \forall y \in \R^{d-1}: \qquad      \int\limits_{\R^{d-1}} \mspace{-10mu} \d{x}\, \left|G_z^d(x+y,0) - G_z^d(x,0)  \right|\leq C |y|^s. 
\end{equation}
\end{lemma}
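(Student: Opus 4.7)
The plan is to split the analysis into two regimes according to the size of $|y|$; the interesting case is $|y|<1$. By the pointwise inequality $|G^d_z(w)|\leq G^d_{\Rea z}(w)$, which follows directly from \eqref{DefGreen}, and by \eqref{Greenabs2} of Lemma~\ref{lmA2} (applied with $d_1=d-1$, $d_2=1$, $x_2=0$), it suffices to prove the stated bound in the case $z>0$ real.

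For $|y|\geq 1$ I would use the triangle inequality together with \eqref{Greenpartint} of Lemma~\ref{lmA1}(vi) (with $d_1=1$, $d_2=d-1$) to obtain
$$
\int_{\R^{d-1}}|G^d_z(x+y,0)-G^d_z(x,0)|\,\d{x}\;\leq\;2\int_{\R^{d-1}}G^d_z(x,0)\,\d{x}\;=\;2G^1_z(0)\;=\;1/\sqrt{z}\;\leq\;|y|^s/\sqrt{z},
$$
which already gives the desired bound in this regime.

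For $|y|<1$ I would split the spatial integral at $|x|=2|y|$. On the near ball $\{|x|\leq 2|y|\}$ the triangle inequality, followed by a translation, reduces the problem to bounding $2\int_{|x|\leq 3|y|}G^d_z(x,0)\,\d{x}$. The heat-kernel asymptotics extracted from \eqref{DefGreen} give $G^d_z(w)=O(|w|^{-(d-2)})$ as $|w|\to 0$ for $d\geq 3$ and $G^2_z(w)=O(|\log|w||)$ for $d=2$, and integration against the volume element $r^{d-2}\,\d{r}$ in polar coordinates on $\R^{d-1}$ produces a bound of order $|y|$ in the first case and $|y|\log(1/|y|)$ in the second, both of which are $O(|y|^s)$ for $s\in(0,1)$. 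On the complement $\{|x|>2|y|\}$ the function $w\mapsto G^d_z(w,0)$ is smooth, so the fundamental theorem of calculus and Fubini give
$$
\int_{|x|>2|y|}|G^d_z(x+y,0)-G^d_z(x,0)|\,\d{x}\;\leq\;|y|\int_{|w|>|y|}|\nabla G^d_z(w,0)|\,\d{w}.
$$
Differentiating \eqref{DefGreen} under the integral sign yields $|\nabla G^d_z(w)|\leq(|w|/2)\int_0^\infty t^{-d/2-1}(4\pi)^{-d/2}e^{-w^2/(4t)-zt}\,\d{t}$, from which the bound $|\nabla G^d_z(w)|=O(|w|^{-(d-1)})$ as $|w|\to 0$ and exponential decay at infinity follow by the standard split of the $t$-integral at $t=|w|^2$. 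Integration in polar coordinates gives $\int_{|w|>|y|}|\nabla G^d_z(w,0)|\,\d{w}=O(\log(1/|y|))$, and multiplication by $|y|$ yields $O(|y|\log(1/|y|))=O(|y|^s)$ for any $s<1$.

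The main technical step is the gradient bound near the origin; this is the only place where the explicit singular behaviour of $G^d_z$ must be exploited in detail. Once it is in hand, the split-and-interpolate argument goes through uniformly for every $d\geq 2$ and every $s\in(0,1)$, which is also why no estimate at the endpoint $s=1$ is obtained: the logarithmic factor $\log(1/|y|)$ is precisely what the exponent $s<1$ is designed to absorb.
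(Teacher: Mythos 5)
Your argument is correct but takes a genuinely different route from the paper. The paper works directly with the heat-kernel representation \eqref{DefGreen}, pushes the Gaussian into a unit-width profile via the substitution $x/\sqrt{4t}\to x$, estimates the spatial integral of the Gaussian difference at fixed time $t$ by $C\min(1,|y|/\sqrt{t})$ (triangle inequality plus fundamental theorem of calculus, applied to the Gaussian rather than to $G_z^d$ itself), and then splits the $t$-integral at $t=|y|^2$ to produce the bound $C(1+|\ln|y||)\,|y|$. You instead carry out the time integration up front, extracting the classical singular asymptotics $G_z^d(w)=O(|w|^{-(d-2)})$ (resp.\ $O(|\log|w||)$ for $d=2$) and $|\nabla G_z^d(w)|=O(|w|^{-(d-1)})$ near the origin, and then split the spatial integral at $|x|=2|y|$, treating the near ball by the $G$-asymptotics and the far region by the mean-value theorem and the gradient bound. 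Both decompositions land on the same rate $O(|y|\log(1/|y|))$, which is $O(|y|^s)$ for every $s<1$, and both implicitly use the exchange of $\Rea(z)$ for $z$. The paper's version is more compact because the Gaussian normalization absorbs the dimension-dependent singularity into a single clean estimate, while yours is more geometric and makes the role of the singularity at $w=0$ and the failure at $s=1$ explicit; on the other hand, your approach requires establishing two separate pointwise asymptotics of $G_z^d$ and $\nabla G_z^d$ as intermediate facts, whereas the paper avoids ever isolating them. Your handling of the $|y|\ge 1$ regime (bounding by $2G_z^1(0)=1/\sqrt{z}\le |y|^s/\sqrt{z}$) is equivalent to the paper's appeal to $G_z^d(\cdot,0)\in L^1(\R^{d-1})$.
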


\begin{proof} Since $G_z^d(\cdot,0) \in L^1(\R^{d-1})$ by Lemma \ref{lmA1}, (vi), the left side of \eqref{G0L1} is bounded uniformly in $y \in \R^{d-1}$. So it remains to prove \eqref{G0L1} for $|y| \leq 1$, and to this end it suffices to show that there exists a constant $C=C(z)>0$ such that
\begin{align}\label{G0L1alt}
\forall y \in \R^{d-1}: \qquad    \int\limits_{\R^{d-1}} \mspace{-10mu} \d{x}\, \left|G_z^d(x+y,0) - G_z^d(x,0)  \right|\leq C (1+ \left|\ln(|y|)\right|) \,|y|. 
\end{align}
Using the integral representation \eqref{DefGreen} for $G_z^d$ and making the substitution $x/\sqrt{4t}\to x$, we find
\begin{align*}
&\int\limits_{\R^{d-1}} \mspace{-10mu} \d{x}\, \left|G_z^d(x+y,0) - G_z^d(x,0)\right|
\leq \int\limits_{0}^{\infty}\d{t}\, \frac{e^{-ut}}{2 \pi^{d/2} t^{1/2}} 
\int\limits_{\R^{d-1}} \mspace{-10mu} \d{x} \left|\exp\left( -\left(x+\Scalefrac{y}{\sqrt{4t}} \right)^2\right) - \exp\left(-x^2\right) \right|,
\end{align*}
where $u=\Rea(z)$. By applications of triangle inequality and the fundamental theorem of calculus, respectively,
$$
   \int\limits_{\R^{d-1}}\mspace{-10mu}\d{x}\left|\exp\left(-\left(x+\Scalefrac{y}{\sqrt{4t}}\right)^2\right) - \exp\left(-x^2\right)\right|   \leq C \min\left(1, \frac{|y|}{\sqrt{t}}\right).
$$
Since 
\begin{eqnarray*}
   \int\limits_{0}^{\infty}\d{t}\, \frac{e^{-ut}}{t^{1/2}} \min\left(1, \frac{|y|}{\sqrt{t}}\right)
   &\leq &\int\limits_{0}^{|y|^2} t^{-1/2}\, \d{t} + |y|\int\limits_{|y|^2}^{\infty} \frac{e^{-ut}}{t}\, \d{t}\\
   &\leq & 2|y| + |y|\left(u^{-1} + 2\ln |y| \right),
\end{eqnarray*}
the desired estimate follows.
\end{proof}

\section{Generalized Krein formula} 
\label{AppendixB}

Let $\HH$ and $\widetilde{\HH}$ be arbitrary (complex) Hilbert spaces, let $H_0\geq 0$ be a self-adjoint operator in $\HH$ and let $A:D(A)\subset\HH\to\widetilde{\HH}$ be densely defined and closed with $D(A)\supset D(H_0)$. Let $J\in \LL(\widetilde{\HH})$ be a self-adjoint isometry and let $B=JA$.

Suppose that $BD(H_0)\subset D(A^{*})$ and that $A^{*}A$ and $A^{*}B$ are $H_0$-bounded with relative bound less than one. Then  
\begin{equation}\label{abstract-ham}
      H = H_0 - A^{*}B
\end{equation}
is self-adjoint on $D(H_0)$ by Kato-Rellich. Operators of the more general form $H = H_0 - gA^{*}B$ with $g\in \R$ can also be written in the form \eqref{abstract-ham} by absorbing $|g|^{1/2}$ in $A$ and $\sgn(g)$ in $J$. For $z\in \rho(H_0)$, let $\phi(z):D(A^{*})\subset \widetilde{\HH}\to\widetilde{\HH}$ be defined by
\begin{align*}
\phi(z) := B(H_0+z)^{-1}A^{*}.
\end{align*}
Note that $D(A^{*})\subset \widetilde{\HH}$ is dense because $A$ is closed.

\begin{theorem}\label{thm:Krein}
Let the above hypotheses be satisfied and let $z\in \rho(H_0)$. Then $\phi(z)$ is a bounded operator. The operator $1-\phi(z)$ is invertible if and only if $z\in \rho(H_0)\cap\rho(H)$, and then
\begin{align}
    (H + z)^{-1} &= R_0(z) + R_0(z)A^{*} (1-\phi(z))^{-1} BR_0(z)\label{res-H}\\
   (1-\phi(z))^{-1} &= 1+ B(H+z)^{-1}A^{*}.\label{res-phi}
\end{align}
\end{theorem}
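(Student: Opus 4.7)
The plan is to establish the biconditional by exhibiting explicit two-sided inverses in both directions. Assuming $1-\phi(z)$ is invertible, I will verify that
\[R(z):=R_0(z)+R_0(z)A^{*}(1-\phi(z))^{-1}BR_0(z)\]
is a two-sided inverse of $H+z$, giving $z\in\rho(H)$ and the formula \eqref{res-H}. Conversely, assuming $z\in\rho(H_0)\cap\rho(H)$, I will verify that $N(z):=1+B(H+z)^{-1}A^{*}$, with $(H+z)^{-1}A^{*}$ interpreted as the bounded extension from $D(A^{*})$, is a two-sided inverse of $1-\phi(z)$, giving \eqref{res-phi}. Both verifications reduce to the same short telescoping algebra.

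The first step is boundedness of $\phi(z)$. Since $A^{*}A$ is $H_0$-bounded, the Cauchy--Schwarz estimate $\|A\psi\|^{2}=\langle\psi,A^{*}A\psi\rangle$ shows that $A$ itself is $H_0$-bounded, so $AR_0(z)\in\LL(\HH,\widetilde\HH)$ and $R_0(z)A^{*}$ extends to a bounded operator $T(z):\widetilde\HH\to\HH$ as the adjoint of $AR_0(\bar z)$. A density-plus-closedness-of-$A$ argument then gives $T(z)\widetilde\HH\subset D(A)$ with $AT(z)$ bounded, whence $\phi(z)=JAT(z)\in\LL(\widetilde\HH)$. Two additional structural facts will be used throughout: (i) $\phi(z)$ maps $D(A^{*})$ into itself, because for $\chi\in D(A^{*})$ one has $T(z)\chi\in D(H_0)$ and $BD(H_0)\subset D(A^{*})$ by hypothesis; and (ii) $A^{*}\phi(z)\chi=(A^{*}BR_0(z))\,A^{*}\chi$ with $A^{*}BR_0(z)$ bounded on $\HH$, so $\phi(z)$ is also bounded on $D(A^{*})$ in its graph norm.

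The algebraic core is a short telescoping computation. For $\psi\in D(H)=D(H_0)$, $B\psi\in D(A^{*})$, and the identities
\[R_0(z)(H+z)\psi=\psi-R_0(z)A^{*}B\psi,\qquad BR_0(z)(H+z)\psi=(1-\phi(z))B\psi\]
combine to give $R(z)(H+z)\psi=\psi$, so $H+z$ is injective on $D(H)$. For $(H+z)R(z)\varphi=\varphi$ the same algebra works provided $\xi:=(1-\phi(z))^{-1}BR_0(z)\varphi$ lies in $D(A^{*})$, for then $R_0(z)A^{*}\xi\in D(H_0)$ and the telescoping $A^{*}(1-\phi(z))(1-\phi(z))^{-1}BR_0(z)\varphi=A^{*}BR_0(z)\varphi$ closes the identity. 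For the reverse direction, the second resolvent identity $R_0(z)A^{*}B(H+z)^{-1}=(H+z)^{-1}-R_0(z)$ rearranges to $(1-\phi(z))N(z)=I=N(z)(1-\phi(z))$ on $D(A^{*})$, which extends to $\widetilde\HH$ by continuity.

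The main obstacle I foresee is the domain-preservation claim that $(1-\phi(z))^{-1}$ maps $D(A^{*})$ into itself. My plan is to handle it first for $z$ large and real: there $\|A^{*}BR_0(z)\|\le a+b/z<1$ by $H_0$-boundedness of $A^{*}B$ with relative bound $a<1$, and similarly $\|\phi(z)\|_{\widetilde\HH}<1$, so for $\psi\in D(A^{*})$ the Neumann series $\sum_{n\ge 0}\phi(z)^{n}\psi$ converges in the graph norm on $D(A^{*})$, using the iterated identity $A^{*}\phi(z)^{n}\psi=(A^{*}BR_0(z))^{n}A^{*}\psi$, and thus provides $(1-\phi(z))^{-1}\psi\in D(A^{*})$. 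For general $z\in\rho(H_0)$ with $1-\phi(z)$ invertible on $\widetilde\HH$, I would propagate the domain preservation via the first resolvent identity $\phi(z)-\phi(z_0)=(z_0-z)BR_0(z)R_0(z_0)A^{*}$, reducing to the large-$z$ case. Once this technical point is settled, the Krein formulas \eqref{res-H} and \eqref{res-phi} follow at once.
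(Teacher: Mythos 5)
Your proposal is correct and its overall skeleton matches the paper's: establish boundedness of $\phi(z)$ via Step-1-type $H_0$-form bounds on $A$, run the telescoping identities $R_0(z)(H+z)\psi=\psi-R_0(z)A^{*}B\psi$ and $BR_0(z)(H+z)\psi=(1-\phi(z))B\psi$ to verify \eqref{res-H}, and use the second resolvent identity for \eqref{res-phi}. You also correctly identify the real technical content as the claim that $(1-\phi(z))^{-1}$ preserves $D(A^{*})$. The one genuinely different choice is how you anchor that claim. You anchor at large real $z$, where the relative bound $a<1$ gives $\|\phi(z)\|<1$ and $\|A^{*}BR_0(z)\|<1$, so the Neumann series $\sum\phi(z)^n\psi$ converges in graph norm via the iterated identity $A^{*}\phi(z)^n\psi=(A^{*}BR_0(z))^nA^{*}\psi$; you then propagate to general $z$ through the first-resolvent-type identity $\phi(z)-\phi(z_0)=(z_0-z)BR_0(z_0)R_0(z)A^{*}$. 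The paper instead anchors at $z=i$: since $H$ is already known to be self-adjoint by Kato--Rellich, $i\in\rho(H)\cap\rho(H_0)$, and Step~3 of the paper's proof shows directly that $(1-\phi(i))^{-1}=1+B(H+i)^{-1}A^{*}$, which manifestly preserves $D(A^{*})$ without any smallness argument; the propagation to general $z$ is then the same resolvent manipulation you describe, written as $(1-\phi(z))^{-1}=(1-\phi(i))^{-1}+(1-\phi(i))^{-1}BR_0(i)\,(AR_0(\bar z))^{*}(1-\phi(z))^{-1}(i-z)$ and using $\ran BR_0(i)\subset D(A^{*})$. Your route is a bit heavier (it needs the form-bound version of the relative bound to push $\|\phi(z)\|$ below $1$, and a graph-norm Neumann series), while the paper's route leans harder on the hypothesis that $H$ is a priori self-adjoint; both are sound, and in fact the resolvent-propagation step you sketch is essentially identical to the paper's Step~4.
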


\noindent
\emph{Remark.} Note that $(1-\phi(z))^{-1}$ leaves $D(A^{*})$ invariant. This follows from \eqref{res-phi} and from the  assumption on $B$.

\begin{proof}
\noindent
\emph{Step 1.}  $A(H_0+c)^{-1/2}$ is bounded for all $c>0$, and  $A(H+c)^{-1/2}$ is bounded for $c>0$ large enough.

By assumption, the operator $H_0-A^*A$ is bounded from below. This implies that 
$$
     \|A\psi\|^2 \leq \|(H_0+c)^{1/2}\psi\|^2 + C \|\psi\|^2
$$
for all $\psi\in D(H_0)$, all $c>0$, and some constant $C$. Since $D(H_0)$ is dense in $D(H_0^{1/2})$ and since $A$ is closed, this bound extends to all $\psi\in D(H_0^{1/2})$ by an approximation argument. In particular, $D(A)\supset D(H_0^{1/2})$. 
The second statement of Step 1 follows from the first and from the fact that $H$ and $H_0$ have equivalent form norms, which implies that $D(H^{1/2}) = D(H_0^{1/2})$.\medskip

\noindent
\emph{Step 2.} If $z\in \rho(H_0)$, then $\phi(z)$ is a bounded operator, and if $z\in \rho(H)$, then 
$$
     \Lambda(z)  := B(H+z)^{-1}A^{*}
$$
is a bounded operator too. This easily follows from Step 1 and from the first resolvent identity.\medskip

\noindent
\emph{Step 3.} If $z\in \rho(H_0)\cap \rho(H)$, then $(1-\phi(z))$ is invertible and $1+\Lambda(z)=(1-\phi(z))^{-1}$.

Both $\phi(z)$ and $\Lambda(z)$ leave $D(A^{*})$ invariant and on this subspace, by straightforward computations using the second resolvent identity, $(1-\phi(z)) (1+\Lambda(z)) = 1 = (1+\Lambda(z)) (1-\phi(z))$. 
\medskip

\noindent
\emph{Step 4.}   If $z\in \rho(H_0)$ and $1-\phi(z)$ is invertible, then $z\in \rho(H)$,  $ (1-\phi(z))^{-1}$ leaves $D(A^{*})$ invariant and \eqref{res-H} holds.

By Step 3, $ (1-\phi(i))^{-1} = 1+\Lambda(i)$, which leaves $D(A^{*})$ invariant. Now suppose that $z\in \rho(H_0)$ and
that $1-\phi(z)$ has a bounded inverse. Then
\begin{align*}
     (1-\phi(z))^{-1}  &=   (1-\phi(i))^{-1} +   (1-\phi(i))^{-1} (\phi(z)-\phi(i))  (1-\phi(z))^{-1}\\
         &=   (1-\phi(i))^{-1} +   (1-\phi(i))^{-1} BR_0(i)(AR_0(\bar z))^{*} (1-\phi(z))^{-1}(i-z).
\end{align*}
Since $\ran BR_0(i)\subset D(A^{*})$, it follows that $(1-\phi(z))^{-1}$ leaves $D(A^{*})$ invariant as well, and that
$$
    R(z) := R_0(z) + R_0(z)A^{*} (1-\phi(z))^{-1} BR_0(z)
$$
is well defined. Now it is a matter of straightforward computations to show that $(H+z)R(z)=1$ on $\HH$ and that $R(z)(H+z)=1$ on $D(H)$. 
\end{proof}


\section{$\Gamma$-Convergence} 
\label{sec:Gamma}

In this section we work in $L^2(\R^N)$ rather than the subspace $\HH$ of symmetric wave functions. In fact, the results of this section are easily generalized to $N$ distinct particles with masses $m_1,\ldots, m_N$ and potentials $V_{ij}\in L^1(\R)$ depending on the pair $i<j$ of particles.

Let $V\in L^1(\R)$, let $g=\lim_{\eps\to 0}g_\eps$, and let $\alpha =g\int V(r)\,dr$. Let $q$ and $q_\eps$ denote the quadratic forms on $H^1(\R^N)$ defined by 
\begin{align*}
   q(\psi) &:= \int |\nabla\psi|^2 + C|\psi|^2\, dx - \alpha \sum_{i<j}\|\gamma_{ij} \psi\|^2\\
   q_\eps(\psi) &:= \int |\nabla\psi|^2 + C|\psi|^2\, dx - g_\eps\sum_{i<j} \int V_{\eps}(x_j-x_i)|\psi|^2\,dx,
\end{align*}
where $C\in \R$ and $\gamma_{ij}:H^1(\R^N)\to L^2(\R^{N-1})$ denotes the trace operator with
$$
     (\gamma_{ij}\psi)(x_1\ldots,x_{j-1},x_{j+1},\ldots x_{N}) = \psi(x_1,\ldots, x_{N})\big|_{x_j=x_i}
$$
for $\psi\in C_0^{\infty}(\R^N)$. It is well known that this operator extends to a bounded operator from $H^1(\R^N)$ to $H^{1/2}(\R^{N-1})$. By Lemma \ref{interaction-bound}, the quadratic forms $q$ and $q_{\eps}$ are bounded below and closed. More precisely, we may choose $C$ so large, that $q\geq 0$ and $q_{\eps}\geq 0$ for all $\eps>0$. 

We are going to prove weak and strong $\Gamma$-convergence $q_{\eps}\to q$ as $\eps\to 0$. To this end, it is convenient to extend all quadratic forms to $L^2(\R^N)$ by setting $q=q_\eps=+\infty$ in $L^2(\R^N)\backslash H^{1}(\R^N)$. The main ingredients of this section are the inequalities
\begin{eqnarray}
         \sup_{r} \int_{\R^{N-1}} |\psi(r,x)|^2\,dx &\leq &\|\partial_r\psi\| \| \psi\| \label{sobo1} \\
        \sup_{r\neq 0} \frac{1}{|r|^{1/2}}  \left|\int_{\R^{N-1}} |\psi(r,x)|^2 -  |\psi(0,x)|^2\,dx \right| & \leq & 2  \|\partial_r\psi\|^{3/2}  \|\psi\|^{1/2} \label{sobo2} 
\end{eqnarray}
for $\psi\in H^{1}(\R^N)$. They are obtained by applying to $\ph(r) = \int |\psi(r,x)|^2\,dx$ the elementary Sobolev inequalities
\begin{align*}
    |\ph(r)| &\leq \frac{1}{2} \int |\ph'(s)|\, ds\\
     |\ph(r)-\ph(0)| &=  \left|\int_0^{r} \ph'(s)\, ds\right| \leq |r|^{1/2} \|\ph'\|.
\end{align*}

\begin{lemma}\label{interaction-bound}
For all $\mu>0$ there exists $C_\mu\in \R$ such that for all $\psi\in H^1(\R^N)$ and $i, j\in \{1,\ldots,N\}$, $i<j$,
\begin{eqnarray}
 \|\gamma_{ij}\psi\| &\leq & \mu\|\nabla\psi\| + C_\mu \|\psi\|,    \label{delta-bound}\\     
 \left|\int V(x_j-x_i)|\psi|^2\, dx\right| &\leq & \|V\|_{L^1}\cdot\|\nabla\psi\| \|\psi\|.      \label{V-bound}
\end{eqnarray}
\end{lemma}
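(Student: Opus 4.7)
The plan is to reduce both inequalities to the Sobolev-type bound \eqref{sobo1} via a linear change of coordinates that makes $r := x_j - x_i$ one of the new variables. Concretely, I replace $x_j$ by $r$, keeping $x_i$ and all other $x_k$ with $k \notin \{i,j\}$; the Jacobian of this substitution equals one, so the transformed wave function $\widetilde{\psi}$ satisfies $\|\widetilde{\psi}\| = \|\psi\|$, and by the chain rule $\partial_r \widetilde{\psi} = \partial_{x_j}\psi$, giving $\|\partial_r \widetilde{\psi}\| \leq \|\nabla \psi\|$. In these coordinates, $\gamma_{ij}\psi$ is the $r=0$ slice $\widetilde{\psi}(0,\cdot\,)$, and $\int V(x_j - x_i)|\psi|^2\,dx$ becomes $\int V(r)|\widetilde{\psi}(r,x)|^2\,dr\,dx$.

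For \eqref{delta-bound}, this preparation makes \eqref{sobo1} directly applicable:
\[
   \|\gamma_{ij}\psi\|^2 = \int |\widetilde{\psi}(0,x)|^2\,dx \leq \sup_r \int |\widetilde{\psi}(r,x)|^2\,dx \leq \|\partial_r\widetilde{\psi}\|\,\|\widetilde{\psi}\| \leq \|\nabla\psi\|\,\|\psi\|.
\]
Combining this with Young's inequality in the form $\|\nabla\psi\|\,\|\psi\| \leq \mu^2 \|\nabla\psi\|^2 + (4\mu^2)^{-1}\|\psi\|^2$ and the elementary estimate $\sqrt{a+b} \leq \sqrt{a} + \sqrt{b}$ then produces \eqref{delta-bound} with $C_\mu = 1/(2\mu)$.

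For \eqref{V-bound}, Fubini's theorem and the same inequality \eqref{sobo1} yield
\[
   \left|\int V(x_j - x_i)|\psi|^2\,dx\right| = \left|\int V(r)\,\varphi(r)\,dr\right| \leq \|V\|_{L^1} \sup_r \varphi(r) \leq \|V\|_{L^1}\|\nabla\psi\|\,\|\psi\|,
\]
where $\varphi(r) := \int |\widetilde{\psi}(r,x)|^2\,dx$ and the last step again uses \eqref{sobo1} together with $\|\partial_r\widetilde{\psi}\| \leq \|\nabla\psi\|$. There is no genuine obstacle in this argument; the only point requiring care is the coordinate change, where one must verify that $\partial_r$ corresponds to a single partial derivative in the original variables, which is precisely what ensures that no multiplicative constant appears in front of $\|V\|_{L^1}$ in \eqref{V-bound}. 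The companion inequality \eqref{sobo2}, while unused here, is visibly tailored to controlling the difference $\varphi(r) - \varphi(0)$ and will presumably enter the subsequent $\Gamma$-convergence argument when one passes from the smeared interaction $\int V_\eps(x_j - x_i)|\psi|^2\,dx$ to the contact term $\alpha \|\gamma_{ij}\psi\|^2$.
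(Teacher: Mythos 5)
Your proof is correct. For \eqref{V-bound} you essentially reproduce the paper's argument: substitute coordinates so that $r=x_j-x_i$ is one coordinate direction, observe $\|\widetilde\psi\|=\|\psi\|$ and $\|\partial_r\widetilde\psi\|\leq\|\nabla\psi\|$, and apply \eqref{sobo1}. (You keep $x_i$ as a coordinate where the paper uses the centre of mass $R=(x_i+x_j)/2$; both substitutions are unimodular and the paper's choice also gives $\|\partial_r\tilde\psi\|\leq\tfrac{1}{\sqrt2}\|\nabla\psi\|\leq\|\nabla\psi\|$, so your closing remark that your particular choice is what avoids an extra constant is not quite accurate, though harmless.) For \eqref{delta-bound} your route is genuinely different: the paper cites boundedness of the trace map $\gamma_{ij}:H^1(\R^N)\to L^2(\R^{N-1})$ and recovers the $\mu\|\nabla\psi\|+C_\mu\|\psi\|$ form by a scaling argument, whereas you derive the sharper bound $\|\gamma_{ij}\psi\|^2\leq\|\nabla\psi\|\,\|\psi\|$ directly from the same inequality \eqref{sobo1} used for \eqref{V-bound} and then split it with Young's inequality and $\sqrt{a+b}\leq\sqrt a+\sqrt b$. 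Your variant is more self-contained, treats both estimates with the same elementary tool, and yields the explicit constant $C_\mu=1/(2\mu)$; the paper's is shorter on the page because it leans on the standard $H^1$ trace theorem as a black box. Either way the lemma is established.
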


\begin{proof}
Inequality \eqref{delta-bound} follows from $\|\gamma_{ij}\psi\| \leq C\|\psi\|_{H^1}$ by a 
scaling argument. To prove \eqref{V-bound} for $(i,j)=(1,2)$ we set $\tilde\psi(r,R,x) := \psi(R-\tfrac{r}{2},R+\tfrac{r}{2},x)$ and write
\begin{align*}
        \left|\int V(x_1-x_2)|\psi|^2\, dx\right| &\leq \int |V(r)| |\tilde\psi(r,R,x')|^2\, \d{(r,R,x')} \\
        &\leq \|V\|_{L^1} \sup_{r} \int |\tilde\psi(r,R,x')|^2\, \d{(R,x')}
\end{align*}
and we apply  \eqref{sobo1} to the $H^1$-function $\tilde\psi$. Then we use that $\|\tilde\psi\| = \|\psi\|$ and $\|\partial_r\tilde\psi\| \leq \|\nabla\psi\|$. 
\end{proof}

Lemma~\ref{interaction-bound} and $\|V_\eps\|_{L^1} = \|V\|_{L^1}$ imply the corollary:

\begin{kor}\label{uniform-bound}
Under the assumptions of this section, for every $a>0$ there exists $b>0$ such that for all $\eps>0$ and all $\psi\in H^1(\R^N)$,
$$
      (1-a)\|\psi\|^2_{H^1} - b\|\psi\|^2 \leq q_\eps(\psi) \leq (1+a)\|\psi\|^2_{H^1} + b\|\psi\|^2.
$$
\end{kor}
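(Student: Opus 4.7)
The statement is a uniform-in-$\eps$ form bound, and the plan is to reduce everything to Lemma~\ref{interaction-bound}, use the scale invariance of the $L^1$-norm, and then apply Young's inequality to absorb the gradient factor.

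First, observe that $\|V_\eps\|_{L^1} = \|V\|_{L^1}$ by the change of variables $r\mapsto \eps r$ in the definition \eqref{Vscale}, so the bound \eqref{V-bound} applied with $V$ replaced by $V_\eps$ yields, for each pair $i<j$ and each $\psi \in H^1(\R^N)$,
\begin{equation*}
    \left|\int V_\eps(x_j-x_i) |\psi|^2\, dx \right| \leq \|V\|_{L^1} \cdot \|\nabla\psi\| \|\psi\|,
\end{equation*}
with a right-hand side independent of $\eps$. Since $g_\eps\to g$, we have $M:= \sup_{\eps>0}|g_\eps|<\infty$, so summing over the $\binom{N}{2}$ pairs gives
\begin{equation*}
   \left|g_\eps \sum_{i<j}\int V_\eps(x_j-x_i)|\psi|^2\, dx\right| \leq M\binom{N}{2}\|V\|_{L^1}\cdot \|\nabla\psi\|\|\psi\|.
\end{equation*}

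Next, I would apply Young's inequality $2xy\leq \mu^{-1}x^2 + \mu y^2$: for any $\mu>0$,
\begin{equation*}
   M\binom{N}{2}\|V\|_{L^1}\cdot \|\nabla\psi\|\|\psi\| \leq \mu \|\nabla\psi\|^2 + C_\mu' \|\psi\|^2,
\end{equation*}
where $C_\mu' := (M\binom{N}{2}\|V\|_{L^1})^2/(4\mu)$. Inserting this into the definition of $q_\eps$ yields
\begin{equation*}
   (1-\mu)\|\nabla\psi\|^2 + (C-C_\mu')\|\psi\|^2 \leq q_\eps(\psi) \leq (1+\mu)\|\nabla\psi\|^2 + (C+C_\mu')\|\psi\|^2
\end{equation*}
uniformly in $\eps>0$.

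Given $a>0$, choose $\mu=a$; then using $\|\nabla\psi\|^2 = \|\psi\|_{H^1}^2 - \|\psi\|^2$ we rewrite the outer bounds as
\begin{equation*}
   (1-a)\|\psi\|_{H^1}^2 - b\|\psi\|^2 \leq q_\eps(\psi)\leq (1+a)\|\psi\|_{H^1}^2 + b\|\psi\|^2
\end{equation*}
with $b:= \max\bigl(|C|+C_a' + 1+ a,\, 0\bigr)$. Since the estimates above hold for every $\psi\in H^1(\R^N)$ and are uniform in $\eps>0$, the corollary follows. There is no serious obstacle here: the only point to keep in mind is that $\|V_\eps\|_{L^1}$ does \emph{not} blow up with $\eps\to 0$ (in contrast to $\|V_\eps\|_{L^2}$), which is what allows the pointwise form bound to pass through the rescaling without loss.
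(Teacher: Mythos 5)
Your proof is correct and follows the paper's intended route: the paper simply states that Lemma~\ref{interaction-bound} together with $\|V_\eps\|_{L^1}=\|V\|_{L^1}$ implies the corollary, and you have supplied exactly the missing details (apply \eqref{V-bound} with $V$ replaced by $V_\eps$, sum over the $\binom{N}{2}$ pairs, then use Young's inequality to absorb $\|\nabla\psi\|\|\psi\|$). The one cosmetic glitch --- $\lim_{\eps\to 0}g_\eps=g$ gives boundedness of $g_\eps$ only for $\eps$ near $0$, not literally $\sup_{\eps>0}|g_\eps|<\infty$ --- is inherited from the paper's own formulation ``for all $\eps>0$'' and is harmless where the corollary is actually used.
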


\begin{theorem}\label{Gamma-lim}
If $V\in L^1\cap L^2(\R)$, $\int |V(r)| |r|^{1/2}\, dr<\infty$, and $\alpha =g\int V(r)\,dr$, then $q_\eps \to q$ in the sense of weak and strong $\Gamma$-convergence.
\end{theorem}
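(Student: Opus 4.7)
The plan is to verify the two Mosco conditions from which both weak and strong $\Gamma$-convergence follow: (i) the weak liminf inequality $q(\psi)\leq \liminf_{\eps\to 0}q_\eps(\psi_\eps)$ whenever $\psi_\eps\weak \psi$ in $L^2(\R^N)$, and (ii) the existence of a strongly convergent recovery sequence $\psi_\eps\to \psi$ with $q_\eps(\psi_\eps)\to q(\psi)$. Since strong convergence implies weak convergence, these together give both variants stated in the theorem.

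For (ii) I would take the constant sequence $\psi_\eps\equiv \psi$. If $\psi\notin H^1(\R^N)$ both sides are $+\infty$, so assume $\psi\in H^1$. For a fixed pair $(i,j)$, pass to the relative/center-of-mass coordinates used in Lemma~\ref{interaction-bound} and rescale $r\to \eps r$ to obtain
$$
  \int V_\eps(x_j-x_i)|\psi|^2\,dx \;=\; \int V(r)\, f_{ij}(\eps r)\,dr,\qquad f_{ij}(r):=\int|\tilde\psi(r,\cdot)|^2.
$$
Applying \eqref{sobo2} to $f_{ij}$ and using $\int|V(r)||r|^{1/2}\,dr<\infty$, the right-hand side equals $\bigl(\int V\bigr)\|\gamma_{ij}\psi\|^2+O(\eps^{1/2})$. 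Summing over $(i,j)$ and combining with $g_\eps\to g$ gives $q_\eps(\psi)\to q(\psi)$.

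For (i), assume $\liminf q_\eps(\psi_\eps)<\infty$ (otherwise trivial). By Corollary~\ref{uniform-bound} with $a<1$, along a subsequence $\psi_\eps$ is bounded in $H^1(\R^N)$, so $\psi_\eps\weak \psi$ in $H^1$. Decompose $\psi_\eps=\psi+\phi_\eps$ with $\phi_\eps\weak 0$ in $H^1$ and expand the quadratic form,
$$
  q_\eps(\psi_\eps)\;=\;q_\eps(\psi)+2\,\mathrm{Re}\,q_\eps(\psi,\phi_\eps)+q_\eps(\phi_\eps).
$$
The first term tends to $q(\psi)$ by (ii); the third is $\geq 0$ by the choice of $C$ (via Corollary~\ref{uniform-bound} applied to $\phi_\eps$). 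In the cross term the kinetic and mass contributions vanish by weak $H^1$ convergence. For the interaction cross term $\int V_\eps(x_j-x_i)\overline{\psi}\phi_\eps\,dx$, I would redo the change of variables from (ii) and combine it with the Cauchy--Schwarz bound $\|\psi(r,\cdot)-\psi(0,\cdot)\|_{L^2(\R^{N-1})}\leq |r|^{1/2}\|\partial_r\psi\|$ (and likewise for $\phi_\eps$) to reduce it to $\bigl(\int V\bigr)\langle\gamma_{ij}\psi,\gamma_{ij}\phi_\eps\rangle$ plus an $O(\eps^{1/2})$ error with constants uniform in $\eps$. The principal term vanishes since $\gamma_{ij}\phi_\eps\weak 0$ in $L^2(\R^{N-1})$ by weak-to-weak continuity of the trace $H^1\to L^2$. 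Hence $\liminf q_\eps(\psi_\eps)\geq q(\psi)$.

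The main obstacle is the interaction cross term. A direct weak lower-semicontinuity argument applied to $q_\eps(\psi_\eps)$ as a whole would fail, because the attractive $\delta$-like interaction can in principle reward mass concentrating onto the collision diagonals, allowing $\|\gamma_{ij}\psi_\eps\|^2$ to exceed $\|\gamma_{ij}\psi\|^2$ in the limit. The resolution is the IMS-type splitting $\psi_\eps=\psi+\phi_\eps$: the potentially dangerous excess concentration is pushed into $q_\eps(\phi_\eps)$, which is nonnegative by the choice of $C$, while the cross term genuinely vanishes because one of its two factors is fixed and the other tends weakly to zero.
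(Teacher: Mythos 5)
Your proof is correct, and your argument for the liminf inequality takes a genuinely different route from the paper's.

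For the recovery sequence (your part (ii)), both you and the paper take the constant sequence and use the estimate \eqref{sobo2} together with $\int|V(r)||r|^{1/2}\,dr<\infty$; this is essentially identical.

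For the liminf inequality (your part (i)) the two arguments diverge. The paper first invokes weak lower semicontinuity of the \emph{fixed} form $q$ on the weakly $H^1$-convergent subsequence, giving $q(\psi)\leq\liminf q(\psi_{\eps_n})$, and then passes from $q(\psi_{\eps_n})$ to $q_{\eps_n}(\psi_{\eps_n})$ via the uniform comparison estimate $|q_\eps(\phi)-q(\phi)|=o(1)\|\phi\|^2_{H^1}$, valid uniformly over $\phi\in H^1$. All the analysis on the collision planes is packaged into this single uniform $o(1)$ bound, and no cross terms ever appear. You instead split $\psi_\eps=\psi+\phi_\eps$ with $\phi_\eps\rightharpoonup 0$ in $H^1$, expand $q_\eps$ bilinearly, discard $q_\eps(\phi_\eps)\geq 0$ by the choice of $C$, and show the cross term vanishes: the kinetic and mass cross terms go to zero by weak $H^1$-convergence, and the interaction cross term reduces, via the change of variables and the linear trace estimate $\|\psi(r,\cdot)-\psi(0,\cdot)\|\leq|r|^{1/2}\|\partial_r\psi\|$, to $(\int V)\langle\gamma_{ij}\psi,\gamma_{ij}\phi_\eps\rangle+O(\eps^{1/2})$ with the error controlled uniformly by the $H^1$-bound on $\phi_\eps$; the principal term then vanishes because $\gamma_{ij}\phi_\eps\rightharpoonup 0$ in $L^2$. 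Both routes use the same moment hypothesis on $V$ and the same Sobolev trace mechanism, but the paper's version is slightly leaner because the weak lower semicontinuity of $q$ subsumes exactly the algebra you perform by hand in the decomposition; your version, on the other hand, is more self-contained and makes transparent where the positivity of $q_\eps(\phi_\eps)$ (i.e.\ the choice of the constant $C$) enters, which the paper handles more implicitly through the WLSC of the closed nonnegative form $q$.
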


\begin{proof}
Due to the fact that all form domains are equal, it suffices to show that, see \cite{ROV2014, DalMaso}, for all $\psi\in H^1(\R^N)$, 
\begin{equation}
\label{Gamma1}
              q(\psi) = \lim_{\eps\to 0} q_{\eps}(\psi)
\end{equation}
and for all $\psi_\eps,\psi\in L^2(\R)$, 
\begin{equation}\label{Gamma2}
     \psi_{\eps} \rightharpoonup \psi\quad \Rightarrow\quad  q(\psi) \leq \liminf_{\eps\to 0} q_{\eps}(\psi_\eps).
\end{equation}    
We begin with the proof of \eqref{Gamma1}. If $\psi\in C_0^{\infty}(\R^N)$, then it is a fairly straightforward application of Lebesgue dominated convergence to show that \eqref{Gamma1} holds. Now let $\psi\in H^1(\R^N)$ and let $(\psi_n)$ be a sequence in $C_0^{\infty}(\R^N)$ with $\psi_n\to \psi$ in $H^1$. Then, on the one hand, 
\begin{equation}\label{Gamma3}
       |q(\psi) - q(\psi_n)| \to 0\qquad (n\to\infty)        
\end{equation}
because $q$ is continuous w.r.t its form norm, which is equivalent to the norm of $H^1(\R^N)$ by Corollary~\ref{uniform-bound}. On the other hand, 
\begin{equation}\label{Gamma4}
     |q_\eps(\psi) - q_\eps(\psi_n)| \to 0\qquad (n\to\infty)
\end{equation}
uniformly in $\eps>0$. This easily follows from Corollary~\ref{uniform-bound}, which means that the interaction is $H^1$-bounded uniformly in $\eps$. Due to \eqref{Gamma3} and \eqref{Gamma4} the validity of \eqref{Gamma1} extends from $C_0^{\infty}(\R^N)$ to $H^1(\R^N)$. 

Now we prove \eqref{Gamma2}. Let $\psi,\psi_\eps\in L^2(\R^N)$ and suppose that $\psi_{\eps} \rightharpoonup \psi$ in $L^2(\R^N)$. To prove  \eqref{Gamma2} we may assume that $\liminf_{\eps\to 0} q_{\eps}(\psi_\eps)<\infty$. We choose a sequence $\eps_n\to 0$ so that $\liminf_{\eps\to 0} q_{\eps}(\psi_\eps) = \lim_{n\to \infty} q_{\eps_n}(\psi_{\eps_n})$. Then, by Corollary~\ref{uniform-bound} it follows that 
$\psi_{\eps_n}$ is bounded in $H^1$ uniformly in $n$. Therefore, after passing to a subsequence, we may assume that $\psi_{\eps_n} \rightharpoonup \tilde\psi$ in $H^1$. Since $\psi_{\eps_n} \rightharpoonup \psi$ in $L^2(\R^N)$ it follows that $\psi=\tilde\psi\in H^1(\R^N)$.

By the weak lower semicontinuity of positive quadratic forms we know that 
$$
        q(\psi) \leq \liminf_{n\to \infty} q(\psi_{\eps_n}).
$$
On the right hand side we may replace $q(\psi_{\eps_n})$ by $q_{\eps_n}(\psi_{\eps_n})$ if we can show that
\begin{equation}\label{qqeps}
     |q_\eps(\psi) - q(\psi)| = o(1)\cdot \|\psi\|^2_{H^1} \qquad (\eps\to 0)
\end{equation}
uniformly $\psi\in H^1(\R^N)$. To prove this, we begin with 
\begin{equation}\label{compare-qs}
  |q_\eps(\psi) - q(\psi)| \leq \sum_{i<j} \left|\int g_\eps V_{\eps}(x_i-x_j)|\psi |^2\,dx - \alpha_\eps\|\gamma_{ij} \psi\|^2\right| + o(1)\cdot \|\psi\|^2_{H^1}, 
\end{equation}
where $\alpha_\eps:= g_\eps\int V(r)\,dr$, and $\alpha_\eps\to \alpha$ has been used.
Applying \eqref{sobo2} to $\tilde\psi(r,R,x) := \psi(R-\tfrac{r}{2},R+\tfrac{r}{2},x)$,
we see that the contribution of $(i,j)=(1,2)$ to \eqref{compare-qs} has the bound
\begin{align*}
  &  \left| g_\eps\int V_{\eps}(r) |\psi(R-\tfrac{r}{2},R+\tfrac{r}{2},x)|^2\, drdRdx - \alpha_\eps\int |\psi(R,R,x)|^2\,dRdx \right| \\
&= \left| g_\eps \int dr V(r) \int \left(  |\psi(R-\tfrac{\eps r}{2},R+\tfrac{\eps r}{2},x)|^2 - |\psi(R,R,x)|^2 \right)dRdx \right| \\
&\leq C |g_\eps| \int |V(r)| |\eps r|^{1/2}\, dr \cdot  \|\psi\|^2_{H^1}.
\end{align*}
Here, we used that $\|\tilde\psi\|_{H^1} \leq C \|\psi\|_{H^1} $. Since all summands of \eqref{compare-qs} can be estimated in this way, \eqref{qqeps} is true and  the proof is complete.
\end{proof}

In view of  Theorem~13.6 in \cite{DalMaso}, Theorem~\ref{Gamma-lim} has the following corollary:

\begin{kor}\label{strongRlim}
If $V\in L^1(\R)$ with $\int |V(r)| |r|^{1/2}\, dr<\infty$ and $\lim_{\eps\to 0}g_\eps=g$, then $H_\eps \to H$ in the strong resolvent sense.
\end{kor}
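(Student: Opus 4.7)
The plan is a direct application of Theorem 13.6 in \cite{DalMaso}, which asserts that for a sequence of closed, uniformly semibounded quadratic forms on a Hilbert space, weak and strong $\Gamma$-convergence is equivalent to strong resolvent convergence of the associated self-adjoint operators. The real work has already been done in Theorem~\ref{Gamma-lim}.

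First, I would extend the forms $q_\eps$ and $q$ from $H^1(\R^N)$ to $L^2(\R^N)$ by declaring them to be $+\infty$ outside $H^1(\R^N)$. By Corollary~\ref{uniform-bound}, after fixing $C$ large enough \emph{uniformly in} $\eps$, the forms become non-negative and closed on the common ambient space $L^2(\R^N)$, and the associated self-adjoint operators are precisely $H_\eps + C$ and $H + C$. This places us in the abstract setting required by Theorem 13.6 of \cite{DalMaso}.

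Next, I would invoke Theorem~\ref{Gamma-lim}, which provides exactly the weak and strong $\Gamma$-convergence $q_\eps \to q$ needed. The cited abstract theorem then yields strong resolvent convergence $(H_\eps + C)^{-1} \to (H+C)^{-1}$ in $L^2(\R^N)$, which is equivalent to $H_\eps \to H$ in the strong resolvent sense. Finally, since $H_0$ and every two-body interaction (point or smeared) commute with permutations of particles, all operators in question leave $\HH \subset L^2(\R^N)$ invariant, so strong resolvent convergence on $L^2(\R^N)$ restricts to strong resolvent convergence on the symmetric subspace $\HH$.

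There is no essential obstacle: the corollary is a citation of an abstract equivalence. The only minor points to verify are that the uniform lower bound of Corollary~\ref{uniform-bound} suffices to apply Theorem 13.6 of \cite{DalMaso} (it does, after a common shift by $+C$), and that the $L^2(\R^N)$ result descends to $\HH$ (which is immediate from permutation symmetry of $H_0$ and the interaction).
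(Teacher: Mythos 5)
Your argument is correct and is essentially the same as the paper's, which also derives Corollary~\ref{strongRlim} from Theorem~\ref{Gamma-lim} via Theorem~13.6 of~\cite{DalMaso}, working in $L^2(\R^N)$ with forms extended by $+\infty$ off $H^1(\R^N)$ and made non-negative and closed via the uniform bound of Corollary~\ref{uniform-bound}. The one thing you make explicit that the paper leaves implicit is the passage from strong resolvent convergence on $L^2(\R^N)$ to the symmetric subspace $\HH$; this is indeed valid since all resolvents commute with permutations and hence leave $\HH$ invariant, so noting it is a useful clarification rather than a deviation.
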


\bigskip\noindent
\textbf{Acknowledgement.} The second author thanks Jacob Schach M{\o}ller for stimulating discussions and for the hospitality at Aarhus University.  Our work was supported by the \emph{Deutsche Forschungsgemeinschaft (DFG)} through the Research Training Group 1838: \emph{Spectral Theory and Dynamics of Quantum Systems}.


\begin{thebibliography}{10}

\bibitem{SolvableModels}
S.~Albeverio, F.~Gesztesy, R.~H{\o}egh-Krohn, and H.~Holden.
\newblock {\em Solvable models in quantum mechanics}.
\newblock AMS Chelsea Publishing, Providence, RI, second edition, 2005.
\newblock With an appendix by Pavel Exner.

\bibitem{AlbKur}
S.~Albeverio and P.~Kurasov.
\newblock {\em Singular perturbations of differential operators}, volume 271 of
  {\em London Mathematical Society Lecture Note Series}.
\newblock Cambridge University Press, Cambridge, 2000.
\newblock Solvable Schr\"{o}dinger type operators.

\bibitem{Basti2018}
Giulia Basti, Claudio Cacciapuoti, Domenico Finco, and Alessandro Teta.
\newblock The three-body problem in dimension one: from short-range to contact
  interactions.
\newblock {\em J. Math. Phys.}, 59(7):072104, 18, 2018.

\bibitem{BastiTeta}
Giulia Basti and Alessandro Teta.
\newblock On the quantum mechanical three-body problem with zero-range
  interactions.
\newblock In {\em Functional analysis and operator theory for quantum physics},
  EMS Ser. Congr. Rep., pages 71--93. Eur. Math. Soc., Z\"urich, 2017.

\bibitem{ROV2014}
Roxana Bedoya, C\'{e}sar~R. de~Oliveira, and Alessandra~A. Verri.
\newblock Complex {$\Gamma$}-convergence and magnetic {D}irichlet {L}aplacian
  in bounded thin tubes.
\newblock {\em J. Spectr. Theory}, 4(3):621--642, 2014.

\bibitem{BraHam}
Eric Braaten and H-W Hammer.
\newblock Universality in few-body systems with large scattering length.
\newblock {\em Physics Reports}, 428(5-6):259--390, 2006.

\bibitem{BHS2013}
Gerhard Br\"{a}unlich, Christian Hainzl, and Robert Seiringer.
\newblock On contact interactions as limits of short-range potentials.
\newblock {\em Methods Funct. Anal. Topology}, 19(4):364--375, 2013.

\bibitem{CFP2018}
Claudio Cacciapuoti, Davide Fermi, and Andrea Posilicano.
\newblock On inverses of {K}rein's {Q}-functions.
\newblock {\em Rend. Mat. Appl.(7)}, 39:229--240, 2018.

\bibitem{DalMaso}
Gianni Dal~Maso.
\newblock {\em An introduction to {$\Gamma$}-convergence}, volume~8 of {\em
  Progress in Nonlinear Differential Equations and their Applications}.
\newblock Birkh\"{a}user Boston, Inc., Boston, MA, 1993.

\bibitem{Davies1995}
E.~B. Davies.
\newblock {\em Spectral theory and differential operators}, volume~42 of {\em
  Cambridge Studies in Advanced Mathematics}.
\newblock Cambridge University Press, Cambridge, 1995.

\bibitem{Figari}
G.~F. Dell'Antonio, R.~Figari, and A.~Teta.
\newblock Hamiltonians for systems of {$N$} particles interacting through point
  interactions.
\newblock {\em Ann. Inst. H. Poincar\'e Phys. Th\'eor.}, 60(3):253--290, 1994.

\bibitem{DR}
J.~Dimock and S.~G. Rajeev.
\newblock Multi-particle {S}chr\"odinger operators with point interactions in
  the plane.
\newblock {\em J. Phys. A}, 37(39):9157--9173, 2004.

\bibitem{GriesemerLinden2}
M.~Griesemer and U.~Linden.
\newblock Spectral {T}heory of the {F}ermi {P}olaron.
\newblock {\em Ann. Henri Poincar\'{e}}, 20(6):1931--1967, 2019.

\bibitem{LiLi}
E.~H. Lieb and W.~Liniger.
\newblock Exact analysis of an interacting bose gas i. the general solution and
  the ground state.
\newblock {\em Phys. Rev. 130}, 39:1605--1616, 1963.

\bibitem{MinlosFaddeev2}
R.~A. Minlos and L.~D. Faddeev.
\newblock Comment on the problem of three particles with point interactions.
\newblock {\em Soviet Physics JETP}, 14:1315--1316, 1962.

\bibitem{MinlosFaddeev1}
R.~A. Minlos and L.~D. Faddeev.
\newblock On the point interaction for a three-particle system in quantum
  mechanics.
\newblock {\em Soviet Physics Dokl.}, 6:1072--1074, 1962.

\bibitem{Pazy}
A.~Pazy.
\newblock {\em Semigroups of linear operators and applications to partial
  differential equations}, volume~44 of {\em Applied Mathematical Sciences}.
\newblock Springer-Verlag, New York, 1983.

\bibitem{RS1}
M.~Reed and B.~Simon.
\newblock {\em Methods of modern mathematical physics. {I}}.
\newblock Academic Press, Inc. [Harcourt Brace Jovanovich, Publishers], New
  York, second edition, 1980.
\newblock Functional analysis.

\bibitem{SeiYin}
Robert Seiringer and Jun Yin.
\newblock The {L}ieb-{L}iniger model as a limit of dilute bosons in three
  dimensions.
\newblock {\em Comm. Math. Phys.}, 284(2):459--479, 2008.

\bibitem{Thomas1935}
L.~H. Thomas.
\newblock The interaction between a neutron and a proton and the structure of
  ${\mathrm{h}}^{3}$.
\newblock {\em Phys. Rev.}, 47:903--909, Jun 1935.

\end{thebibliography}

\end{document}